\newtheorem{claim}{Claim}[section]
\newtheorem{lemma}[claim]{Lemma}
\newtheorem{theorem}{Theorem}
\newtheorem{corollary}[claim]{Corollary}
\theoremstyle{definition}
\newtheorem{remark}[claim]{Remark}
\def\sign{{\rm sign}}
\def\est{\mbox{\tiny\rm  est}}
\def\tY{\widetilde{Y}}
\def\tG{\widetilde{G}}
\def\tZ{\widetilde{Z}}
\def\oxi{\overline{\xi}}
\def\dv{{\partial v}}
\def\di{{\partial i}}
\def\deg{{\rm deg}}
\def\Var{{\rm Var}}
\def\bBp{{\boldsymbol B}^{\mbox{\tiny new}}}
\def\dBet{{\rm Beta}}
\def\ind{{\mathbb I}}
\def\<{\langle}
\def\>{\rangle}
\def\eps{\varepsilon}
\def\ve{\varepsilon}
\def\bphi{{\boldsymbol \varphi}}
\def\de{{\rm d}}
\def\hx{\widehat{x}}
\def\bup{{\boldsymbol u}^{\perp}}
\def\uper{{u}^{\perp}}
\def\baR{\overline{R}}
\def\projp{{\boldsymbol P}^{\perp}}
\def\bg{{\boldsymbol g}}
\def\bx{{\boldsymbol x}}
\def\hbx{{\widehat{\boldsymbol x}}}
\def\bxz{{\boldsymbol x_0}}
\def\cG{{\mathcal G}}
\def\bD{{\boldsymbol D}}
\def\bE{{\boldsymbol E}}
\def\bM{{\boldsymbol M}}
\def\bX{{\boldsymbol X}}
\def\bP{{\boldsymbol P}}
\def\bA{{\boldsymbol A}}
\def\bAc{{\boldsymbol A}^{\mbox{\tiny cen}}}
\def\btB{{\boldsymbol{\tilde{B}}}}
\def\bB{{\boldsymbol B}}
\def\bZ{{\boldsymbol Z}}
\def\bY{{\boldsymbol Y}}
\def\bW{{\boldsymbol W}}
\def\bU{{\boldsymbol U}}
\def\bV{{\boldsymbol V}}
\def\bF{{\boldsymbol F}}
\def\bR{{\boldsymbol R}}
\def\bH{{\boldsymbol H}}
\def\Ga{{\boldsymbol J}}
\def\bx{{\boldsymbol x}}
\def\bw{{\boldsymbol w}}
\def\bv{{\boldsymbol v}}
\def\bu{{\boldsymbol u}}
\def\bfe{{\boldsymbol e}}
\def\id{{\rm I}}
\def\sT{{\sf T}}
\def\GOE{{\rm GOE}}
\def\Palg{{\mathbb P}_{\mbox{\tiny\rm alg}}}
\def\Ealg{{\mathbb E}_{\mbox{\tiny\rm alg}}}
\def\cE{{\mathcal{E}}}
\def\bPsi{{\boldsymbol{\Psi}}}
\def\bsigma{{\boldsymbol{\sigma}}}
\def\btau{{\boldsymbol{\tau}}}
\def\bone{{\boldsymbol{1}}}
\def\qprob{{\mathbb{Q}}}
\def\tprob{\widetilde{\mathbb{P}}}
\def\prob{{\mathbb{P}}}
\def\E{{\mathbb{E}}}
\def\reals{{\mathbb{R}}}
\def\naturals{{\mathbb{N}}}
\def\ed{\stackrel{{\rm d}}{=}}
\def\normal{{\sf N}}
\def\ER{Erd\H{o}s-R\'enyi } 
\def\PSD{{\sf PSD}}
\def\SDP{{\sf SDP}}
\def\OPT{{\sf OPT}}
\def\Par{{\sf P}}
\def\xsdp{\boldsymbol{\hat{x}}^{\mbox{\tiny{SDP}}}}
\def\Greg{{\sf G}^{\mbox{\tiny {\sf reg}}}}
\def\sG{{\sf G}}
\def\Hnull{{\sf Hypothesis 0}}
\def\bbS{{\mathbb S}}
\def\rank{{\rm rank}}
\def\Tr{{\sf Tr}}
\def\pai{{\partial_i}}
\title {Semidefinite
  Programs on Sparse Random Graphs\\
and their Application to  Community Detection}
\author{
Andrea~Montanari\footnote{Department of Electrical
    Engineering and Department of Statistics, Stanford University,
    {\sf montanari@stanford.edu}}\;\;\;\;
and \;\;\;\;
Subhabrata~Sen\footnote{Department of Statistics, Stanford
  University, {\sf ssen90@stanford.edu}}}
\date{December 23, 2015}                                           
\begin{document}

\maketitle

\begin{abstract}
Denote by $\bA$ the adjacency matrix of an \ER graph with bounded
average degree.  We consider the problem of maximizing $\<\bA-\E\{\bA\},\bX\>$ over the set of
positive semidefinite matrices $\bX$ with diagonal entries $X_{ii}=1$.  We prove that for large (bounded) average degree $d$, the
value of this semidefinite program (SDP) is --with high probability-- $2n\sqrt{d} + n\, o(\sqrt{d})+o(n)$. For a random regular graph of
degree $d$, we prove that the SDP value is $2n\sqrt{d-1}+o(n)$,
matching a spectral upper bound. Informally, \ER graphs appear to behave similarly to random regular graphs for semidefinite programming.

We next consider the sparse, two-groups, symmetric community detection
problem (also known as planted partition). We establish that SDP
achieves the information-theoretically optimal detection threshold for
large (bounded) degree.
Namely, under this
model, the vertex set is partitioned into subsets of size $n/2$, with edge probability $a/n$ (within group) and $b/n$ (across). We prove
that SDP detects the partition with high probability provided
$(a-b)^2/(4d)> 1+o_{d}(1)$, with $d= (a+b)/2$. By comparison, the
information theoretic threshold for detecting the hidden partition is
$(a-b)^2/(4d)> 1$: SDP is nearly optimal for large bounded average degree.

Our proof is based on tools from different research areas:
$(i)$ A new `higher-rank' Grothendieck inequality for
symmetric matrices; $(ii)$ An interpolation method inspired from
statistical physics; $(iii)$ An analysis of the eigenvectors of
deformed Gaussian random matrices.
\end{abstract}

\thispagestyle{empty}

\newpage

\pagenumbering{arabic}

\section{Introduction and main results}

\subsection{Background}

Let $G = (V,E)$ be a random graph with vertex set $V=[n]$, and let $\bA_G\in
\{0,1\}^{n\times n}$ denote its adjacency matrix.
Spectral algorithms have proven extremely successful in analyzing 
the structure of such graphs under various probabilistic
models. Interesting tasks include  finding clusters, communities,
latent representations, collaborative filtering and so on \cite{alon1998finding,mcsherry2001spectral,ng2002spectral,coja2006spectral}. The underlying mathematical 
justification for these applications can be informally summarized as
follows (more precise statements are given below): 

\vspace{0.25cm}

\emph{If $G$ is dense enough, then $\bA_G-\E\{\bA_{G}\}$ is much
  smaller, in operator norm,  than $\E\{\bA_{G}\}$.}

\vspace{0.25cm}

(Recall that the operator norm of   a symmetric matrix $\bM$ is $\|\bM\|_{op}
    =\max(\xi_1(\bM),-\xi_n(\bM))$, with $\xi_{\ell}(\bM)$ the
    $\ell$-th largest eigenvalue of $\bM$.)

Random regular graphs provide the simplest model on which this intuition can be made precise  
Denoting by $\Greg(n,d)$ the uniform distribution over graphs with $n$
vertices and uniform degree $d$, we have, for $G\sim\Greg(n,d)$,
$\E\bA_G \approx (d/n)\bone\bone^{\sT}$, whence $\|\E\bA_G\|_2\approx
d$. On the other hand, the fact that random regular graphs are `almost
Ramanujan' \cite{Friedman} implies $\|\bA_G-\E\bA_G\|_{op}\le
2\sqrt{d-1}+o_n(1)\ll d$. Roughly speaking, the random part
$\bA_G-\E\bA_G$ is smaller than the expectation by a factor
$2/\sqrt{d}$.

The situation is not as clean-cut for random graph with irregular
degrees. To be definite, consider the \ER random graph distribution
$\sG(n,d/n)$ whereby each edge is present independently with
probability $d/n$ (and hence the average degree is roughly $d$). 
Also in this case $\E\bA_G \approx (d/n)\bone\bone^{\sT}$, whence $\|\E\bA_G\|_{op}\approx
d$. However, the largest eigenvalue of $\bA_G-\E A_G$ is of the order
of the square root of the maximum degree, namely $\sqrt{\log
  n/(\log\log n)}$ \cite{krivelevich2003largest}. Summarizing
\begin{align}
\|A_G-\E A_G\|_{op}= 
\begin{cases}
2\sqrt{d-1}\, (1+o(1)) & \mbox{ if $G\sim\Greg(n,d)$},\\
\sqrt{\log n/(\log\log n)} (1+o(1)) & \mbox{ if $G\sim\sG(n,d/n)$}.\\
\end{cases}\label{eq:MaxEigenvalue}
\end{align}
Further, for $G\sim\sG(n,d/n)$, the leading eigenvectors of
$\bA_G-\E\bA_G$ are concentrated near to high-degree vertices, and
carry virtually no information about the global structure of $G$. In
particular, they cannot be used for clustering.

Far from being a mathematical curiosity, this difference has far-reaching consequences: spectral algorithms are known fail, or to be
vastly suboptimal for random graphs with bounded average degree 
\cite{feige2005spectral,coja2010graph,keshavan2010matrix,decelle2011asymptotic,krzakala2013spectral}.
The community detection problem (a.k.a. `planted partition') is an example of this failure that
attracted significant attention recently. Let $\sG(n,a/n,b/n)$ be the
distribution over graph with $n$ vertices defined as follows. The
vertex set is partitioned uniformly at random into two subsets $S_1$, $S_2$ with
$|S_i|=n/2$. Conditional on this partition, edges are independent with
\begin{align}
\prob\big((i,j)\in E\big|S_1, S_2\big) = \begin{cases}
a/n & \mbox{ if $\{i,j\}\subseteq S_1$ or $\{i,j\}\subseteq S_2$,}\\
b/n & \mbox{ if $i\in S_1, j\in S_2$ or 
$i\in S_2, j\in S_1$.}
\end{cases}\label{eq:HiddenPart}
\end{align}
Given a single realization of such a graph, we would like to detect,
and identify the partition. Early work on this problem showed that simple spectral methods are
successful when $a=a(n)$, $b=b(n)\to\infty$ sufficiently fast. However
Eq.~(\ref{eq:MaxEigenvalue}) --and its analogue for the model
$\sG(n,a/n,b/n)$-- implies that this approach fails unless $(a-b)^2\ge
C \log n/\log\log n$. (Throughout $C$ indicates numerical constants.)

Several ideas have been developed to overcome this difficulty.
The simplest one is to simply remove from $G$ all vertices whose
degree is --say-- more than ten times larger than the average degree
$d$. Feige and Ofek \cite{feige2005spectral} showed that, if this
procedure is applied to $G\sim\sG(n,d/n)$, it yields a new graph  $G'$
that has roughly the same number of vertices as $G$, but
$\|\bA_G-\E\{\bA_G\}\|_{op}\le C\sqrt{d}$, with high probability.
The same trimming procedure was successfully applied in
\cite{keshavan2010matrix} to matrix completion, and in
\cite{coja2010graph,chin2015stochastic} to community detection. 
This approach has however several drawbacks. First, the specific
threshold for trimming is somewhat arbitrary and relies on the idea
that degrees should  concentrate around their average: this is not
necessarily true in actual applications. 
Second, it discards a subset of the data. Finally,  it is only optimal `up to
constants.'

A new set of spectral methods to overcome the same problem were
proposed and analyzed within the community detection problem
\cite{decelle2011asymptotic,krzakala2013spectral,mossel2013proof,massoulie2014community,bordenave2015non,le2015concentration}.
These methods construct a new matrix that replaces the adjacency matrix
$\bA_G$, and then compute its leading eigenvalues/eigenvectors. 
We refer to Section \ref{sec:Related} for further discussion.
These approaches are extremely interesting and mathematically
sophisticated. In particular, some of them have been proved to have an optimal
detection threshold under the model $\sG(n,a/n,b/n)$ \cite{mossel2013proof,massoulie2014community,bordenave2015non}. Unfortunately
 they rely on delicate properties of the underlying
probabilistic model. For instance,  they are not
robust to an adversarial addition of $o(n)$ edges (see Section \ref{sec:Generalization}). 

\subsection{Main results (I): \ER and regular random graphs}

Semidefinite programming (SDP) relaxations provide a different
approach towards overcoming the limitations of spectral algorithms.
We denote the cone of $n\times n$ symmetric positive semidefinite
matrice by $\PSD(n) \equiv\{\bX\in\reals^{n\times n}:\; \bX\succeq
0\}$. The convex set of positive-semidefinite matrices with diagonal
entries equal to one is denoted by  
\begin{align}
\PSD_1(n) \equiv\big\{\bX\in\reals^{n\times n}:\; \bX\succeq
0, \;X_{ii}=1\forall i\in [n]\big\}\, .
\end{align}
The set $\PSD_1(n)$  is also known as the \emph{elliptope}. Given a
matrix $\bM$, we define\footnote{Here and below
  $\<\bA,\bB\>=\Tr(\bA^{\sT}\bB)$ is the usual scalar product between matrices.}
\begin{align}
\SDP(\bM) \equiv \max\big\{ \<\bM,\bX\>\, :\;\;
\bX\in\PSD_1(n)\big\}\, .  \label{eq:SDP.DEF}
\end{align}
It is well known that approximate  information about the extremal cuts
of $G$ can be obtained by computing $\SDP(\bA_G)$
\cite{goemans1995improved}. 

The main result of this paper is that the above SDP 
is also nearly optimal in extracting information about sparse random
graphs. In particular, it eliminates the irregularities due to
high-degree vertices, cf. Eq.~(\ref{eq:MaxEigenvalue}). Our first
result characterizes the value of $\SDP(\bA_G-\E\{\bA_G\})$ for $G$ an
\ER random graph with large bounded degree\footnote{Throughout the
  paper, $O(\, \cdot\, )$, $o(\,\cdot\,)$, and $\Theta(\,\cdot\,)$
  refer to the usual $n \to \infty$ asymptotic, while $O_{d}(\,\cdot\,)$, $o_{d}(\,\cdot\,)$ 
and $\Theta_{d}(\,\cdot\,)$ are used
to describe the $d \to \infty$ asymptotic regime. We say that a sequence of events $B_n$ occurs with high probability (w.h.p.) if $\prob(B_n) \to 1$ 
as $n\to \infty$. Finally, for random $\{X_n\}$ and 
non-random $f: \reals_{>0} \to \reals_{>0}$, we say 
that $X_n = o_{d}(f(d))$ w.h.p. as $n\to \infty$ if 
there exists non-random $g(d) = o_{d}(f(d))$ such 
that the sequence $B_n = \{ |X_n| \leq g(d)\}$ occurs w.h.p. 
(as $n\to \infty$).}. (Its proof is given in Appendix \ref{sec:ProofMain}.)
\begin{theorem}\label{thm:Main}
Let $G\sim \sG(n,d/n)$ be an \ER random graph with edge probability
$d/n$,  $\bA_G$ its adjacency matrix, and $\bAc_G \equiv
\bA_G-\E\{\bA_G\}$ its centered adjacency matrix. 
Then there exists $C=C(d)$ such that with probability at least $1-C\,
e^{-n/C}$, we have
\begin{align}
\frac{1}{n}\SDP(\bAc_G) = 2 \sqrt{d} + o_{d}(\sqrt{d})\, ,\label{eq:MaxLimit}\;\;\;\;\;
\frac{1}{n}\SDP(-\bAc_G)  = 2 \sqrt{d} + o_{d}(\sqrt{d})\, .
\end{align}
\end{theorem}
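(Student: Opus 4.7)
The value $2\sqrt d$ on the right-hand side of \eqref{eq:MaxLimit} equals the Wigner-semicircle edge of a symmetric Gaussian matrix $\bW$ with $W_{ij}\sim\normal(0,d/n)$ off the diagonal, whose entries are variance-matched to those of $\bAc_G$. I therefore plan to prove the theorem in four pieces: (a) compute $\SDP(\bW)/n$ for the Gaussian proxy; (b) invoke the higher-rank Grothendieck inequality promised in the abstract to reduce $\SDP(\bAc_G)$ to a rank-$k$ vector optimization with $k=k(d)$ independent of $n$; (c) transport the rank-$k$ value from $\bW$ to $\bAc_G$ by a Guerra--Lindeberg interpolation on an associated free energy; (d) conclude via Talagrand's convex-distance concentration.

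For (a), the upper bound $\SDP(\bW)/n\le \|\bW\|_{op}$ follows from $\Tr(\bX)=n$ and the Bai--Yin limit $\|\bW\|_{op}\to 2\sqrt d$. For the matching lower bound, let $k(n)\to\infty$ slowly, form the $n\times k$ matrix $\widetilde{\bU}=\sqrt{n/k}\,(\bv_1,\dots,\bv_k)$ from the top $k$ orthonormal eigenvectors of $\bW$, augment each row with a single slack coordinate so its norm is exactly $1$, and let $\bX^\star\in\PSD_1(n)$ be the Gram matrix of the augmented rows. GOE delocalization gives $\max_i\sum_{\ell\le k}\bv_\ell(i)^2=(1+o(1))k/n$ so the slack is negligible, and eigenvalue rigidity near $2\sqrt d$ yields
\begin{align*}
\<\bW,\bX^\star\> \;=\; \frac{n}{k}\sum_{\ell=1}^{k}\xi_\ell(\bW)+o(n\sqrt d) \;=\; \bigl(2\sqrt d+o_d(\sqrt d)\bigr)n,
\end{align*}
matching the upper bound.

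For (b)--(c), because $\SDP(\cdot)$ is a non-smooth maximum, I would work with the rank-$k$ restriction $\SDP_k(\bM)=\max_{(\bsigma_i)\in(\mathbb{S}^{k-1})^n}\sum_{i,j}M_{ij}\<\bsigma_i,\bsigma_j\>$ and its positive-temperature smoothing
\begin{align*}
F_{\beta,k}(\bM) \;=\; \frac{1}{\beta n}\log\int_{(\mathbb{S}^{k-1})^n}\exp\!\Bigl(\beta\sum_{i,j}M_{ij}\<\bsigma_i,\bsigma_j\>\Bigr)\prod_{i}d\mu(\bsigma_i).
\end{align*}
The higher-rank Grothendieck inequality would supply $\SDP(\bAc_G)/n = \SDP_k(\bAc_G)/n+o_d(\sqrt d)$ for $k$ sufficiently large (depending only on $d$), while the soft-max estimate gives $\SDP_k(\bM)/n-F_{\beta,k}(\bM)\to 0$ as $\beta\to\infty$. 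Comparing $\E F_{\beta,k}(\bAc_G)$ with $\E F_{\beta,k}(\bW)$ is then a Guerra--Lindeberg interpolation along $\bM_t=\sqrt t\,\bAc_G+\sqrt{1-t}\,\bW$: since both matrices have matched first and second moments, only a third-moment discrepancy of size $O(d/n)$ per entry enters the remainder, weighted by the derivative bound $|\partial_{ij}^3 F_{\beta,k}|\lesssim \beta^2/n$, and summed over $\binom{n}{2}$ entries.

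The main obstacle is precisely this coordination of limits: the Lindeberg remainder grows with $\beta$ while the soft-max approximation to $\SDP_k$ requires $\beta$ large, and the balance only closes if the Grothendieck rank $k$ can be chosen as a function of $d$ alone, with no $n$-dependence. Conceptually, the higher-rank Grothendieck inequality is exactly what rules out the danger that localized ``spike'' eigenvectors of $\bAc_G$ supported on high-degree vertices---responsible for $\|\bAc_G\|_{op}\asymp\sqrt{\log n/\log\log n}$---might inflate $\SDP(\bAc_G)$ beyond the rank-$k$ value. For (d), $\bAc_G\mapsto \SDP(\bAc_G)$ is convex and $2$-Lipschitz in each of the $\binom{n}{2}$ independent Bernoulli entries of $\bA_G$ (since $|X_{ij}|\le 1$ on $\PSD_1(n)$), so Talagrand's convex-distance inequality delivers $\prob(|\SDP(\bAc_G)-\E\SDP(\bAc_G)|\ge \eps n)\le Ce^{-n/C}$. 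The companion identity $\SDP(-\bAc_G)/n=2\sqrt d+o_d(\sqrt d)$ then follows from $\bW\ed-\bW$ and the sign-invariance of the interpolation, using the bottom rather than the top eigenvectors in the construction of $\bX^\star$.
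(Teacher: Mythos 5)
Your four-step architecture (Gaussian proxy analysis, higher-rank Grothendieck to pass to bounded rank $k(d)$, Lindeberg interpolation on a smoothed free energy, concentration) is exactly the paper's strategy, and your discussion of the $\beta$--$k$ balancing is the right picture behind the paper's Theorem~\ref{thm:Approx}. Two points need attention; the first is a genuine gap.

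\textbf{Concentration.} Applying Talagrand's convex-distance inequality (or bounded differences) to $\SDP(\bAc_G)$ viewed as a function of the $\binom{n}{2}$ independent Bernoulli entries does \emph{not} give $Ce^{-n/C}$. The relevant Lipschitz constant is: changing one entry moves $\SDP$ by at most $1$, so the variance proxy is $\sum_{i<j} c_{ij}^2 \asymp n^2$; equivalently, the $\ell_2$-Lipschitz constant of the convex map $\bAc\mapsto\SDP(\bAc)$ is $\|\bX_*\|_F\le n$. Either way, for a deviation $t=\eps n$ Talagrand yields only $\prob(|\SDP-\E\,\SDP|\ge\eps n)\le C e^{-c\eps^2}$, a \emph{constant}. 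To get exponential-in-$n$ tails one must exploit the sparsity of $\sG(n,d/n)$: the paper's Lemma~\ref{lemma:ConcentrationSDP} first conditions on the edge count being $\le 10dn$ (a $1-e^{-\Omega(n)}$ event by Chernoff) and then applies Azuma--Hoeffding to an edge-revealing martingale of length $O(dn)$, giving $e^{-nt^2/(Cd)}$. As stated, your step (d) does not deliver the claimed tail and needs to be replaced by this sparsity-aware argument.

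\textbf{The Gaussian witness.} Your slack-coordinate Gram matrix requires every scaled row $\sqrt{n/k}\,(\bv_1(i),\dots,\bv_k(i))$ to have norm $\le 1$. Delocalization gives $\max_i\sum_{\ell\le k}\bv_\ell(i)^2 = k/n + O(\sqrt{k\log n}/n)$, so the scaled row norms are $\sqrt{1 + O(\sqrt{(\log n)/k})}$: for $k$ growing ``slowly'' (e.g.\ polylogarithmically) the fluctuation term dominates, and even for $k\gg\log n$ the maximum typically exceeds $1$, so the slack coordinate does not exist for some rows. The paper's construction (Appendix~\ref{app:ProofGaussianA}) avoids this by taking $k=n\delta$ for fixed $\delta>0$ and normalizing the rows, $\bX = \bD^{-1/2}\bU\bU^{\sT}\bD^{-1/2}$ with $D_{ii}=(\bU\bU^{\sT})_{ii}$, which produces an exact unit diagonal and a rank-$n\delta$ feasible point; the normalization error is controlled by Lemma~\ref{lemma:BoundProj}. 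Your construction can be repaired (scale by $(1-\eta)\sqrt{n/k}$, or simply normalize rows), but as written it does not produce a feasible witness.

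One smaller point: the higher-rank Grothendieck inequality (Theorem~\ref{thm:Gro}) bounds $\SDP(\bM)-\OPT_k(\bM)$ in terms of $\SDP(-\bM)$, so passing from $\SDP$ to $\OPT_k$ for $\bAc_G$ requires simultaneous control of $\SDP(\pm\bAc_G)$; the paper closes this loop by running the whole argument for both signs and using $\|\bB\|_{op}=O(1)$ to keep $\OPT_k(\pm\bB)$ bounded before invoking Theorem~\ref{thm:Gro}. Your sketch gestures at this only at the very end, but it needs to be in place before the Grothendieck step.
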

Note that $\SDP(\bAc_G)\le n\xi_1(\bAc_G)$ (here and in the following
$\xi_1(\bM)\ge \xi_2(\bM)\ge\dots\xi_n(\bM)$ denote the
eigenvalues of the symmetric matrix $\bM$). However, while
$\xi_1(\bAc_G)$ is sensitive to vertices of atypically large
degree, cf. Eq.~(\ref{eq:MaxEigenvalue}), $\SDP(\bAc_G)$ appears to be
sensitive only to the average degree. Intuitively, the constraint
$X_{ii}=1$ rules out the highly localized eigenvectors that are
responsible for $\xi_1(\bAc_G) \approx\sqrt{\log n/\log\log n}$.

Another way of interpreting Theorem \ref{thm:Main} is that 
\ER random graphs behave, with respect to SDP as random regular graphs
with the same average degree. Indeed, we have the following more
precise result for regular graphs. (See Appendix \ref{app:Regular} for the proof.)
\begin{theorem}\label{thm:Regular}
Let $G\sim \Greg(n,d)$ be a random regular graph with degree
$d$, and $\bAc_G \equiv \bA_G-\E\{\bA_G\}$ its centered adjacency
matrix. 
Then, with high probability
\begin{align}
\frac{1}{n}\SDP(\bAc_G) = 2 \sqrt{d-1} + o_n(1)\, ,\;\;\;\;\;
\frac{1}{n}\SDP(-\bAc_G)  = 2 \sqrt{d-1} + o_n(1)\, .
\end{align}
\end{theorem}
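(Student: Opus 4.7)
My plan is to establish matching upper and lower bounds for $\SDP(\bAc_G)$; the bounds for $\SDP(-\bAc_G)$ follow from the same argument applied at the opposite edge of the spectrum. The upper bound is straightforward: for any $\bX\in\PSD_1(n)$, $\Tr(\bX)=n$ and hence $\<\bAc_G,\bX\>\le n\,\xi_1(\bAc_G)$. Since $\bAc_G$ annihilates $\bone$ and differs from $\bA_G$ by a rank-one perturbation of order $d/n$, Friedman's theorem---which bounds $|\xi_j(\bA_G)|\le 2\sqrt{d-1}+o_n(1)$ for $j\ge 2$ w.h.p.\ when $G\sim\Greg(n,d)$---gives $\xi_1(\bAc_G)\le 2\sqrt{d-1}+o_n(1)$ w.h.p., hence $\SDP(\bAc_G)\le 2n\sqrt{d-1}+o(n)$ w.h.p.

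For the lower bound I would exhibit a near-optimal feasible $\bX\in\PSD_1(n)$ via a normalized spectral projection near the top of the spectrum. Fix $\epsilon>0$ and pick a nonnegative smooth $f_\epsilon:\reals\to\reals_{\ge 0}$ supported in $[2\sqrt{d-1}-\epsilon,\,2\sqrt{d-1}]$ with $c_\epsilon:=\int f_\epsilon\,d\rho_{\rm KM}>0$, where $\rho_{\rm KM}$ denotes the Kesten--McKay density on $[-2\sqrt{d-1},2\sqrt{d-1}]$. Set $\bY:=f_\epsilon(\bAc_G)\succeq 0$, $\bD:=\mathrm{diag}(\bY)$, and
\begin{align}
\bX\;:=\;\bD^{-1/2}\bY\bD^{-1/2}\in\PSD_1(n).
\end{align}
By cyclicity of trace, $\<\bAc_G,\bX\>=\<\bD^{-1/2}\bAc_G\bD^{-1/2},\bY\>$, and modulo the diagonal concentration $\bD\approx c_\epsilon\,\id$ this equals $c_\epsilon^{-1}\Tr(\bAc_G f_\epsilon(\bAc_G))+o(n)$. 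Benjamini--Schramm convergence of $\Greg(n,d)$ to the infinite $d$-regular tree gives $\frac{1}{n}\Tr(g(\bAc_G))\to\int g\,d\rho_{\rm KM}$ w.h.p.\ for bounded continuous $g$; applied to $g(\lambda)=\lambda f_\epsilon(\lambda)$ and combined with the support of $f_\epsilon$, this yields $\<\bAc_G,\bX\>\ge(2\sqrt{d-1}-\epsilon)\,n-o(n)$. Sending $n\to\infty$ followed by $\epsilon\downarrow 0$ finishes the lower bound.

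The main obstacle is the approximate diagonal concentration $D_{ii}\approx c_\epsilon$ needed to pass from $\bD^{-1/2}\bAc_G\bD^{-1/2}$ to $c_\epsilon^{-1}\bAc_G$ inside the quadratic form. I would approach this by approximating $f_\epsilon$ in sup-norm by a polynomial $P_K(\lambda)=\sum_{k\le K}a_k\lambda^k$ of fixed degree $K$, reducing the problem to uniform control of $(\bA_G^k)_{ii}$ for $k\le K$. This entry counts closed walks of length $k$ rooted at $i$, and for $G\sim\Greg(n,d)$ with $K$ fixed, the set $B$ of vertices lying on any cycle of length $\le K$ in their $K$-neighborhood has size $O_{d,K}(1)$ w.h.p.\ (short cycles being asymptotically Poisson with bounded intensity); outside $B$ one has $(\bA_G^k)_{ii}=\phi_k$ exactly, the tree count. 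This gives $|D_{ii}-c_\epsilon|=o_n(1)$ on $[n]\setminus B$ while $|D_{ii}-c_\epsilon|=O_{d,K}(1)$ on $B$. One then checks that the bad-index contribution to $\<\bD^{-1/2}\bAc_G\bD^{-1/2}-c_\epsilon^{-1}\bAc_G,\bY\>$ is $O_{d,K}(|B|)=O_{d,K}(1)$ via $\|\bAc_G\bY\|_{\rm op}=O_d(1)$, while the typical-index contribution is $o_n(1)\cdot n\cdot O_d(1)=o(n)$---both negligible on the scale $n$, completing the proof after a routine sup-norm polynomial approximation step as $K\to\infty$.
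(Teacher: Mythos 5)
Your upper bound is identical to the paper's (spectral bound plus Friedman's theorem), and your lower bound is correct in outline but constructs the near-optimal feasible point by a genuinely different mechanism. Both proofs build $\bX\in\PSD_1(n)$ by normalizing the diagonal of a positive semidefinite kernel concentrated near the spectral edge, and both evaluate $\<\bAc_G,\bX\>$ via local weak convergence of $\Greg(n,d)$ to the $d$-regular tree. But the paper obtains that kernel by transplanting the ``Gaussian wave functions'' of Cs\'oka--Gerencs\'er--Harangi--Vir\'ag: explicit centered Gaussian processes $\{Z_v\}$ on the tree satisfying $\sum_{u\sim v}Z_u=\lambda Z_v$, truncated to a finite-range linear factor of i.i.d.\ $\tZ^{(L)}_v$, with $X_{ij}$ the normalized conditional covariance; you instead use the functional calculus $\bY=f_\epsilon(\bAc_G)$ for a bump function near $2\sqrt{d-1}$ and normalize $\bX=\bD^{-1/2}\bY\bD^{-1/2}$. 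Your route is more self-contained (it avoids the external input from Cs\'oka et al.\ at the cost of a polynomial-approximation argument with a three-limit order $n\to\infty$, $K\to\infty$, $\epsilon\downarrow 0$), while the paper's buys a diagonal that is automatically bounded below: $\E\{(\tZ_i^{(L)})^2\mid G\}$ involves a fixed finite number of i.i.d.\ Gaussian summands and is therefore uniformly positive.

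That last point is the one genuine gap in your argument: for the $O_{d,K}(1)$ non-tree-like vertices $i\in B$ you have only $D_{ii}=(f_\epsilon(\bAc_G))_{ii}\ge 0$, with no lower bound, so $\bD^{-1/2}$ may be undefined or arbitrarily large there, and your bound ``$O_{d,K}(|B|)$ via $\|\bAc_G\bY\|_{op}=O_d(1)$'' on the bad-index contribution does not go through as stated. The fix is easy: either replace the rows and columns of $\bX$ indexed by $B$ with the corresponding rows/columns of the identity (the resulting matrix is $\bP_{B^c}\bX\bP_{B^c}+\bP_B\in\PSD_1(n)$, and the change to the objective is $O_{d,K}(|B|)=O(1)$ because $|X_{ij}|\le 1$ and row-sums of $|\bAc_G|$ are $O(d)$), or add $\eta\,\id$ to $\bY$ before normalizing and send $\eta\downarrow 0$ last. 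With this patch, and noting that $f_\epsilon(0)=f_\epsilon(d)=0$ so that $f_\epsilon(\bAc_G)=f_\epsilon(\bA_G)$ (which lets you compute diagonal entries directly on $\bA_G$), the rest of your outline---polynomial approximation of $f_\epsilon$, control of $(\bA_G^k)_{ii}$ via scarcity of short cycles, and $\Tr(\bAc_G f_\epsilon(\bAc_G))\ge(2\sqrt{d-1}-\epsilon)\Tr f_\epsilon(\bAc_G)$ from the support of $f_\epsilon$---is sound.
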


\begin{remark}
The quantity $\SDP(\bAc_G)$ can also be thought as a relaxation
of the problem of maximizing 
$\sum_{i,j=1}^nA_{ij}\sigma_i\sigma_j$ over $\sigma_i\in\{+1,-1\}$,
$\sum_{i=1}^n\sigma_i=0$. The result of our  companion
paper \cite{dembo2015extremal} implies that this has --with high
probability--
value $2n \Par_*\sqrt{d}+n \, o_d(\sqrt{d})$ (see
\cite{dembo2015extremal} for a definition of $\Par_*$). We
deduce that --with high probability-- the SDP relaxation overestimates the optimum  by a factor
$1/\Par_*+o_{d}(1)$ (where $1/\Par_*\approx 1.310$). 
\end{remark}

\begin{remark}
For the sake of simplicity, we stated Eq.~(\ref{eq:MaxLimit}) in
  asymptotic form. However, our proof provides  quantitative bounds
  on the error terms. In particular, the $o_{d}(\sqrt{d})$
  term is upper bounded by $C d^{2/5}\log(d)$, for $C$ a
  numerical constant.
\end{remark}

\subsection{Main results (II): Hidden partition problem}
\label{sec:MainPartition}

We next apply the SDP defined in Eq.~(\ref{eq:SDP.DEF}) to the
community detection problem. 
To be definite we will formalize this as a binary hypothesis
testing problem, whereby we want to determine --with high probability
of success-- whether the random graph under consideration has a
community structure or not. The estimation version of the problem,
i.e. the question of determining --approximately-- a
partition into communities, can be addressed by similar techniques.

We are given a \emph{single} graph $G=(V,E)$ over
$n$ vertices and we have to decide which of the following holds:
\begin{description}
\item[{\sf Hypothesis 0:}] $G\sim \sG(n,d/n)$ is an \ER random graph with edge
  probability $d/n$, $d=(a+b)/2$. We denote the
  corresponding distribution over graphs by $\prob_0$.
\item[{\sf Hypothesis 1:}]  $G\sim \sG(n,a/n,b/n)$ is an random graph
  with a planted partition and edge probabilities $a/n$, $b/n$. We denote the
  corresponding distribution over graphs by $\prob_1$.
\end{description}
A statistical test takes as input a graph $G$, and returns
$T(G)\in\{0,1\}$ depending on which hypothesis is estimated to hold.
We say that it is successful with high probability if
$\prob_0(T(G)=1)+\prob_1(T(G)=0)\to 0$ as $n\to\infty$.

Theorem \ref{thm:Main} indicates that, under \Hnull, 
we have $\SDP(\bA_G-(d/n)\bone\bone^{\sT})= 2n\sqrt{d} +
n\,o_{d}(\sqrt{d})$.
This suggests the following test:
\begin{align}
T(G;\delta) = \begin{cases}
1 & \mbox{ if $\SDP(\bA_G-(d/n)\bone\bone^{\sT})\ge 2n(1+\delta)\sqrt{d}$,}\\
0 & \mbox{ otherwise.}\\
\end{cases}\label{eq:TestDef}
\end{align}
Mossel, Neeman, Sly \cite{mossel2012stochastic} proved that no test can be successful with
high probability if $(a-b)<\sqrt{2(a+b)}$. Polynomially computable
tests that achieve this threshold were developed in
\cite{mossel2013proof,massoulie2014community,bordenave2015non} using
advanced spectral methods. As mentioned, these approaches  can be
fragile to perturbations of the precise probabilistic model, cf. Section \ref{sec:Generalization}. 

Our next result addresses the fundamental question: \emph{Does the
  SDP-based test achieve the information theoretic threshold?} Notice
that the recent work of \cite{guedon2014community} falls short of
answering this question since it requires the vastly sub-optimal
condition $(a-b)^2\ge 10^4(a+b)$. (We refrer to Appendix \ref{sec:ProofMain} for
its proof.)
\begin{theorem}\label{thm:SDP_Test}
Assume, for some $\eps>0$,
\begin{align}
\frac{a-b}{\sqrt{2(a+b)}} \ge 1+\eps\, .\label{eq:ConditionFactor1}
\end{align}
Then there exists $\delta_*=\delta_*(\eps)>0$ and $d_* = d_*(\eps)>0$
such that the following holds. If $d=(a+b)/2\ge d_*$, then the
SDP-based test $T(\,\cdot\,;\delta_*)$ succeeds 
with high probability. 

Further, the error probability is at most
$Ce^{-n/C}$ for $C=C(a,b)$ a constant.
\end{theorem}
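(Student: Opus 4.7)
Under $\Hnull$, I would directly invoke Theorem~\ref{thm:Main}. Since $\E\{\bA_G\}=(d/n)(\bone\bone^{\sT}-\id)$, we have the algebraic identity $\bA_G-(d/n)\bone\bone^{\sT}=\bAc_G-(d/n)\id$, and the translation rule $\SDP(\bM+c\,\id)=\SDP(\bM)+cn$ gives
\[
\SDP\bigl(\bA_G-(d/n)\bone\bone^{\sT}\bigr)=2n\sqrt{d}+n\,o_d(\sqrt{d})-d
\]
with probability at least $1-Ce^{-n/C}$. For any $\delta_*>0$ and $d\ge d_*(\delta_*)$ large enough that the $o_d(\sqrt{d})$ correction is below $\delta_*\sqrt{d}$, this is strictly less than $2n(1+\delta_*)\sqrt{d}$ w.h.p., so $T(G;\delta_*)=0$.

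Under $\Halt$, condition on the balanced partition $(S_1,S_2)$, let $\bsigma\in\{\pm 1\}^n$ denote its indicator (so $\bone^{\sT}\bsigma=0$), and set $\lambda=(a-b)/2$ and $\bv_0=\bsigma/\sqrt{n}$. A direct computation gives
\[
\bA_G-(d/n)\bone\bone^{\sT}=\bW+\lambda\,\bv_0\bv_0^{\sT}-(a/n)\id,\qquad \bW:=\bA_G-\E\{\bA_G\mid S_1,S_2\},
\]
where $\bW$ has independent centered entries of variance of order $d/n$ and the identity term only shifts the SDP by $-a=O(1)$. Assumption (\ref{eq:ConditionFactor1}) is precisely $\lambda\ge(1+\eps)\sqrt{d}$, placing the spike strictly above the Baik--Ben Arous--P\'ech\'e (BBP) threshold associated with noise scale $\sqrt{d}$. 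Because $\lambda+d/\lambda\ge 2\sqrt{d}\bigl(1+\eps^2/(2(1+\eps))\bigr)$, it suffices to prove the SDP analogue of the BBP outlier formula:
\[
\frac{1}{n}\SDP\bigl(\bW+\lambda\,\bv_0\bv_0^{\sT}\bigr)\ge\lambda+\frac{d}{\lambda}-o_d(\sqrt{d})\qquad\text{w.h.p.}
\]
Any $\delta_*<\eps^2/(2(1+\eps))$ with $d\ge d_*(\eps)$ then finishes the proof.

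To prove the displayed bound I would construct a feasible $\bX$ from the outlier eigenpair of $\bW+\lambda\,\bv_0\bv_0^{\sT}$. Classical deformed-Wigner theory (abstract item (iii)) predicts $\xi_1\to\lambda+d/\lambda$ with top eigenvector $\bv_1\approx\alpha\bv_0+\beta\bu$, $\alpha^2\to 1-d/\lambda^2$, and $\bu$ a delocalized bulk direction; since $\bv_0$ is perfectly delocalized, the candidate
\[
\bX=c\,n\bv_1\bv_1^{\sT}+\mathrm{diag}(1-c\,n v_{1,i}^2),\qquad c=\bigl(\max_i nv_{1,i}^2\bigr)^{-1},
\]
lies in $\PSD_1(n)$ and yields $\<\bW+\lambda\bv_0\bv_0^{\sT},\bX\>\approx c\,n\,\xi_1$. \textbf{The main obstacle} is that in the sparse regime the spectrum of $\bW$ is polluted by localized high-degree modes (cf.~(\ref{eq:MaxEigenvalue})), so a direct spectral BBP analysis of $\bW+\lambda\bv_0\bv_0^{\sT}$ fails: its largest eigenvalue exceeds $\lambda+d/\lambda$ and is driven by irrelevant localized modes. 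The bound must therefore be proved at the SDP level, not the spectral level. This is where the remaining abstract tools intervene: the higher-rank Grothendieck inequality (item~(i)) rewrites $\SDP(\cdot)$ as a smoother variational quantity over rank-$r$ unit-vector configurations, a statistical-physics Lindeberg-type interpolation (item~(ii)) then swaps $\bW$ for a GOE reference of matching second moments with controllable error, and the BBP construction of $\bX$ above is finally applied to the Gaussian reference, where the top eigenvector is uniformly delocalized and the spectral BBP formula is rigorously valid.
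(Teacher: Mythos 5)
Your treatment of $\Hnull$ is correct and matches the paper: invoking Theorem~\ref{thm:Main} handles false positives directly. Your algebraic decomposition under $\Halt$ is also correct, including the identification $\lambda/\sqrt{d}=(a-b)/\sqrt{2(a+b)}$ and the correct identification of the BBP threshold. And you correctly identify the three structural tools (Grothendieck inequality to control rank truncation, Lindeberg interpolation to swap the centered adjacency matrix for a deformed GOE, and an explicit witness for the Gaussian model) that the paper actually uses.

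However, the witness construction you propose at the heart of the argument does not work, even for the Gaussian reference. You take
\[
\bX=c\,n\bv_1\bv_1^{\sT}+\mathrm{diag}\bigl(1-c\,n v_{1,i}^2\bigr),\qquad c=\Bigl(\max_i n v_{1,i}^2\Bigr)^{-1},
\]
so that the dominant term of $\<\bB,\bX\>/n$ is $c\,\xi_1$ plus a diagonal correction which is $o(1)$ (the diagonal entries of $\bB$ have magnitude $O(n^{-1/2})$ or smaller). But the top eigenvector of a deformed GOE matrix $\lambda'\bv_0\bv_0^{\sT}+\bW$ is \emph{not} uniformly delocalized: it decomposes as $\bv_1\approx\alpha\bv_0+\beta\bu$ with $\bu$ essentially a uniformly random unit vector orthogonal to $\bv_0$, whose entries behave like i.i.d.\ $\normal(0,1/n)$. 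Thus $\max_i n u_i^2\sim 2\log n$, hence $\max_i n v_{1,i}^2\sim 2\beta^2\log n\to\infty$ and $c\to 0$. Your witness therefore yields $\<\bB,\bX\>/n\to 0$, not $\lambda'+1/\lambda'$. The paper avoids exactly this failure: in the construction of Appendix~\ref{app:ProofGaussianB} the leading eigenvector is scaled by a \emph{constant} $\eps$ and then \emph{capped} through the function $R$ so that $|\varphi_i|\le 1$ pointwise, at the price of only attaining a small but positive gap $\Delta(\lambda)>0$ rather than $\lambda+1/\lambda-2$; the $\bD\bU\bU^{\sT}\bD$ piece then repairs the diagonal using a band of $n\delta$ bulk eigenvectors rather than a pure diagonal correction.

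Relatedly, your target bound $\SDP(\bW+\lambda\bv_0\bv_0^{\sT})/n\ge\lambda+d/\lambda-o_d(\sqrt{d})$ is substantially stronger than what the paper proves and than what is needed. Theorem~\ref{thm:Gaussian}.(b) only establishes $\SDP(\bB(\lambda))/n\ge 2+\Delta(\lambda)$ for some $\Delta(\lambda)>0$, and the authors explicitly remark after that theorem that they expect the eigenvalue value $\lambda+\lambda^{-1}$ \emph{not} to be attained by the SDP for $\lambda>1$. Any explicit $\Delta(\lambda)>0$ suffices for the test, once the interpolation error from Theorem~\ref{thm:Approx} is of order $d^{-1/10}\log d$; you should aim for that weaker target, which is the one your capped-witness construction can actually deliver.
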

\begin{remark}
This theorem guarantees that 
SDP is nearly optimal for large but bounded degree $d$. 
By comparison, the naive spectral test that returns $T_{\rm spec}(G) = 1$ if
$\lambda_1(\bA_G)\ge \theta_*$  and $T_{\rm spec}(G) = 0$ otherwise
(for any threshold value $\theta_*$) is sub-optimal by an unbounded
factor for  $d=O(1)$.
\end{remark}

\begin{remark}
One might wonder why we consider large degree asymptotics $d=(a+b)/2\to\infty$
instead of trying to establish a threshold at $(a-b)/\sqrt{2(a+b)}=1$ for
fixed $a$, $b$. Preliminary non-rigorous calculation
\cite{OursReplicas} suggest that indeed this is necessary. For fixed
$(a+b)$ the SDP threshold does not coincide with the optimal one.
\end{remark}

\begin{remark}
For the sake of simplicity, we formulated the community detection
problem as an hypothesis testing problem. A related (somewhat more
challenging) task is to estimate the hidden partition better than by
random guessing. In Section \ref{sec:Estimation} we will show that, under the same conditions of
Theorem \ref{thm:SDP_Test}, we can assign vertices making at most $(1-\Delta)n/2$ mistakes 
(with high probability for some $\Delta$ bounded away from $0$).
\end{remark}

We will discuss related work in the next section, then provide an
outline of the proof ideas in Section \ref{sec:Strategy}, and finally
discuss extension of the above results in Section
\ref{sec:Generalization}.
Detailed proofs are deferred to the appendix.

\subsection{Notations}

Given $n\in\naturals$, we let $[n] = \{1,2,\dots,n\}$
denote the set of first $n$ integers.  We write $|S|$ for the cardinality of a
set $S$.  We will use
lowercase boldface (e.g. $\bv = (v_1,\dots,v_n)$, $\bx = (x_1,\dots,x_n)$, etc.) 
for  vectors and uppercase boldface (e.g. $\bA = (A_{i,j})_{i,j\in[n]}$, $\bY= (Y_{i,j})_{i,j\in[n]}$, etc.)
for matrices.
Given a symmetric matrix $\bM$, we let $\xi_1(\bM)\ge
\xi_2(\bM)\ge \dots\ge \xi_n(\bM)$ be its ordered eigenvalues (with
$\xi_{\max}(\bM) = \xi_1(\bM)$, $\xi_{\min}(\bM) = \xi_n(\bM)$).
In particular $\bone_n = (1,1,\dots, 1)\in\reals^n$ is the all-ones vector,
$\id_{n}$ the identity matrix, and $\bfe_i\in \reals^{n}$ is the $i$'th standard unit vector.

For $\bv\in\reals^m$, $\|\bv\|_p =
(\sum_{i=1}^p|v_i|^p)^{1/p}$ denotes its $\ell_p$ norm (extendend in
the standard way to $p=\infty$). For a matrix $\bM$, we denote by
$\|\bM\|_{p\to q} = \sup_{\bv\neq 0}\|\bM\bv\|_q/\|\bv\|_q$ its
$\ell_p$-to-$\ell_q$ operator norm, with the standard shorthands
$\|\bM\|_{op} \equiv  \|\bM\|_{2} \equiv \|\bM\|_{2\to 2}$.

Throughout \emph{with high probability} means `with probability
converging to one as $n\to\infty$.'  We follow the standard Big-Oh
notation for asymptotics. We will be interested in bounding error
terms with respect to $n$ and $d$. Whenever not clear from the
contest, we indicate in subscript the variable that is large. For
instance
$f(n,d) = o_d(1)$ means that there exists a function $g(d)\ge 0$
independent of $n$ such that $\lim_{d\to\infty}g(d) = 0$ and
$|f(n,d)|\le g(d)$. (Hence $f(n,d) = \cos(0.1 n)/d =o_d(1)$ but
$f(n,d) = \log( n)/d \neq o_d(1)$.)

A random graph has a law (distribution), which is a probability
distribution over graphs with the same vertex set $V=[n]$. Since we
are interested in the $n\to\infty$ asymptotics, it will be implicitly
understood that one such distribution is specified for each $n$.

We will use $C$ (or $C_0$, $C_1$,\dots) to denote constants, that will
change from point to point. Unless otherwise stated, these are
universal constants. 
%
%
\section{Further related literature}
\label{sec:Related}

Few results have been proved about the behavior of classical SDP relaxations
on sparse random graphs and --to the best of our knowledge-- none of
these earlier results is tight.

Significant amount of work has been devoted to analyzing SDP
hierarchies on random CSP instances
\cite{grigoriev2001linear,schoenebeck2008linear}, and --more
recently-- on (semi-)random Unique games instances
\cite{kolla2011play}. These papers typically prove only one-side
bounds that are not claimed to be sharp as the number of variables diverge.

Coja-Oghlan \cite{coja2003lovasz} studies the
value of  Lov\'asz  theta function $\vartheta(G)$, for $G\sim\sG(n,p)$ a
\emph{dense} \ER random graph,
 estabilishing $C_1\sqrt{n/p}\le \vartheta(G)\le
C_2\sqrt{n/p}$ with high probability. As in the previous cases, this result is not tight.

Ambainis et
al. \cite{ambainis2012quantum}  study an SDP similar to
(\ref{eq:SDP.DEF}), for $\bM$ a \emph{dense} random matrix with
i.i.d. entries. One of their main results is analogous to a special
case of our  Theorem \ref{thm:Gaussian}.$(b)$ below --namely, to the case $\lambda=0$.
(We prefer to give an independent --simpler-- proof also of this case.)

Several papers have been devoted to SDP approaches for community
detection and the related `synchronization' problem.
A partial list includes
\cite{bandeira2014multireference,abbe2014exact,hajek2014achieving,hajek2015achieving,awasthi2015relax}.
These papers focus on finding sufficient conditions under which the
SDP recovers \emph{exactly} the unknown signal.
For instance, in the context of the hidden partition model
(\ref{eq:HiddenPart}), this requires  diverging degrees $a,b=\Theta(\log n)$ 
\cite{abbe2014exact,hajek2014achieving,hajek2015achieving}. 
SDP was proved in \cite{hajek2014achieving} to achieve the information-theoretically optimal
threshold for exact reconstruction.
The techniques to prove this type of result are very different from
the ones employed here: since the (conjectured) optimum is known
explicitly, it is sufficient to certify it through a dual witness.

The only  result on community detection that compares to ours was recently proven by Guedon and
Vershynin \cite{guedon2014community}.
Their work uses the classical Grothendieck inequality to
establish upper bounds on the estimation error of
SDP. The resulting bound applies only under the condition $(a-b)^2\ge 10^4
(a+b)$. This condition is vastly sub-optimal with respect to the
information-theoretic threshold  $(a-b)^2> 2
(a+b)$ established in
\cite{mossel2012stochastic,mossel2013proof,massoulie2014community}
(and is unlikely to be satisfied by realistic graphs). In particular,
the results of  \cite{guedon2014community} leave open the central question:
is SDP to be discarded in favor of the spectral methods of
\cite{mossel2013proof,massoulie2014community},
or is the sub-optimality just an outcome of the analysis?

In this paper we provide evidence indicating that SDP is in fact nearly optimal
for community detection. While we also make use of a Grothendieck
inequality as in \cite{guedon2014community}, this is only one step
(and not the most challenging) in a  significantly longer
argument. Let us emphasize that the gap between the ideal threshold
at $(a-b)/\sqrt{2(a+b)} =1$,
and the guarantees of \cite{guedon2014community} cannot be
filled simply by carrying out more carefully the same proof strategy.
In order fill the gap we need
to develop several new ideas: $(i)$ A new (higher rank) Grothendieck
inequality; $(ii)$ A smoothing  of the original graph parameter
$\SDP(\,\cdot\,)$; $(iii)$ An interpolation argument; $(iv)$ A sharp
analysis of SDP for Gaussian random matrices.
%
%
\section{Proof strategy}
\label{sec:Strategy}

Throughout, we denote by $\bAc_G=\bA_G-(d/n)\bone\bone^{\sT}$ the
centered adjacency matrix of $G\sim\sG(n,d/n)$ or $G\sim\sG(n,a/n,b/n)$. 
Our proofs of Theorem \ref{thm:Main} and Theorem \ref{thm:SDP_Test}
follows a similar strategy that can be summarized as follows:
\begin{description}
\item[Step 1: Smooth.] We replace the function $\bM\mapsto\SDP(\bM)$,
  by a smooth function $\bM\mapsto \Phi(\beta,k;\bM)$ that depends on two
  additional parameters $\beta\in\reals_{\ge 0}$ and
  $k\in\naturals$. We prove that, for $\beta, k$ large (and $\bM$
  sufficiently `regular'), $|\SDP(\bM)-\Phi(\beta,k;\bM)|$ can be made
  arbitrarily small,
  uniformly in the matrix dimensions. This in particular requires
  developing a new (higher rank) Grothendieck-type inequality, which
  is of independent interest, see Section \ref{sec:Gro}.
\item[Step 2: Interpolate.] We use an interpolation method (analogous
  to the Lindeberg method) to compare the value $\Phi(\beta,k;\bAc_G)$
  to $\Phi(\beta,k;\bB)$, where $\bB\in\reals^{n\times n}$ is a symmetric
  Gaussian matrix with independent entries. More precisely, we use
  $B_{ij}\sim \normal(0,1/n)$ to approximate $G\sim\sG(n,d/n)$ and
  $B_{ij}\sim \normal(\lambda/n,1/n)$ to approximate  the hidden
  partition model $G\sim\sG(n,a/n,b/n)$, with $\lambda \equiv
  (a-b)/\sqrt{2(a+b)}$. Further detail is provided in Section \ref{sec:Interpolation}.

 Note that the interpolation/Lindeberg method requires $\bM\mapsto
 \Phi(\beta,k;\bM)$ to be differentiable, which is the reason for Step
 1 above.
\item[Step 3: Analyze.] We finally carry out an analysis of $\SDP(\bB)$
  with $\bB$ distributed according to the above Gaussian models. In
  doing this we can take advantage of the high degree of symmetry of
  Gaussian random matrices. This part of the proof is relatively
  simple for Theorem \ref{thm:Main}, but becomes challenging in the
  case of Theorem \ref{thm:SDP_Test}, see Section \ref{sec:Gaussian}.
\end{description}
(The proof of Theorem \ref{thm:Regular} is more direct and will be
presented in Appendix \ref{app:Regular}). In the next subsections we will provide
further details about each of these steps.  The formal proofs
of Theorem \ref{thm:Main} and Theorem \ref{thm:SDP_Test} are presented
in Appendix \ref{sec:ProofMain}, with technical lemmas in other appendices..

The construction of the smooth function $\Phi(\beta,k;\bM)$ is
inspired from statistical mechanics. As an intermediate step, define
the following rank-constrained version of the SDP (\ref{eq:SDP.DEF})
\begin{align}
\OPT_k(\bM) &\equiv \max\big\{ \<\bM,\bX\>\, :\;\;
\bX\in\PSD_1(n)\, ,\;\; \rank(\bX)\le k\big\} \label{eq:OPT.DEF}\\
& = \max\big\{ \sum_{i,j=1}^nM_{ij}\<\bsigma_i,\bsigma_j\>\, :\;\;
\bsigma_i\in \bbS^{k-1}\big\}\, ,
\end{align}
where  $\bbS^{k-1} = \{ \bsigma \in\reals^k:\; \|\bsigma\|_2 = 1\}$ be the unit
sphere in $k$ dimensions. We then define $\Phi(\beta,k;\bM)$ as the
following  log-partition function
\begin{align}
\Phi(\beta,k;\bM) &\equiv \frac{1}{\beta}\, \log\left\{\int \,
                    \exp\Big\{\beta
                    \sum_{i,j=1}^{n}M_{ij}\<\bsigma_i,\bsigma_j\>\Big\}\,\de\nu(\bsigma)\right\}\,
                    .
\end{align}
Here $\bsigma=(\bsigma_1,\bsigma_2,\dots,\bsigma_n)\in (\bbS^{k-1})^n$
and we denote by $\de\nu(\,\cdot\,)$ the uniform measure on
$(\bbS^{k-1})^n$  (normalized to $1$, i.e. $\int \de\nu(\bsigma) =
1$).

It is easy to see that $\lim_{\beta\to\infty}\Phi(\beta,k;\bM) =
\OPT_k(\bM)$, and $\OPT_n(\bM) = \SDP(\bM)$. For carrying out the
above proof strategy we need to bound the errors $|\Phi(\beta,k;\bM) -
\OPT_k(\bM)|$ and $|\OPT_k(\bM) -\SDP(\bM)|$ uniformly in $n$. 

\subsection{Higher-rank Grothendieck inequalities and
  zero-temperature limit}
\label{sec:Gro}

In order to bound the error $|\OPT_k(\bM) -\SDP(\bM)|$ we develop a
new Grothendieck-type inequality  which is of independent interest.
\begin{theorem}\label{thm:Gro}
For $k\ge 1$, let $\bg\sim\normal(0,\id_{k}/k)$ be a vector with
i.i.d. centered normal entries with variance $1/k$, and define
$\alpha_k \equiv (\E\|\bg\|_2)^2$.

Then, for any symmetric matrix $\bM\in\reals^{n\times n}$, we have the inequalities
\begin{align}
\SDP(\bM) \ge \OPT_k(\bM) &\ge\alpha_k \SDP(\bM) - (1-\alpha_k)\,
  \SDP(-\bM) \, ,\label{eq:Gro}\\
 \OPT_k(\bM) &\ge \big(2-\alpha_k^{-1}\big)\SDP(\bM) - \big(\alpha_k^{-1}-1\big)\,
  \OPT_k(-\bM) \label{eq:GroBis}\, . 
\end{align}
\end{theorem}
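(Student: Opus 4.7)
The first inequality $\SDP(\bM)\ge\OPT_k(\bM)$ is immediate, since the rank constraint in $\OPT_k$ only shrinks the feasible set. For the substantive lower bound in (\ref{eq:Gro}), my plan is Gaussian random rounding. Let $\bX^{\star}=\bV^\sT\bV\in\PSD_1(n)$ be optimal for $\SDP(\bM)$, so $\bv_i\in\bbS^{n-1}$. Draw $\bG\in\reals^{k\times n}$ with i.i.d.\ entries $\normal(0,1/k)$ and define
\[
\bsigma_i \equiv \frac{\bG\bv_i}{\|\bG\bv_i\|_2}\in\bbS^{k-1}.
\]
The collection $\{\bsigma_i\}$ is feasible for $\OPT_k(\bM)$, so $\OPT_k(\bM)\ge \E\sum_{i,j}M_{ij}\<\bsigma_i,\bsigma_j\>$. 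By rotational invariance of the Gaussian, $\E\<\bsigma_i,\bsigma_j\>$ depends on $\bv_i,\bv_j$ only through $X^{\star}_{ij}=\<\bv_i,\bv_j\>$, so we may write $\E\<\bsigma_i,\bsigma_j\>=f_k(X^{\star}_{ij})$ for a universal $f_k:[-1,1]\to[-1,1]$ satisfying $f_k(\pm 1)=\pm 1$ and $f_k(-t)=-f_k(t)$ (the latter via the symmetry $\bG\to -\bG$).

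The argument rests on two structural facts about $f_k$. First, for any unit vectors $\bv_i$ in any dimension, the matrix $(f_k(\<\bv_i,\bv_j\>))_{ij}=\E[(\<\bsigma_i,\bsigma_j\>)_{ij}]$ is PSD as the expectation of a Gram matrix, so $f_k$ is a positive definite function on every sphere. Schoenberg's theorem then provides an absolutely convergent expansion $f_k(t)=\sum_{l\ge 0}c_l\,t^l$ with $c_l\ge 0$, and oddness kills the even $l$. Second, I plan to check by direct computation that the linear coefficient $c_1=f_k'(0)$ equals $\alpha_k$: conditioning on $\bsigma_i$, writing $\bG\bv_j=t\|\bG\bv_i\|\bsigma_i+\sqrt{1-t^2}\,\bh$ with $\bh$ an independent copy of $\bg\sim\normal(0,\id_k/k)$, decomposing $\bh$ into its components along and perpendicular to $\bsigma_i$, and differentiating at $t=0$, one obtains $f_k'(0)=(1-1/k)\,\E\|\bg\|\cdot\E\|\bg\|^{-1}$; a short Gamma-function identity using $\Gamma((k+1)/2)=((k-1)/2)\,\Gamma((k-1)/2)$ shows that this equals $(\E\|\bg\|)^2=\alpha_k$.

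Once $c_1=\alpha_k$ is identified, the residual $K(t)\equiv f_k(t)-\alpha_k t=\sum_{l\ge 3,\ l\text{ odd}}c_l t^l$ is itself a positive definite function on the sphere with $K(1)=1-\alpha_k$, so the matrix $\boldsymbol{K}\equiv(K(X^\star_{ij}))_{ij}$ satisfies $\boldsymbol{K}/(1-\alpha_k)\in\PSD_1(n)$. Combining,
\[
\OPT_k(\bM)\ge \sum_{i,j}M_{ij}f_k(X^{\star}_{ij}) = \alpha_k\<\bM,\bX^\star\>+\<\bM,\boldsymbol{K}\> \ge \alpha_k\SDP(\bM)-(1-\alpha_k)\SDP(-\bM),
\]
where the last step uses $\<\bM,\bY\>\ge -\SDP(-\bM)$ for every $\bY\in\PSD_1(n)$.

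Finally, (\ref{eq:GroBis}) is a purely algebraic consequence of (\ref{eq:Gro}): applying (\ref{eq:Gro}) with $\bM$ replaced by $-\bM$ and solving for $\SDP(-\bM)$ gives $\SDP(-\bM)\le\alpha_k^{-1}\OPT_k(-\bM)+(\alpha_k^{-1}-1)\SDP(\bM)$; substituting this upper bound back into (\ref{eq:Gro}) and collecting terms yields exactly the coefficients $2-\alpha_k^{-1}$ and $\alpha_k^{-1}-1$. I expect the main technical obstacle to be the identification $c_1=\alpha_k$: fixing the correct shrinkage constant is what makes the inequality sharp, and the identity $(1-1/k)\,\E\|\bg\|\,\E\|\bg\|^{-1}=(\E\|\bg\|)^2$ is the one quantitative input; everything else (rotational invariance, Schoenberg-type PSDness, and the algebraic derivation of (\ref{eq:GroBis})) is standard.
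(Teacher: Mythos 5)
Your argument is correct, uses the same Gaussian rounding $\bsigma_i=\Ga\bv_i/\|\Ga\bv_i\|_2$ as the paper, but analyzes the rounding by a genuinely different route. The paper follows Rietz's method: it introduces $z(\bw)\equiv(1-\alpha_k^{-1/2}\|\Ga\bw\|_2^{-1})\Ga\bw$ and proves the exact identity $\E\<\Ga\bu/\|\Ga\bu\|_2,\Ga\bv/\|\Ga\bv\|_2\>=\alpha_k\<\bu,\bv\>+\alpha_k\E\<z(\bu),z(\bv)\>$, so the residual is manifestly a positive-semidefinite kernel (a scalar product of random vectors in $L^2$) with diagonal $\E\|z(\bu)\|_2^2=\alpha_k^{-1}-1$, and the bound $\ge -\,(\alpha_k^{-1}-1)\,\SDP(-\bM)$ drops out. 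You reach the same structural conclusion — $f_k(t)=\alpha_k t+K(t)$ with $K$ a PD kernel of diagonal $1-\alpha_k$ — but by invoking Schoenberg's theorem for the expansion $f_k(t)=\sum_l c_l t^l$, $c_l\ge 0$, and then identifying $c_1=f_k'(0)=\alpha_k$. The paper's route is more elementary and self-contained (no Schoenberg, and the coefficient is obtained more cleanly from $\E\<\Ga\bu/\|\Ga\bu\|_2,\Ga\bv\>=\alpha_k^{1/2}\<\bu,\bv\>$, needing only $\E\|\bg\|_2=\alpha_k^{1/2}$ rather than a Gamma-function identity); yours is more conceptual and would generalize immediately to any universally positive-definite rounding kernel. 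One small slip worth fixing: the symmetry $\bG\to-\bG$ does \emph{not} give oddness of $f_k$, since it flips every $\bsigma_i$ at once and leaves $\<\bsigma_i,\bsigma_j\>$ unchanged; the correct substitution is to negate a single Gaussian column (e.g.\ $\bg_1\to-\bg_1$ in $f_k(t)=\E\<\bg_1/\|\bg_1\|_2,(t\bg_1+\sqrt{1-t^2}\,\bg_2)/\|t\bg_1+\sqrt{1-t^2}\,\bg_2\|_2\>$). This does not affect your proof, because oddness is never actually needed: $K(1)=1-\alpha_k\ge 0$ already follows from $c_l\ge 0$ and $\sum_l c_l=f_k(1)=1$, so you may simply drop the claim. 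The algebraic derivation of \eqref{eq:GroBis} from \eqref{eq:Gro} matches the paper exactly.
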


\begin{remark}
The upper bound in Eq.~(\ref{eq:Gro}) is trivial.
Further, it follows from Cauchy-Schwartz that $\alpha_k\in (0,1)$ for all
$k$. Also  $\|\bg\|^2_2$ is a chi-squared random variable with $k$
degrees of freedom and hence
\begin{align}
\alpha_k = \frac{2\Gamma((k+1)/2)^2}{k\Gamma(k/2)^2} = 1-\frac{1}{2k}
  +O(1/k^2)\, .\label{eq:AlphaK}
\end{align}
Substituting in Eq.~(\ref{eq:Gro}) we get, for all $k\ge k_0$  with
$k_0$ a
sufficiently large constant, and assuming $\SDP(\bM)>0$,
\begin{align}
\Big(1-\frac{1}{k}\Big)\SDP(\bM) -\frac{1}{k}\, |\SDP(-\bM)|
  \le \OPT_k(\bM) \le \SDP(\bM)\, .\label{eq:GroSimple}
\end{align}
In particular, if $|\SDP(-\bM)|$ is of the same order as $\SDP(\bM)$, we
conclude that $\OPT_k(\bM)$ approximates $\SDP(\bM)$ with a relative error of
order $O(1/k)$.
\end{remark}

The classical Grothendieck inequality concerns non-symmetric bilinear
forms \cite{grothendieck1996resume}.
A  Grothendieck inequality for symmetric matrices was established in
\cite{nemirovski1999maximization,megretski2001relaxations}
(see also \cite{alon2006quadratic} for generalizations)
and states that, for a constant $C$, 
\begin{align}
\OPT_1(\bM) \ge \frac{1}{C\log n}\,  \SDP(\bM)\, .
\end{align}
Higher-rank  Grothendieck inequalities were developed in the setting
of general graphs in
\cite{briet2010grothendieck,briet2010positive}. However, 
constant-factor approximations were not established for the present
problem (which corresponds to the the complete graph case in
\cite{briet2010grothendieck}).

Constant factor approximations exist for $\bM$ positive semidefinite 
\cite{briet2010positive}. 
We note that Theorem \ref{thm:Gro} implies the inequality of \cite{briet2010positive}. 
Using $\SDP(-\bM)\le -\xi_{\rm min}(\bM)$ in Eq.~(\ref{eq:Gro}),
we obtain the inequality of \cite{briet2010positive} for the positive semidefinite
matrix $\bM-\xi_{\rm min}(\bM)\id$. 
On the other hand, the result of \cite{briet2010positive} is too weak
for our applications.   We want to apply Theorem
\ref{thm:Gro} --among others-- to $\bM=  \bAc_G$ with $\bAc_G$ the
adjacency matrix of $G\sim\sG(n,d/n)$.
This matrix is non-positive definite, and in a dramatic
way with smallest eigenvalue satisfying $-\xi_{\rm
  min}(\bAc_G) \approx (\log n/(\log\log
n))^{1/2}\gg \SDP(-\bAc_G)$). 

In summary, we could not use the vast literature on Grothendieck-type
inequality to prove our main result, Theorem \ref{thm:Main}, which
motivated us to develop Theorem \ref{thm:Gro}.

Theorem \ref{thm:Gro} will allow to bound $|\SDP(\bM)-\OPT_k(\bM)|$ 
for $\bM$ either a centered adjacency matrix or a Gaussian matrix. The
next lemma bounds the `smoothing error' 
$|\Phi(\beta,k;\bM)-\OPT_k(\bM)|$.
\begin{lemma}\label{lemma:ZeroTemperature}
There exists an absolute constant $C$ such that for  any $\eps\in
(0,1]$ the following holds.
If  $\|\bM\|_{\infty\to 2} \equiv
\max\{\|\bM\bx\|_{2}:\;\; \|\bx\|_{\infty}\le 1\}\le L\sqrt{n}$, then
\begin{align}
\Big|\frac{1}{n}\Phi(\beta,k;\bM)-\frac{1}{n}\OPT_k(\bM)\Big|\le
  2L\eps\sqrt{k} +\frac{k}{\beta}\log\frac{C}{\eps}\, .\label{eq:TemperatureBound}
\end{align}
\end{lemma}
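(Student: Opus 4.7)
\medskip

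\noindent\textbf{Proof plan for Lemma \ref{lemma:ZeroTemperature}.}

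The upper bound $\Phi(\beta,k;\bM) \le \OPT_k(\bM)$ is immediate: since $\nu$ is a probability measure and the Hamiltonian $H(\bsigma) \equiv \sum_{i,j} M_{ij}\<\bsigma_i,\bsigma_j\>$ satisfies $H \le \OPT_k(\bM)$ pointwise on $(\bbS^{k-1})^n$, one has $\int e^{\beta H}\de\nu \le e^{\beta \OPT_k(\bM)}$, and taking $(1/\beta)\log$ gives the bound. So the content of the lemma is the matching lower bound, which I will obtain by the standard Laplace-type argument: localize the integral to a small neighborhood of a maximizer and pay two costs — a drop in $H$ (the $2L\eps\sqrt{k}$ term) and a log-volume cost (the $(k/\beta)\log(C/\eps)$ term).

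Fix a maximizer $\bsigma^* = (\bsigma_1^*,\dots,\bsigma_n^*)$ of $H$ on $(\bbS^{k-1})^n$ and, for $\eps\in(0,1]$, consider the product neighborhood
\begin{align*}
A_\eps \;=\;\bigl\{\bsigma\in(\bbS^{k-1})^n :\; \|\bsigma_i-\bsigma_i^*\|_2\le \eps \;\text{ for all } i\in[n]\bigr\}.
\end{align*}
The first task is to lower bound $H$ on $A_\eps$. Writing $\bdelta_i = \bsigma_i-\bsigma_i^*$, expand
\begin{align*}
H(\bsigma)-H(\bsigma^*) \;=\; \sum_{i,j} M_{ij}\bigl(\<\bdelta_i,\bsigma_j\>+\<\bsigma_i^*,\bdelta_j\>\bigr).
\end{align*}
Grouping by coordinate, let $\bsigma^{(a)},\bdelta^{(a)}\in\reals^n$ collect the $a$-th coordinates of $\bsigma_1,\dots,\bsigma_n$ and $\bdelta_1,\dots,\bdelta_n$. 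Each piece becomes $\sum_{a=1}^k \bdelta^{(a)\sT}\bM\bsigma^{(a)}$. Crucially $\|\bsigma^{(a)}\|_\infty \le 1$ because $\bsigma_i\in\bbS^{k-1}$, so the hypothesis $\|\bM\|_{\infty\to 2}\le L\sqrt n$ yields $\|\bM\bsigma^{(a)}\|_2\le L\sqrt n$. Cauchy--Schwarz in $\reals^n$ then gives $|\bdelta^{(a)\sT}\bM\bsigma^{(a)}| \le L\sqrt n\,\|\bdelta^{(a)}\|_2$, and Cauchy--Schwarz in $a\in[k]$ plus the bookkeeping identity $\sum_a\|\bdelta^{(a)}\|_2^2=\sum_i\|\bdelta_i\|_2^2 \le n\eps^2$ produces
\begin{align*}
\bigl|H(\bsigma)-H(\bsigma^*)\bigr| \;\le\; 2 L\sqrt n\,\sqrt k\,\|\bDelta\|_F \;\le\; 2nL\eps\sqrt k, \qquad \bsigma\in A_\eps.
\end{align*}

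The second task is to lower bound $\nu(A_\eps)$. By product structure, $\nu(A_\eps) = \prod_{i}\nu_{\bbS^{k-1}}(B_\eps(\bsigma_i^*))$ where $B_\eps(\bsigma^*)$ is a spherical cap of Euclidean radius $\eps$ (equivalently, $\<\bsigma,\bsigma^*\>\ge 1-\eps^2/2$). A standard estimate on the uniform measure of such caps gives $\nu_{\bbS^{k-1}}(B_\eps(\bsigma^*)) \ge (c\eps)^{k-1}$ for an absolute constant $c>0$ and all $\eps\in(0,1]$, $k\ge 1$ (trivially verified for $k=1$ since the cap contains $\bsigma^*$ itself with mass $1/2$). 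Hence $\log\nu(A_\eps)\ge n(k-1)\log(c\eps) \ge -nk\log(C/\eps)$ for a suitable absolute constant $C$.

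Combining, restrict the integral defining $\Phi(\beta,k;\bM)$ to $A_\eps$ to obtain
\begin{align*}
\Phi(\beta,k;\bM) \;\ge\; \frac{1}{\beta}\log\Bigl(e^{\beta(\OPT_k(\bM)-2nL\eps\sqrt k)}\,\nu(A_\eps)\Bigr) \;\ge\; \OPT_k(\bM) - 2nL\eps\sqrt k - \frac{nk}{\beta}\log\frac{C}{\eps},
\end{align*}
and dividing by $n$ gives \eqref{eq:TemperatureBound}. The only genuinely substantive step is the Hamiltonian stability bound, which is where the $\|\bM\|_{\infty\to 2}$ hypothesis enters in exactly the right way; the spherical-cap volume estimate is classical, and the localization argument is routine.
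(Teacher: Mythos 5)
Your argument is correct and, in all essentials, reproduces the paper's proof: both establish the trivial upper bound, localize the integral to the set $A_\eps = \{\max_i\|\bsigma_i-\bsigma_i^*\|_2\le\eps\}$ around a maximizer, bound the Hamiltonian fluctuation on $A_\eps$ by $2nL\eps\sqrt k$ using the $\|\bM\|_{\infty\to 2}$ hypothesis together with $\|\bsigma-\bsigma^*\|_F\le\eps\sqrt n$, and then lower bound $\nu(A_\eps)$ by a product of spherical-cap volumes $\ge(\eps/C)^{nk}$. The paper carries out the Hamiltonian stability estimate via $|H(\bsigma)-H(\bsigma^*)|\le 2\max\{\|\bM\bsigma\|_F,\|\bM\bsigma^*\|_F\}\|\bsigma-\bsigma^*\|_F$ and the cap-volume bound via the Beta distribution of the overlap, whereas you decompose coordinate-by-coordinate and quote the cap estimate; these are cosmetic variations of the identical argument.
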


\subsection{Interpolation}
\label{sec:Interpolation}

Our next step consists in comparing the adjacency matrix of random graph 
$G$ with a suitable Gaussian random matrix, and bound the error in the corresponding 
log-partition function $\Phi(\beta,k;\,\cdot\,)$. 

Let us recall the definition of Gaussian orthogonal ensemble
$\GOE(n)$. We have $\bW\sim \GOE(n)$ if $\bW\in\reals^{n\times n}$ is symmetric with
 $\{W_{i,j}\}_{1\le i\le j\le n}$ independent, with distribution $W_{ii}\sim\normal(0,2/n)$ and $W_{ij}\sim\normal(0,1/n)$ for $i<j$.
We then define, for $\lambda\ge 0$, the following \emph{deformed $\GOE$} matrix:
\begin{align}
\bB(\lambda) \equiv \frac{\lambda}{n}\, \bone\bone^{\sT}+ \bW\, ,\label{eq:Bdefinition}
\end{align}
where $\bW\sim\GOE(n)$. The argument $\lambda$ will be omitted if clear from the context.
The next lemma establishes the necessary comparison bound.
Note that we state it for $G\sim\sG(n,a/b,b/n)$ a random graph from the hidden partition model,
but it obviously applies to standard \ER random graphs by setting $a=b=d$.
\begin{lemma}\label{lemma:Interpolation}
Let $\bAc_G= \bA_G-(d/n)\bone\bone^{\sT}$ be the centered adjacency matrix of $G\sim\sG(n,a/n,b/n)$, whereby
$d= (a+b)/2$. Define $\lambda=(a-b)/2\sqrt{d}$. Then there exists an
absolute constant $n_0$
such that, if $n\ge \max(n_0,(15d)^2)$,
\begin{align}
\left|\frac{1}{n}\E\Phi\big(\beta,k;\bAc_G/\sqrt{d}\big)-\frac{1}{n}\E\Phi(\big(\beta,k;\bB(\lambda)\big)\right|\le
\frac{2\beta^2}{\sqrt{d}} +\frac{8\lambda^{1/2}}{d^{1/4}}\, .
\end{align}
\end{lemma}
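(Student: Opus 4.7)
The plan is to prove Lemma \ref{lemma:Interpolation} by a Lindeberg-type interpolation, swapping the entries of $\bAc_G/\sqrt{d}$ for those of $\bB(\lambda)$ one at a time and using the smoothness of $\bM\mapsto\Phi(\beta,k;\bM)$ to Taylor-expand in each entry. The first preparatory step is a symmetrization. Observe that $\Phi(\beta,k;\bM)=\Phi(\beta,k;\boldsymbol{S}\bM\boldsymbol{S})$ for any diagonal $\pm 1$ matrix $\boldsymbol{S}$, since the uniform measure $\de\nu$ on $(\bbS^{k-1})^n$ is invariant under $\bsigma_i\mapsto S_{ii}\bsigma_i$. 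Applying this with $\boldsymbol{S}=\mathrm{diag}(\boldsymbol{s})$, where $\boldsymbol{s}\in\{\pm 1\}^n$ is the community-assignment vector (taken as $\bone$ in the plain \ER case), we may replace $\bAc_G/\sqrt{d}$ by $\widetilde\bA:=\boldsymbol{S}\bAc_G\boldsymbol{S}/\sqrt{d}$. The off-diagonal entries of $\widetilde\bA$ are then conditionally independent given $\boldsymbol{s}$ and all share the common mean $(a-b)/(2n\sqrt{d})=\lambda/n$, exactly matching the mean of $\bB(\lambda)$; their variances equal $p_{ij}(1-p_{ij}/n)/(nd)=1/n+O(\lambda/(n\sqrt{d}))$ with $p_{ij}\in\{a,b\}$, and since $|\widetilde A_{ij}|\le 1/\sqrt{d}$ the third absolute moments satisfy $\E|\widetilde A_{ij}|^3=O(1/(n\sqrt{d}))$.

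Next, I implement the Lindeberg swap. Enumerate the $N=\binom{n}{2}$ off-diagonal pairs and set $\bM_0=\widetilde\bA$, $\bM_N=\bB(\lambda)$, with $\bM_e$ obtained from $\bM_{e-1}$ by exchanging the $e$-th entry. Writing $\bM_e^{-}$ for $\bM_{e-1}$ with the $e$-th entry set to zero, I Taylor-expand $\Phi(\bM_{e-1})-\Phi(\bM_e)$ in that entry around $\bM_e^{-}$ to third order. Direct computation from $\Phi=\beta^{-1}\log Z$ gives
\[
\partial_{M_{ij}}\Phi=\<\bsigma_i,\bsigma_j\>_\bM,\qquad \partial^2_{M_{ij}}\Phi=\beta\,\Var_\bM\<\bsigma_i,\bsigma_j\>,\qquad \partial^3_{M_{ij}}\Phi=\beta^2\,\E_\bM\bigl[(\<\bsigma_i,\bsigma_j\>-\<\cdot\>_\bM)^3\bigr],
\]
with pointwise bounds $1$, $\beta$ and $8\beta^2$ respectively (using $|\<\bsigma_i,\bsigma_j\>|\le 1$). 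Independence of the swapped entry from $\bM_e^{-}$ factors expectations: matching first moments kills the first-order term; the second-order term contributes $O(\beta\lambda/(n\sqrt{d}))$ per entry via the variance mismatch; and the third-order remainder contributes $O(\beta^2/(n\sqrt{d}))$ per entry. Summing $N\asymp n^2/2$ entries and dividing by $n$ yields the leading $O(\beta^2/\sqrt{d})$ contribution, in agreement with the first term of the stated bound.

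Diagonal entries contribute negligibly because they enter $\log Z$ only through the additive shift $\sum_i M_{ii}$ (using $\<\bsigma_i,\bsigma_i\>=1$), producing an error of order $O(\sqrt{d}/n + 1/\sqrt{n})$ on $(1/n)\E\Phi$. The main technical obstacle is tightening the second-order variance-mismatch bound: the naive Lindeberg estimate gives $O(\beta\lambda/\sqrt{d})$, which is too weak at large $\beta$ to match the $O(\lambda^{1/2}/d^{1/4})$ term claimed. Closing this gap calls for either (i) a Cauchy--Schwarz argument that exploits cancellation across entries in $\sum_e(\Var\widetilde A_e-\Var B_e)\,\E[\partial^2_{M_e}\Phi]$, or (ii) peeling off the rank-one mean $(\lambda/n)\bone\bone^\sT$ and interpolating it separately, with the residual error controlled through the quadratic form $\|\sum_i\bsigma_i\|_2^2$. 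Once this refinement is in place, the rest of the argument is a routine third-order Lindeberg swap.
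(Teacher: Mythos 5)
Your setup and the $O(\beta^2/\sqrt{d})$ third-order Lindeberg term are correct, and you have rightly flagged the variance mismatch as the crux, but the proposal leaves that gap open and the two fixes you float do not obviously close it. The paper resolves it with a third, intermediate Gaussian matrix that you don't introduce: it defines $\bD=\lambda\bv\bv^{\sT}+\bU$ where $\bU$ is Gaussian with entry-wise variances chosen to \emph{exactly} equal those of $\bAc_G/\sqrt{d}$ (namely $a[1-a/n]/(nd)$ within groups and $b[1-b/n]/(nd)$ across), and then splits the comparison into two steps. Step one, $\bAc_G/\sqrt{d}$ to $\bD$, is a Lindeberg swap with matching first \emph{and} second moments, so only the third-moment error $2\beta^2/\sqrt{d}$ survives. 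Step two, $\bD$ to $\bB$, is not done by Lindeberg at all: it uses the global Lipschitz bound $\left|\frac{1}{n}\Phi(\beta,k;\bM_1)-\frac{1}{n}\Phi(\beta,k;\bM_2)\right|\le\|\bM_1-\bM_2\|_{op}$ (which follows from $\|\bsigma\bsigma^\sT\|_*=n$), writes $\bB-\bD=\bZ_2-\bZ_1$ as a difference of two independent centered Gaussian matrices carrying the variance gaps, and bounds $\E\|\bZ_1\|_{op}+\E\|\bZ_2\|_{op}\le C\sqrt{(a-b)/d}=C\sqrt{2}\,\lambda^{1/2}/d^{1/4}$ by standard GOE operator-norm estimates. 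This is what produces the $8\lambda^{1/2}/d^{1/4}$ term with no $\beta$-dependence.

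Your suggestion (ii), peeling off the rank-one mean $(\lambda/n)\bone\bone^{\sT}$, is off target: after your diagonal symmetrization the means already match exactly, and removing them does nothing to the $\pm\lambda/(n\sqrt{d})$ variance discrepancy. Your suggestion (i), exploiting cancellation between the $+\lambda/(n\sqrt{d})$ (within-group) and $-\lambda/(n\sqrt{d})$ (across-group) variance gaps in $\sum_e(\Var\widetilde A_e-\Var B_e)\,\E[\partial^2_{M_e}\Phi]$, is not clearly viable either: $\partial^2_{M_{ij}}\Phi=\beta\,\Var_{\mu_\bM}\<\bsigma_i,\bsigma_j\>$ is nonnegative but has no reason to be constant in $(i,j)$ along the interpolation path, so the signed sum need not nearly cancel. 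The clean fix is the paper's: do Lindeberg only between matrices with matching second moments, and handle the remaining Gaussian-to-Gaussian variance change with an operator-norm coupling. If you wanted to stay within a single interpolation, you would at minimum need a uniform-in-$(i,j)$ control of $\E[\partial^2_{M_{ij}}\Phi]$ with error $o(1)$, which is itself a nontrivial concentration statement; the two-step route avoids that entirely.
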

Note that this lemma bounds the difference in expectation. We will use
concentration of measure to transfer this result to a bound holding
with high probability.

Interpolation (or `smart path') methods have a long history in
probability theory, dating back to Lindeberg's beautiful proof of the
central limit theorem \cite{lindeberg1922neue}.
Since our smoothing construction yields a log-partition function $\Phi(\beta,k;\bM)$,
our calculations are similar to certain proofs in statistical mechanics.
A short list of statistical-mechanics inspired results in probabilistic combinatorics includes
\cite{FranzLeone,FranzLeoneToninelli,BGT,panchenko2004bounds,GuerraToninelliDiluted}.
In our companion paper \cite{dembo2015extremal}, we used a similar approach
to characterize the limit value of the minimum bisection of \ER and random regular graphs.

\subsection{SDPs for Gaussian random matrices}
\label{sec:Gaussian}

The last part of our proof analyzes the Gaussian model
(\ref{eq:Bdefinition}). This type of random matrices have attracted
a significant amount of work within statistics (under the name of
`spiked model') and probability theory (as `deformed
Wigner --or GOE-- matrices'), aimed at characterizing their eigenvalues
and eigenvectors.
A very incomplete list of references includes
\cite{baik2005phase,feral2007largest,capitaine2011free,benaych2012large,bloemendal2013limits,pizzo2013finite,knowles2013isotropic}. 
A key phenomenon unveiled by these works is the so-called
\emph{Baik-Ben Arous-Pech\'e (or BBAP) phase transition}. In its
simplest form (and applied to the matrix of
Eq.~(\ref{eq:Bdefinition})) this predicts a phase transition in the
largest eigenvalue of $\bB(\lambda)$  
\begin{align}
\lim_{n\to\infty}\xi_1(\bB(\lambda)) = 
\begin{cases}
2 & \mbox{ if $\lambda\le 1$,}\\
\lambda+\lambda^{-1} & \mbox{ if $\lambda> 1$.}
\end{cases}\label{eq:BBAP}
\end{align}
(This limit can be interpreted as holding in probability.)
Here, we establish an analogue of this result for the SDP value.
\begin{theorem}[SDP phase transition for deformed GOE matrices]\label{thm:Gaussian}
Let $\bB=\bB(\lambda)\in\reals^{n\times n}$ be a symmetric matrix distributed
according to the model (\ref{eq:Bdefinition}). Namely $\bB=\bB^{\sT}$
with $\{B_{ij}\}_{i\le j}$ independent random variables, where
$B_{ij}\sim \normal(\lambda/n,1/n)$ for $1\le i<j\le n$
and $B_{ii}\sim \normal(\lambda/n,2/n)$ for $1\le i\le n$. Then
\begin{enumerate}
\item[$(a)$] If $\lambda\in [0,1]$, then for any $\eps>0$, we have
  $\SDP(\bB(\lambda))/n\in [2-\eps,2+\eps]$ with probability
  converging to one as $n\to\infty$.
\item[$(b)$] If $\lambda>1$, then there exists $\Delta(\lambda)>0$ such that
$\SDP(\bB(\lambda))/n\ge 2+\Delta(\lambda)$ with probability
  converging to one as $n\to\infty$.
\end{enumerate}
\end{theorem}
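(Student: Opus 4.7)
The proof splits naturally, with part (a) requiring matching upper and lower bounds and part (b) only a lower bound. My plan is to use the log-partition smoothing $\Phi(\beta,k;\bM)$ from Lemma~\ref{lemma:ZeroTemperature} as a differentiable proxy of $\SDP(\bM)$, exploit the rotational invariance of the GOE piece in $\bB(\lambda)=(\lambda/n)\bone\bone^{\sT}+\bW$ to compute expectations in closed form, and finally transfer in-expectation bounds into w.h.p.\ bounds using standard Gaussian Lipschitz concentration. For the latter, note that $\SDP$ is $n$-Lipschitz on $\reals^{n\times n}$ in Frobenius norm (because $\max_{\bX\in\PSD_1(n)}\|\bX\|_F\le n$ via $X_{ij}^2\le X_{ii}X_{jj}=1$), so with entries of variance $O(1/n)$ we get fluctuations of order $\sqrt{n}$, i.e.\ $o(n)$.

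\emph{Part (a).} For the upper bound I would use the elementary inequality $\SDP(\bM)\le n\,\xi_1(\bM)$ together with the BBAP edge result recalled in (\ref{eq:BBAP}): for $\lambda\in[0,1]$, $\xi_1(\bB(\lambda))\to 2$ in probability, so $\SDP(\bB(\lambda))/n\le 2+\eps$ w.h.p. For the lower bound, observe that $\<(\lambda/n)\bone\bone^{\sT},\bX\>=(\lambda/n)\bone^{\sT}\bX\bone\ge 0$ whenever $\bX\succeq 0$, hence $\SDP(\bB(\lambda))\ge\SDP(\bW)$. It therefore suffices to prove $\SDP(\bW)/n\ge 2-\eps$ w.h.p.\ for $\bW\sim\GOE(n)$. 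I would do this via the smooth surrogate: Lemma~\ref{lemma:ZeroTemperature} reduces this to controlling $(1/n)\Phi(\beta,k;\bW)$ for $\beta,k$ large (using $\|\bW\|_{\infty\to 2}=O(\sqrt{n})$ w.h.p.); by $O(n)$-invariance of $\bW$, the integral defining $\Phi$ reduces to one over the empirical overlap measure $Q_{ii'}=\<\bsigma_i,\bsigma_{i'}\>$, and a Laplace/Gaussian-integral (replica-symmetric) computation, in the spirit of Ambainis et al.\ for $\lambda=0$, yields $(1/n)\E\Phi(\beta,k;\bW)\to 2$ in the double limit $\beta,k\to\infty$. Gaussian concentration of $\Phi$ (Lipschitz constant $O(\beta n)$) finishes the lower bound.

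\emph{Part (b).} The target is to exhibit $\bX_*\in\PSD_1(n)$ with $\<\bB(\lambda),\bX_*\>\ge(2+\Delta(\lambda))n$. A natural family is the convex combination
\[
\bX_t=(1-t)\bY+t\,\bone\bone^{\sT},\qquad t\in[0,1],
\]
where $\bY\in\PSD_1(n)$ is a near-optimal witness for $\bW$ from part (a); both summands lie in $\PSD_1(n)$, so does $\bX_t$. Expanding,
\[
\tfrac{1}{n}\<\bB(\lambda),\bX_t\>=(1-t)\bigl[\tfrac{1}{n}\<\bW,\bY\>+\tfrac{\lambda}{n^2}\bone^{\sT}\bY\bone\bigr]+t\bigl[\lambda+\tfrac{1}{n}\<\bW,\bone\bone^{\sT}\>\bigr],
\]
and all cross-terms are non-negative. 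For $\lambda\ge 2$ the choice $t=1$ already gives $\ge\lambda>2$. For $\lambda\in(1,2]$ the extra gain must come from a more refined witness: I would instead use the top eigenvectors of $\bB(\lambda)$ itself. For $\lambda>1$ the top eigenvector $\bv_1$ satisfies $\<\bv_1,\bone/\sqrt{n}\>^2\to 1-\lambda^{-2}$ with $\xi_1(\bB)\to\lambda+\lambda^{-1}>2$, and its entries are delocalized at scale $1/\sqrt{n}$. Forming $\bX_*=\sum_{j\le k}c_j\bv_j\bv_j^{\sT}$ with nonnegative weights tuned (via the linear system $\sum_j c_j v_{j,i}^2=1$) to realize the unit-diagonal constraint yields $\<\bB(\lambda),\bX_*\>=\sum_{j\le k}c_j\xi_j$, where the weight on the outlier $\xi_1=\lambda+\lambda^{-1}$ produces the required surplus $\Delta(\lambda)>0$. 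Equivalently, one can appeal directly to the expected log-partition function $(1/n)\E\Phi(\beta,k;\bB(\lambda))$, whose replica-symmetric variational formula exhibits a non-trivial condensation branch for $\lambda>1$.

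\emph{Main obstacle.} The hard part is the identification of $\lim_{n\to\infty}(1/n)\E\Phi(\beta,k;\bB(\lambda))$ and its behaviour as $\beta,k\to\infty$: this is a genuine vector spin-glass free-energy calculation on a rank-one deformed Gaussian matrix, and the BBAP spectral transition at $\lambda=1$ must reappear as a sharp qualitative change in the optimizer of the associated variational problem (from uniformly spread overlaps for $\lambda\le 1$ to a condensed component aligned with $\bone$ for $\lambda>1$). The spectral upper bound, the rank-rounding via Theorem~\ref{thm:Gro}, and the Gaussian concentration are all either classical or already proven upstream; by contrast, pinning down $\Delta(\lambda)>0$ uniformly for $\lambda$ just above $1$ requires the full spin-glass analysis.
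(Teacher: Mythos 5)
Your upper bound in part (a) matches the paper, and your use of monotonicity of $\SDP(\bB(\lambda))$ in $\lambda$ to reduce the lower bound to $\lambda=0$ is exactly the paper's Lemma~\ref{lemma:Monotoniciy}. From there, however, the two arguments diverge sharply, and your version has real gaps.

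For the lower bound at $\lambda=0$, you propose to evaluate $\lim_n (1/n)\E\Phi(\beta,k;\bW)$ by a ``replica-symmetric'' Laplace computation using $O(n)$-invariance, then send $\beta,k\to\infty$. This is a genuine vector spin-glass free-energy calculation; making it rigorous (identifying the variational formula, justifying the saddle point, and exchanging the $n\to\infty$ and $\beta,k\to\infty$ limits uniformly) is a large project, not a step. The paper avoids it entirely by constructing an explicit feasible point: let $\bU$ collect the top $n\delta$ eigenvectors of $\bW$, set $D_{ii}=(\bU\bU^{\sT})_{ii}$, and take $\bX=\bD^{-1/2}\bU\bU^{\sT}\bD^{-1/2}\in\PSD_1(n)$. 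Since $\bU$ is Haar-distributed, $D_{ii}\approx\delta$ uniformly (Lemma~\ref{lemma:BoundProj}), so $\bX\approx\delta^{-1}\bU\bU^{\sT}$ and $\<\bW,\bX\>/n\gtrsim\xi_{n\delta}(\bW)\to\xi_*(\delta)\uparrow 2$. This is short and fully rigorous.

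For part (b), your first attempt $\bX_t=(1-t)\bY+t\bone\bone^{\sT}$ correctly handles $\lambda\ge 2$ but, as you note, not the critical window $\lambda\in(1,2]$. Your replacement idea --- take $\bX_*=\sum_{j\le k}c_j\bv_j\bv_j^{\sT}$ and choose nonnegative $c_j$ by ``the linear system $\sum_j c_j v_{j,i}^2=1$'' --- does not work as stated: that is $n$ equations in $k$ unknowns, generically infeasible for $k<n$, and for $k=n$ the only obvious solution is $c_j\equiv 1$, i.e.\ $\bX_*=\id$, which gives value $\approx\lambda/n+o(1)$, far below $2$. Perturbing the weights to favor $\bu_1$ while preserving the diagonal and nonnegativity is precisely where the difficulty lives, and you leave it unresolved, retreating to a spin-glass heuristic. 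The paper's missing idea is a different, nonlinear witness: cap the rescaled top eigenvector componentwise, $\varphi_i=R(\eps\sqrt{n}\,u_{1,i})$ (so $\bphi\bphi^{\sT}$ has diagonal entries $\varphi_i^2\le 1$), and fill the remaining diagonal budget with a reweighted bulk projector, $\bX=\bphi\bphi^{\sT}+\bD\bU\bU^{\sT}\bD$ where $\bU$ holds eigenvectors $\bu_2,\dots,\bu_{n\delta+1}$ and $D_{ii}=\sqrt{1-\varphi_i^2}/\|\bU^{\sT}\bfe_i\|_2$. The diagonal constraint then holds exactly by construction, the $\bphi\bphi^{\sT}$ term contributes $\approx\eps^2\xi_1>2\eps^2$, and the bulk term contributes $\approx(1-\eps^2)\cdot 2$ after controlling the distortion introduced by $\bD$. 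That control requires a law of large numbers for powers of eigenvector components (Lemma~\ref{lemma:LLN_u}), delocalization of bulk eigenvectors relative to $\bone$ (Lemma~\ref{lemma:ProjBulk}), and eigenvalue rigidity --- none of which appear in your sketch. In short, your proposal identifies the correct targets but lacks the witness construction that makes part (b) provable for $\lambda$ near $1$.
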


As mentioned above, we obviously have $\SDP(\bB)/n\le \xi_1(\bB)$. The
first part of this theorem (in conjunction with Eq.~(\ref{eq:BBAP}))
establishes that the upper bound is essentially tight of $\lambda\le
1$. On the other hand, we expect the eigenvalue upper bound not to be tight for
$\lambda>1$ \cite{OursReplicas}. Nevertheless, the second part of
our theorem establishes a phase transition taking place at
$\lambda=1$ as for the leading eigenvalue.

\begin{remark}
The phase transition in the leading eigenvalue has a high degree 
of universality. In particular, Eq.~(\ref{eq:BBAP}) remains correct if
the model (\ref{eq:Bdefinition}) is replaced by $\bB' =
\lambda\bv\bv^{\sT}+\bW$, with $\bv$ an arbitrary unit vector. 
On the other hand, we expect the phase transition in $\SDP(\bB')/n$ to
depend --in general-- on the vector $\bv$, and in particular on how
`spiky' this is.
\end{remark}
%
%

\section{Other results and generalizations}
\label{sec:Generalization}

While our was focused on a relatively simple model, the techniques
presented here allow for several generalizations. We discuss them
briefly here.

\subsection{Estimation}
\label{sec:Estimation}

For the sake of simplicity, we formulated
community detection as an \emph{hypothesis testing} problem. It is
interesting to  consider the associated \emph{estimation} problem,
that requires to estimate the hidden partition $V =S_1\cup S_2$. 

We encode the ground truth using the vector $\bxz\in\{+1,-1\}^n$, with
$x_{0,i}=+1$ if $i\in S_1$, and $x_{0,i} =-1$ if $i\in S_2$. An
estimator is a map\footnote{Earlier work sometimes assumes $\hbx:\cG_n\to \{+1,-1\}^n$, i.e. forbids the estimate $0$.
For our purposes, the two formulations are equivalent: we can always `simulate' $\hat{x}_i=0$ by letting $\hat{x}_i\in\{+1,-1\}$ uniformly at random.} 
$\hbx:\cG_n\to \{+1,0,-1\}^n$ with $\cG_n$ the space
of graphs over $n$ vertices. It is proved in
\cite{mossel2012stochastic} that no estimator is substantially better
than random guessing for $G\sim\sG(n,a/n,b/n)$, with
$\lambda=(a-b)/\sqrt{2(a+b)}<1$. More precisely, for $\lambda<1$, any
estimator achieves vanishing correlation with the ground truth:
$|\<\hbx(G),\bxz\>|=o(n)$ with high probability.

We construct a randomized SDP-based estimator $\xsdp(G)$ as follows
(we will denote expectation and probability with respect to tha
algorithm's randomness by $\Ealg(\,\cdot\,)$ and $\Palg(\,\cdot\,)$):
\begin{itemize}
\item[$(i)$] Partition the edge set $E=E_1\cup E_2$ by letting
  $(i,j)\in E_2$ independently for  each edge $(i,j)\in E$, with
  probability $\Palg\big((i,j)\in E_2\big)=\delta_n/(1+\delta_n)$,  $\delta_n= n^{-1/2}$, and
  $(i,j)\in E_1$ otherwise. Denote by $G_1=(V,E_1)$, and $G_2=(V,E_2)$
  the resulting graphs.
\item[$(ii)$] Compute an optimizer $\bX_*$ of the SDP
  (\ref{eq:SDP.DEF}), $\bM=\bAc_{G_1}$ (i.e. a matrix $\bX_*\in\PSD_1(n)$ such
  that $\<\bAc_{G_1},\bX_*\> = \SDP(\bAc_{G_1})$). 
\item[$(iii)$] Compute the eigenvalue decomposition $\bX_*=
  \sum_{i=1}^n\xi_i\bv_i\bv_i^{\sT}$, and let $\bv_i =
  (v_{i,1},v_{i,2},\dots,v_{i,n})$ denote the $i$-th eigenvector. For each $i,j\in [n]$ define $\hbx^{(i,j)}\in\{+1,0,-1\}$ 
by $\hx^{(i,j)}_\ell = \sign(\bv_{i})_{\ell}$ if $|v_{i,\ell}|\ge $
  = $|v_{i,j}|$ and $\hx^{(i,j)}_\ell =  0$ otherwise. (In words, $\hbx^{(i,j)}$ is obtained from $\bv_i$ by zeroing entries
with magnituude below $|v_{i,j}|$ and taking the sign of those above).
\item[$(iv)$]  Select $(I,J) = \arg\max_{i,j\in [n]}\<\hbx^{(i,j)},\bA_{G_2}\hbx^{(i,j)}\>$, and return $\xsdp(G) = \hbx^{(I,J)}$.
\end{itemize}
The next results implies that --for large bounded average degree $d$--
this estimator has a nearly optimal threshold.
\begin{theorem}\label{thm:Estimation}
Let $G\sim\sG(n,a/n,b/n)$ and assume, for some $\eps>0$,
$\lambda=(a-b)/\sqrt{2(a+b)} \ge 1+\eps$.
Then there exists $\Delta_{\est}=\Delta_{\est}(\eps)>0$ and $d_* =
d_*(\eps)>0$ such that, for all $d\ge d_*(\eps)$
\begin{align}
\prob\left(\frac{1}{n}|\<\xsdp(G),\bxz\>|\ge
  \Delta_{\est}(\eps)\right) \ge 1-C\, e^{-n^{1/2}/C}\, ,
\end{align}
with $\prob(\,\cdot\,)$ denoting expectation with respect to the
algorithm and the graph $G$, and $C=C(\eps)$ a constant.
\end{theorem}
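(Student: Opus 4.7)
The plan is to apply Theorem \ref{thm:SDP_Test} to the first piece $G_1$ in order to obtain an SDP optimizer $\bX_*$ that carries information about $\bxz$, then to use the conditionally independent sample $G_2$ to validate a candidate rounding of $\bX_*$. By step (i) of the algorithm, $G_1\sim \sG(n,a'/n,b'/n)$ with $a' = a/(1+\delta_n)$ and $b' = b/(1+\delta_n)$, so the effective signal-to-noise ratio $\lambda'_1 = (a'-b')/\sqrt{2(a'+b')} = \lambda/\sqrt{1+\delta_n}$ still exceeds $1+\eps/2$ for $n$ large. Theorem \ref{thm:SDP_Test} applied to $G_1$ then yields $\SDP(\bAc_{G_1})/n \geq 2\sqrt{d}(1+\delta_*)$ with probability at least $1 - Ce^{-n/C}$, where $\delta_*=\delta_*(\eps)>0$.

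Writing $\bAc_{G_1} = (\lambda\sqrt{d}/n)(1+o_n(1))\,\bxz\bxz^{\sT} + \bR$, where $\bR$ is the centered random fluctuation whose SDP value satisfies $\SDP(\bR)/n \leq 2\sqrt{d}+o_d(\sqrt{d})$ w.h.p. (by the interpolation/Gaussian analysis underlying Theorem \ref{thm:Main}), one concludes $\<\bxz\bxz^{\sT},\bX_*\> \geq c_1(\eps)\,n^2$ for all $d\geq d_*(\eps)$. Diagonalizing $\bX_* = \sum_k\xi_k\bv_k\bv_k^{\sT}$ with $\sum_k\xi_k = n$ and $\sum_k\<\bv_k,\bxz\>^2 = n$, a two-step pigeonhole argument (partitioning eigenvalues at level $(c_1/2)n$ and using that at most $2/c_1$ eigenvalues exceed this level) produces an index $i^*$ with $\xi_{i^*}\geq c_2(\eps)n$ and $\<\bv_{i^*},\bxz\>^2\geq c_3(\eps)n$. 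The diagonal constraint $X_{ii}=1$ forces $\|\bv_{i^*}\|_\infty \leq 1/\sqrt{\xi_{i^*}}\leq 1/\sqrt{c_2 n}$, so $\bv_{i^*}$ is delocalized. Combining the layer-cake identity $\<\bv_{i^*},\bxz\> = \int_0^{\|\bv_{i^*}\|_\infty}\<\hbx^{(i^*,t)},\bxz\>\,dt$, where $\hbx^{(i^*,t)} = \sign(\bv_{i^*})\cdot\ind\{|\bv_{i^*}|\geq t\}$, with the $\ell^\infty$ bound produces a threshold $t^* = |v_{i^*,j^*}|$ such that $|\<\hbx^{(i^*,j^*)},\bxz\>|\geq c_4(\eps)\,n$; the good candidate thus lies in the set $\{\hbx^{(i,j)}\}_{i,j\in[n]}$.

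For the validation step, $G_2$ is conditionally independent of $G_1$ given $\bxz$, so for each fixed $\hbx\in\{-1,0,+1\}^n$, the quantity $\<\hbx,\bA_{G_2}\hbx\>$ is a sum of independent Bernoullis with mean $\mu(\hbx) = \frac{\delta_n}{1+\delta_n}\bigl[(d/n)\<\hbx,\bone\>^2 + (\lambda\sqrt d/n)\<\hbx,\bxz\>^2\bigr]$. Bernstein's inequality plus a union bound over the $n^2$ candidates gives uniform concentration at scale $O((\delta_n d n)^{1/2})=O(n^{1/4}\sqrt{d})$ with probability $1-Ce^{-n^{1/2}/C}$; this is the source of the stated tail. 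The principal technical obstacle will be the parity bias $(d/n)\<\hbx,\bone\>^2$ in $\mu(\hbx)$, which a priori could push the $\arg\max$ toward candidates aligned with $\bone$ rather than with $\bxz$. I would handle this by showing that the SDP optimizer $\bX_*$ may be taken to satisfy $\bone^{\sT}\bX_*\bone = o(n^2)$ — for instance by combining the degree concentration $\|\bAc_{G_1}\bone\|_2 = O(\sqrt{dn})$ with a dual-feasibility/perturbation argument that penalizes large $\bone^{\sT}\bX\bone$ — which in turn bounds $|\<\hbx^{(i,j)},\bone\>|$ uniformly in $(i,j)$. Given this control, $\mu(\hbx^{(i^*,j^*)})\gtrsim c_5(\eps)\,\delta_n\,\sqrt d\,n$ dominates both the residual parity contribution and the Bernstein fluctuations for any competing candidate $\hbx^{(i,j)}$ with $|\<\hbx^{(i,j)},\bxz\>| < \Delta_{\est}\,n$, forcing $|\<\xsdp(G),\bxz\>|\geq \Delta_{\est}(\eps)\,n$ with the claimed probability.
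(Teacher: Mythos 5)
Your overall blueprint coincides with the paper's: run the SDP on $G_1$, use the comparison-to-Gaussian machinery to bound $\SDP(\bE)/n$ by $2+o_d(1)$, deduce $\<\bxz\bxz^{\sT},\bX_*\>\ge c_1(\eps)n^2$, extract a well-correlated eigendirection, round it to $\{+1,0,-1\}$, and validate on the held-out sample $G_2$ via concentration plus a union bound over the $n^2$ candidates. Your eigenvector-extraction step, however, is an unnecessarily heavy variant of what the paper does. You insist on finding an $i^*$ with \emph{both} $\xi_{i^*}\ge c_2 n$ and $\<\bv_{i^*},\bxz\>^2\ge c_3 n$ so that delocalization $\|\bv_{i^*}\|_\infty\le\xi_{i^*}^{-1/2}$ feeds a layer-cake truncation; this does work, but the paper needs only the one-step pigeonhole (using $\sum_k\xi_k=n$, some $\<\bv_{I_*},\bxz\>^2/n\ge\Delta_2$) followed by Lemma \ref{lemma:RoundingPM1}, which converts a lower bound on the correlation together with the trivial normalization $\|\bv_{I_*}\|_2^2=1$ (i.e. $\E[Y^2]=1$) into a threshold $t_*$ with $\frac{1}{n}\<\bxz,s_{t_*}(\bv_{I_*})\>\ge\Delta_2^2/4$. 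No $\ell^\infty$ control on the eigenvector is required; the truncation in the rounding lemma is done at level $T=2/\eps$ in the $Z=X_0Y$ variable and is paid for by the second moment, not by a sup-norm bound.

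The genuine gap in your proposal is the handling of the parity bias in the validation step. You correctly observe that $\<\hbx,\bA_{G_2}\hbx\>$ has a mean containing a $(d/n)\<\hbx,\bone\>^2$ term that could dominate the signal, and you propose to neutralize it by proving $\bone^{\sT}\bX_*\bone=o(n^2)$ via a perturbation/dual-feasibility argument. This fix is (a) unproved and not at all clear — nothing in the SDP prevents $\bX_*$ from placing nonnegligible mass along $\bone$, since the objective $\<\bAc_{G_1},\bone\bone^{\sT}/n\>=o(d)$ is merely unhelpful, not penalizing — and, more seriously, (b) insufficient even if true: $\bone^{\sT}\bX_*\bone=o(n^2)$ controls $\sum_k\xi_k\<\bv_k,\bone\>^2$, not the quantities $\<\hbx^{(i,j)},\bone\>$ you need, because the map $\bv_i\mapsto\hbx^{(i,j)}=s_{|v_{i,j}|}(\bv_i)$ is a nonlinear thresholding that need not preserve near-orthogonality to $\bone$ (an eigenvector with $\<\bv_i,\bone\>=0$ can have a sign pattern with $|\<\hbx^{(i,j)},\bone\>|=\Theta(n)$). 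The paper avoids the issue entirely: its validation lemma concentrates $\<\hbx^{(i,j)},\bAc_{G_2}\hbx^{(i,j)}\>$ — the \emph{centered} adjacency matrix of $G_2$ — around $\frac{\lambda\sqrt{d}}{2n}\<\hbx^{(i,j)},\bxz\>^2-\frac{\lambda\sqrt{d}}{2}$, so the $\<\hbx,\bone\>^2$ contribution is subtracted off exactly and never enters the argmax comparison. Your validation step should be rewritten to use $\bAc_{G_2}$ (consistently with the centered-matrix formalism used everywhere else in the paper), at which point the Bernstein/Azuma concentration and union bound you describe close the argument without any auxiliary claim about $\bone^{\sT}\bX_*\bone$.
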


\subsection{Robustness}

 Consider the problem of testing whether the
graph $G$ has a community structure, i.e. whether
$G\sim\sG(n,a/n,b/n)$ or $G\sim\sG(n,d/n)$, $d=(a+b)/2$.
The next result establishes that the  SDP-based test of Section
\ref{sec:MainPartition} is robust with respect to adversarial
perturbations of these models. Namely, an adversary can arbitrarily
modify $o(n)$ edges of these graphs, without changing the detection threshold.
\begin{corollary}\label{coro:Robustness}
Let $\prob_0$ the law of $G\sim\sG(n,d/n)$, and $\prob_1$ be the law
of $G\sim\sG(n,a/n,b/n)$. Denote by $\tprob_0$, $\tprob_1$ be any two
distributions over graphs with vertex set $V=[n]$. Assume that, for each
$a\in \{0,1\}$, the following happens: there exists a coupling
$\qprob_a$ of $\prob_a$ and $\tprob_a$ such that, if $(G,\tG)\sim
\qprob_a$, then $|E(G)\triangle E(\tG)|=o(n)$ with high probability. 

Then, under the same assumptions of Theorem \ref{thm:SDP_Test},
 the SDP-based test (\ref{eq:TestDef}) distinguishes $\tprob_0$ from
 $\tprob_1$ with error probability vanishing as $n\to\infty$.
\end{corollary}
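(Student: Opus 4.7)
The plan is to exploit a straightforward Lipschitz bound for $\bM\mapsto\SDP(\bM)$ on $\PSD_1(n)$ and combine it with a robust margin in Theorems \ref{thm:Main} and \ref{thm:SDP_Test}. No new conceptual ingredient is required.

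The first step is a Lipschitz bound in entrywise $\ell_1$ norm. For any $\bX\in\PSD_1(n)$ one has $|X_{ij}|\le\sqrt{X_{ii}X_{jj}}=1$, hence
\begin{align*}
\big|\SDP(\bM_1)-\SDP(\bM_2)\big|\;\le\;\max_{\bX\in\PSD_1(n)}\big|\<\bM_1-\bM_2,\bX\>\big|\;\le\;\sum_{i,j=1}^n\big|(\bM_1-\bM_2)_{ij}\big|\,.
\end{align*}
Since the same deterministic shift $(d/n)\bone\bone^{\sT}$ (with $d=(a+b)/2$) appears in $\bAc_G$ and in the analogous centering for $\tG$, it cancels in the difference and one obtains
\begin{align*}
\big|\SDP(\bA_G-(d/n)\bone\bone^{\sT})-\SDP(\bA_{\tG}-(d/n)\bone\bone^{\sT})\big|\;\le\;\sum_{i,j=1}^n\big|(\bA_G)_{ij}-(\bA_{\tG})_{ij}\big|\;\le\;2\,|E(G)\triangle E(\tG)|\,,
\end{align*}
because $\bA_G-\bA_{\tG}$ has at most $2|E(G)\triangle E(\tG)|$ non-zero entries, each $\pm 1$.

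The second step is to bake a margin into the test. Let $\delta_*=\delta_*(\eps)$ and $d_*=d_*(\eps)$ be the constants given by Theorem \ref{thm:SDP_Test}, fix $d\ge d_*$, and apply the \emph{robust} test $T(\,\cdot\,;\delta_*/2)$. By Theorem \ref{thm:Main}, under $\prob_0$ we have $\SDP(\bAc_G)\le 2n\sqrt{d}+n\,o_d(\sqrt{d})$ w.h.p., which for $d$ large enough is at most $n(2+\delta_*/2)\sqrt{d}$, i.e.\ a distance $\ge n\sqrt{d}\,\delta_*/2$ below the new threshold $n(2+\delta_*)\sqrt{d}$. By Theorem \ref{thm:SDP_Test} applied with parameter $\delta_*$, under $\prob_1$ we have $\SDP(\bAc_G)\ge 2n(1+\delta_*)\sqrt{d}$ w.h.p., a distance $\ge n\sqrt{d}\,\delta_*$ above the same new threshold. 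Under either hypothesis, the SDP value is separated from the test threshold by $\Omega(n\sqrt{d})$.

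To conclude, fix $a\in\{0,1\}$, draw $(G,\tG)\sim\qprob_a$, and intersect $(i)$ the event of Theorem \ref{thm:Main} or \ref{thm:SDP_Test} that $\SDP(\bAc_G)$ sits on the correct side of the threshold with margin $\Omega(n\sqrt{d})$ with $(ii)$ the event $\{|E(G)\triangle E(\tG)|=o(n)\}$ guaranteed by the coupling. Both events have probability $1-o(1)$, so their intersection does as well. On this intersection the Lipschitz bound yields $\SDP(\bAc_{\tG})=\SDP(\bAc_G)+o(n)$, which for $n$ large enough remains on the correct side of the threshold, so $T(\tG;\delta_*/2)$ returns the correct answer. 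A union bound over $a\in\{0,1\}$ finishes the proof. The only point of care --- and it is not really an obstacle --- is to verify that Theorems \ref{thm:Main} and \ref{thm:SDP_Test} produce a \emph{strictly positive} asymptotic margin, which they do because we deliberately run the test with threshold parameter $\delta_*/2$ strictly smaller than the $\delta_*$ for which the underlying test succeeds.
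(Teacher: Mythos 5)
Your proposal is correct and follows essentially the same route as the paper: the paper's proof invokes the per-edge Lipschitz bound from the proof of Lemma \ref{lemma:ConcentrationSDP} (namely that adding or deleting a single edge changes $\SDP(\bAc_G)$ by $O(1)$), uses the coupling to conclude $|\SDP(\bAc_G)-\SDP(\bAc_{\tG})|=o(n)$ with high probability, and then says the claim ``follows immediately'' from the threshold theorems. Your version makes the implicit margin step explicit---observing that the proofs of Theorems \ref{thm:Main} and \ref{thm:SDP_Test} actually separate the SDP value from the threshold by $\Omega(n\sqrt{d})$, and running the test at $\delta_*/2$ so that an $o(n)$ perturbation cannot flip the answer---which is the right thing to spell out and is indeed what the paper is tacitly relying on.
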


By comparison,  spectral methods such as the one of
\cite{bordenave2015non} appear to be fragile to 
an adversarial perturbation of $o(n)$ edges \cite{OursReplicas}.

\subsection{Multiple communities} 

The hidden partition model of
Eq.~(\ref{eq:HiddenPart}) can be naturally generalized to the case of
$r>2$ hidden communities. Namely, we define the distribution
$\sG_r(n,a/n,b/n)$  over graphs as follows. The
vertex set $[n]$ is partitioned uniformly at random into $r$ subsets
$S_1$, $S_2$, \dots, $S_r$ with
$|S_i|=n/r$. Conditional on this partition, edges are independent with
\begin{align}
\prob_1\big((i,j)\in E|\{S_\ell\}_{\ell\le r}\big) = \begin{cases}
a/n & \mbox{ if $\{i,j\}\subseteq S_\ell$ for some $\ell\in[r]$,}\\
b/n & \mbox{ otherwise.}
\end{cases}\label{eq:rHiddenPart}
\end{align}
The resulting graph has average degree $d = [a+(r-1)b]/r$.
The case studied above  (hidden bisection) is recovered by setting
$r=2$ in this definition:  $\sG(n,a/n,b/n)= \sG_2(n,a/n,b/n)$. 
Of course, this model can be generalized further by allowing for $r$
unequal subsets, and a generic $r\times r$ matrix of edge
probabilities \cite{holland,abbe2015community,hajek2015achieving}.

Given a single realization of the graph $G$, we would like to test
whether  $G\sim\sG(n,d/n)$ (hypothesis $0$), or
$G\sim\sG_r(n,a/n,b/n)$ (hypothesis $1$). 
We use the same SDP relaxation already introduced in
Eq. (\ref{eq:SDP.DEF}), and the test  $T(\,\cdot\,;\delta)$ defined in Eq.~(\ref{eq:TestDef}).
This is particularly appealing because it does not require knowledge
of the number of communities $r$.
\begin{theorem}\label{thm:SDP_Test_r}
Consider the problem of distinguishing $G\sim\sG_r(n,a/n,b/n)$ from
$G\sim\sG(n,d/n)$, $d = (a+(r-1)b)/r$.
Assume, for some $\eps>0$,
\begin{align}
\frac{a-b}{\sqrt{r(a+(r-1)b)}} \ge 1+\eps\, .\label{eq:ConditionFactor1_r}
\end{align}
Then there exists $\delta_*=\delta_*(\eps,r)>0$ and $d_* = d_*(\eps,r)>0$
such that the following holds. If $d\ge d_*$, then the
SDP-based test $T(\,\cdot\,;\delta_*)$ succeeds 
with error probability probability  at most
$Ce^{-n/C}$ for $C=C(a,b,r)$ a constant.
\end{theorem}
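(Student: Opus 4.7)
The proof follows the Smooth--Interpolate--Analyze scheme used for Theorem \ref{thm:SDP_Test}, with each step adapted to the rank-$(r-1)$ planted structure. Under Hypothesis 0, $G\sim\sG(n,d/n)$ and Theorem \ref{thm:Main} directly yields $\SDP(\bAc_G)/n = 2\sqrt{d}+o_d(\sqrt{d})$ with high probability, so the test outputs $0$ with high probability for any constant $\delta_*>0$. The content of the theorem is therefore the Hypothesis 1 bound: for $\lambda_r = (a-b)/\sqrt{r(a+(r-1)b)} \ge 1+\eps$, one must show $\SDP(\bAc_G)/n \ge 2(1+\delta_*)\sqrt{d}$ with high probability, for some $\delta_*=\delta_*(\eps,r)>0$.

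To identify the Gaussian target, condition on the random partition $\{S_\ell\}_{\ell=1}^r$ and set $\bv^{(\ell)}=\bone_{S_\ell}-\bone/r$. For $i\neq j$,
\begin{align*}
\E\big[\bAc_G \,\big|\, \{S_\ell\}\big]_{ij} = \frac{a-b}{n}\sum_{\ell=1}^r v^{(\ell)}_i v^{(\ell)}_j,
\end{align*}
and a direct computation shows that the matrix $\bV\bV^\sT$ with $\bV=[\bv^{(1)}|\cdots|\bv^{(r)}]$ has spectrum $\{n/r\}^{(r-1)}\cup\{0\}$, with null space containing $\bone$. Thus $\bP = (r/n)\bV\bV^\sT$ is the orthogonal projector onto $\mathrm{span}\{\bv^{(\ell)}\}$, of rank $r-1$, with constant diagonal $P_{ii}=(r-1)/n$, and $\E[\bAc_G/\sqrt{d}\,|\,\{S_\ell\}] = \lambda_r\bP+O(1/n)$. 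The natural Gaussian analog (conditional on the partition) is therefore $\bB_r = \bW+\lambda_r \bP$, with $\bW\sim\GOE(n)$. Note that $\lambda_r$ coincides precisely with the left-hand side of (\ref{eq:ConditionFactor1_r}).

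Steps 1 and 2 extend with only bookkeeping changes: Lemma \ref{lemma:ZeroTemperature} is deterministic in the matrix, and the Lindeberg-type interpolation underlying Lemma \ref{lemma:Interpolation} depends only on matching conditional first and second moments of the entries, which is satisfied here up to the usual $O(1/(n\sqrt{d}))$ Bernoulli--Gaussian discrepancy. Applying the interpolation conditionally on the partition and then averaging yields an analog
\begin{align*}
\Big|\tfrac{1}{n}\E\Phi(\beta,k;\bAc_G/\sqrt{d}) - \tfrac{1}{n}\E\Phi(\beta,k;\bB_r)\Big| \le \frac{C\beta^2}{\sqrt{d}} + o_n(1),
\end{align*}
and by permutation-invariance of $\GOE$ and of the uniform partition, $\Phi(\beta,k;\bB_r)$ has the same law whether the partition is random or fixed to a canonical one $S_\ell^\star = \{(\ell-1)n/r+1,\dots,\ell n/r\}$. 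Concentration of $\Phi$ (Gaussian Poincar\'e on the noise side, bounded differences on the graph side) then promotes expectation bounds to high-probability bounds with the desired $Ce^{-n/C}$ rate, exactly as in the $r=2$ case.

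The main obstacle, and the only genuinely new ingredient, is the rank-$(r-1)$ analog of Theorem \ref{thm:Gaussian}(b): for a fixed rank-$(r-1)$ projector $\bP$ with constant diagonal $(r-1)/n$ and $\lambda>1$,
\begin{align*}
\frac{1}{n}\SDP(\bW+\lambda\bP) \ge \lambda + \lambda^{-1} - o_n(1).
\end{align*}
I would prove this by an explicit primal construction. Let $\bP_*$ be the spectral projector onto the top $r-1$ eigenvectors of $\bW+\lambda\bP$, let $\bD$ be diagonal with $D_{ii}=(\bP_*)_{ii}$, and set $\bY = \bD^{-1/2}\bP_*\bD^{-1/2}$. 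Then $\bY\succeq 0$ and $Y_{ii}=1$ by construction, so $\bY\in\PSD_1(n)$; moreover $\<\bW+\lambda\bP,\bY\> = \Tr\bigl(\bD^{-1/2}(\bW+\lambda\bP)\bD^{-1/2}\,\bP_*\bigr)$, which reduces to $\tfrac{n}{r-1}\sum_{\ell=1}^{r-1}\xi_\ell(\bW+\lambda\bP)$ up to errors controlled by $\max_i|(\bP_*)_{ii}-(r-1)/n|$. The BBAP transition for rank-$(r-1)$ perturbations places each of the $r-1$ spike eigenvalues at $\lambda+\lambda^{-1}$ asymptotically, yielding the bound. The hard part is proving uniform entrywise delocalization of $\bP_*$, an isotropic law for the top spectral projector of a spiked deformed GOE in the vein of Knowles--Yin and Benaych-Georges--Nadakuditi; delocalization is favored by the fact that $\bP$ itself has constant diagonal so no site is distinguished. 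A softer fallback, avoiding full isotropic laws, is to use $\bP\succeq \bu\bu^\sT$ for any unit $\bu\in\mathrm{range}(\bP)$ (whence $\SDP(\bW+\lambda\bP)\ge \SDP(\bW+\lambda\bu\bu^\sT)$ because $\bP-\bu\bu^\sT\succeq 0$) and to take $\bu=\bv^{(1)}/\|\bv^{(1)}\|$, whose entries take only two values, both of order $n^{-1/2}$; this reduces the Gaussian analysis to a rank-one spike along a delocalized but non-uniform direction, to which the proof of Theorem \ref{thm:Gaussian}(b) may be adapted with essentially cosmetic changes.
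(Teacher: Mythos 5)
Your Smooth--Interpolate framework matches the paper: the Gaussian target you derive, $\bB_r=\bW+\lambda_r\bP$ with $\bP$ the rank-$(r-1)$ projector of constant diagonal $(r-1)/n$, is exactly the paper's $\bB(\lambda,r)=\lambda\bB_0(r)+\bW$, and the conditioning/interpolation bookkeeping is as you say. The gap is in the Analyze step, and it is a real one, not a technicality.

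Your primary construction $\bY=\bD^{-1/2}\bP_*\bD^{-1/2}$, where $\bP_*$ is the spectral projector onto the top $r-1$ eigenvectors and $D_{ii}=(\bP_*)_{ii}$, relies on the diagonal of $\bP_*$ being close to $(r-1)/n$. This is false. Even for $r=2$ and $\lambda>1$, the top eigenvector decomposes as $\bu_1=z_1\bv+\sqrt{1-z_1^2}\,\bup_1$ with $z_1\to\sqrt{1-\lambda^{-2}}$ and $\bup_1$ asymptotically Haar on the orthogonal complement, so that $(\bP_*)_{ii}=u_{1,i}^2\approx\frac{1}{n}\bigl(z_1+\sqrt{1-z_1^2}\,g_i\bigr)^2$ with $g_i$ approximately standard Gaussian; these diagonal entries fluctuate by $O(1)$ relative factors, are arbitrarily close to zero at many sites, and attain a maximum of order $(\log n)/n$. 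The error term ``controlled by $\max_i|(\bP_*)_{ii}-(r-1)/n|$'' is therefore not small, and no isotropic law can save the construction: $\bP$ having a flat diagonal does not flatten the Gaussian part of the perturbed eigenvectors. Concretely, for $r=2$ your $\bY$ is just $\sign(\bu_1)\sign(\bu_1)^{\sT}$, a rank-one $\pm1$ matrix, and $\<\bB,\bY\>/n$ has no reason to be near $\lambda+\lambda^{-1}$. Indeed the paper explicitly expects $\SDP(\bB(\lambda))/n<\lambda+\lambda^{-1}$ for $\lambda>1$, so the conclusion you state for this construction is almost certainly false; what is needed and what the paper proves is only $\SDP/n\ge 2+\Delta(\lambda,r)$ for some $\Delta>0$, and that requires capping the eigenvector entries and using the bulk eigenvectors to re-normalize, as in Theorem \ref{thm:Gaussian}(b).

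Your fallback does work, and it is close in spirit to what the paper does. The monotonicity step $\SDP(\bW+\lambda\bP)\ge\SDP(\bW+\lambda\bu\bu^{\sT})$ for unit $\bu\in\mathrm{range}(\bP)$ is correct, and the resulting spike direction $\bu=\bv^{(1)}/\|\bv^{(1)}\|_2$ satisfies $\|\bu\|_\infty\le\sqrt{(r-1)/n}$, so the $r=2$ witness construction (row-wise capping of the top eigenvector plus the diagonal-corrected bulk projector) and the supporting lemmas (Lemma \ref{lemma:LLN_u} via rotation invariance around $\bu$, Lemmas \ref{lemma:Perp} and \ref{lemma:Ui} with constants now depending on $r$) carry over. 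The paper instead extends the capping witness directly to rank $r-1$, taking $\bPsi=R(\eps\sqrt{n}\bV)$ applied row-by-row to the matrix of top $r-1$ eigenvectors and setting $\bX=\bPsi\bPsi^{\sT}+\bD\bU\bU^{\sT}\bD$; this avoids the reduction to rank one but is the same style of argument. Either route yields only $2+\Delta$, not $\lambda+\lambda^{-1}$. The fallback should therefore be promoted to the main argument, and the primary BBAP/isotropic-law construction discarded.
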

\begin{remark}
In earlier work, a somewhat tighter relaxation is sometimes used,
including the additional constraint $X_{ij}\ge -(r-1)^{-1}$ for all
$i\neq j$. The simpler relaxation used here is however sufficient for
proving Theorem \ref{thm:SDP_Test_r}.
\end{remark}

\begin{remark}
The threshold established in Theorem \ref{thm:SDP_Test_r} coincides (for large degrees) with the
one of spectral methods using non-backtracking random walks \cite{bordenave2015non}. However, for $k\ge 4$
there appears to be a gap between 
general statistical tests and what is achieved by polynomial time algorithms \cite{decelle2011asymptotic,chen2014statistical}.
\end{remark}

\subsection*{Acknowledgments}

A.M. was partially supported by NSF grants CCF-1319979 and DMS-1106627 and the
AFOSR grant FA9550-13-1-0036. 
S.S was supported by 
the William R. and Sara Hart Kimball Stanford Graduate Fellowship.

\newpage

\bibliographystyle{amsalpha}

\newcommand{\etalchar}[1]{$^{#1}$}
\providecommand{\bysame}{\leavevmode\hbox to3em{\hrulefill}\thinspace}
\providecommand{\MR}{\relax\ifhmode\unskip\space\fi MR }
\providecommand{\MRhref}[2]{%
  \href{http://www.ams.org/mathscinet-getitem?mr=#1}{#2}
}
\providecommand{\href}[2]{#2}

%
%
%

\newpage

\appendix

\section{Proofs of Theorem \ref{thm:Main} and Theorem
  \ref{thm:SDP_Test} (main theorems)}
\label{sec:ProofMain}

In this Section we prove Theorem \ref{thm:Main} and Theorem
  \ref{thm:SDP_Test} using Theorems \ref{thm:Gro}, \ref{thm:Gaussian} and Lemmas
  \ref{lemma:ZeroTemperature}, \ref{lemma:Interpolation}. The proofs
  of the latter are presented in Appendices \ref{app:ProofGro},
  \ref{app:ProofZeroT}, \ref{sec:ProofInterpolation},
 \ref{app:ProofGaussianA}, \ref{app:ProofGaussianB}. 

We begin by proving a general approximation result, and then
obtain Theorem \ref{thm:Main} and Theorem
  \ref{thm:SDP_Test} as consequences.

\subsection{Three technical lemmas}
\label{sec:MainTech}
\begin{lemma}\label{lemma:ConcentrationPhi}
Let $G\sim\sG(n,a/n,b/n)$, $d=(a+b)/2$, and $\bAc_G = \bA_G-(d/n)\bone\bone^{\sT}$
be its centered adjacency matrix. For $\lambda\in \reals$ fixed,
define $\bB = \bB(\lambda)$ to be the deformed GOE matrix in Eq.~(\ref{eq:Bdefinition}).

Then, there exists a universal constant $C$ such
that, for either $\bM\in \{\bAc_G/\sqrt{d} ,\bB(\lambda)\}$,  for all $t\ge 0$
\begin{align}
\prob\Big\{\big| \Phi(\beta,k;\bM)-\E \Phi(\beta,k;\bM)\big|\ge
 nt\Big\}\le C\, e^{-nt^2/C}\, . \label{eq:ConcentrationPhi}
\end{align}
\end{lemma}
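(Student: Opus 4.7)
The plan is to view $\Phi(\beta,k;\bM)$ as a convex, coordinate-wise Lipschitz function of the independent entries of $\bM$, then invoke concentration inequalities tailored to each law.

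First I will bound the partial derivatives of $\Phi$. Since $\Phi$ is (up to $1/\beta$) the logarithm of a partition function in which $\bM$ enters linearly through $\beta \sum_{i,j} M_{ij}\<\bsigma_i,\bsigma_j\>$, differentiation under the integral sign gives $\partial \Phi/\partial M_{ij} = 2\<\<\bsigma_i,\bsigma_j\>\>_{\mathrm{Gibbs}}$ for $i<j$ (the factor $2$ because both $(i,j)$ and $(j,i)$ appear in the sum), and $\partial\Phi/\partial M_{ii} = 1$. Cauchy–Schwarz bounds each derivative by $2$ uniformly in $\bM$ and in the Gibbs state, so $\Phi$ is convex and $2$-Lipschitz in every entry of $\bM$.

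For the Gaussian case $\bM=\bB(\lambda)$, I will apply the Borell–Tsirelson–Ibragimov–Sudakov inequality. Writing $B_{ij} = \lambda/n + \sigma_{ij} Z_{ij}$ with $\sigma_{ij}\le \sqrt{2/n}$ and $\{Z_{ij}\}_{i\le j}$ i.i.d.\ standard normal, the chain rule gives a squared Lipschitz constant
$$L^2 \;=\; \sum_{i\le j} \sigma_{ij}^2 \,(\partial_{B_{ij}}\Phi)^2 \;\le\; \frac{2}{n}\cdot\binom{n+1}{2}\cdot 4 \;\le\; C_0\, n.$$
Gaussian concentration then yields $\prob(|\Phi-\E\Phi|\ge nt)\le 2\exp(-n^2t^2/(2L^2))\le 2\exp(-nt^2/C)$.

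For the Bernoulli case $\bM=\bAc_G/\sqrt{d}$, the independent coordinates are the $\binom{n}{2}$ edge indicators $X_e\sim\mathrm{Bern}(p_e)$ with $p_e\le a/n$. Flipping one $X_e$ moves two symmetric entries of $\bAc_G/\sqrt{d}$ by $\pm 1/\sqrt{d}$, so by Step 1 the bounded-differences constant is $c=2/\sqrt{d}$. Naive McDiarmid with $\sum_e c^2 = \binom{n}{2}\cdot 4/d$ only delivers $\exp(-t^2 d/C)$, with no $n$ in the exponent, which is far too weak. The key will be to apply a Bernstein-type bounded-differences inequality (e.g.\ Bousquet's inequality, or Theorem 6.7/Corollary~2.11 of Boucheron–Lugosi–Massart) that replaces $\sum c_e^2$ by the variance-weighted quantity $V:=\sum_e p_e(1-p_e)c^2 \le (n^2/2)(a/n)(4/d)=O(n)$, giving $\prob(|\Phi-\E\Phi|\ge s)\le 2\exp(-s^2/(2V+2cs/3))$. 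Setting $s=nt$ produces the desired $\exp(-nt^2/C)$ in the sub-Gaussian range $t\lesssim \sqrt{d}$; for larger $t$ the bound is absorbed into the constant $C$ since $\Phi/n$ is a.s.\ bounded by $\|\bAc_G\|_{\mathrm{op}}/\sqrt{d}$, which itself has Gaussian tails at the relevant scale.

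The main obstacle is Step~3: identifying and correctly applying a Bernstein-type bounded-differences inequality that exploits the sparsity $p_e=O(1/n)$. Without this improvement over plain McDiarmid, the effective variance scales like $n^2/d$ instead of $n$, and the crucial factor of $n$ in the exponent is lost.
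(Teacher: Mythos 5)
Your proposal is correct, and for the Gaussian case it coincides with the paper's argument (bounded Gibbs averages as derivatives, then Gaussian concentration/Borell--TIS). For the Bernoulli case you take a genuinely different route to the same $O(n)$ variance proxy, and it is worth comparing the two.

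The paper's proof for $\bM=\bAc_G/\sqrt{d}$ first shows, exactly as you do, that adding or removing a single edge changes $\Phi$ by at most $O(1/\sqrt{d})$, and then applies Azuma--Hoeffding to a Doob martingale obtained by revealing the edges of $G$ one at a time. Since there are $O(dn)$ edges with overwhelming probability (one conditions on, say, $|E|\le 10dn$, an event of probability $1-e^{-\Omega(n)}$ by Chernoff), the variance budget is $O(dn)\cdot(1/\sqrt{d})^2=O(n)$, giving $\exp(-nt^2/C)$. You instead keep all $\binom{n}{2}$ Bernoulli coordinates and apply a Bernstein/Freedman-type bounded-differences inequality, so that each coordinate contributes $p_e(1-p_e)c^2\lesssim (a/n)(4/d)$ to the variance proxy, summing again to $O(n)$. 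Both approaches exploit the same phenomenon --- that although there are $\Theta(n^2)$ potential edges, only $O(n)$ of them carry any fluctuation --- and both produce the stated bound; the difference is whether one encodes the sparsity via the number of martingale steps (the paper) or via the conditional variances of each step (your proposal). Your version has the mild advantage of avoiding the conditioning on $|E|$, at the price of invoking a slightly heavier concentration tool (Freedman or the Bernstein-type inequality from Boucheron--Lugosi--Massart) instead of plain Azuma--Hoeffding. Your closing remark about the large-$t$ regime is also fine: $|\Phi|/n$ is deterministically $O(\sqrt{d})$ since $\sum_{ij}|(\bAc_G)_{ij}|/\sqrt{d}=O(n\sqrt{d})$, so for $t\gtrsim\sqrt{d}$ the event is empty and the bound is vacuous.

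One small bookkeeping remark: the paper states the single-edge increment as $\le 1/\sqrt{d}$ whereas your accounting (counting both $M_{ij}$ and $M_{ji}$) gives $2/\sqrt{d}$; this is just a factor-of-two convention and has no effect on the final constant $C$.
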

\begin{proof}
Define the following Gibbs probability measure over $(\bbS^{k-1})^n$,
which is naturally associated to the free energy $\Phi$:
\begin{align}
\mu_{\bM} (\bsigma) &\equiv\frac{ \exp(\beta H_\bM(\bsigma) )}{ \int
                      \exp(\beta H_\bM(\btau)) \de \nu(\btau) } \,
                      \de\nu(\bsigma)\, ,\\
H_{\bM}(\bsigma) &=
\<\bsigma,\bM\bsigma\> = \sum_{i,j=1}^n
M_{ij} \<\bsigma_i,\bsigma_j\>\, .
\end{align}
It is a straightforward exercise with moment generating functions to
show that
\begin{align}
\frac{\partial\Phi}{\partial M_{ij}} (\beta,k;\bM) &=
                                                   \mu_{\bM}(\<\bsigma_i,\bsigma_j\>)\, ,
\end{align}
where $\mu_{\bM}(f(\bsigma))$ denotes the expectation of $f(\bsigma)$ with respect to the probability
measure $\mu_{\bM}$. In particular, since
$|\<\bsigma_i,\bsigma_j\>|\le 1$ (here $\|\,\cdot\,\|_2$ denotes the
vector $\ell_2$ norm)
\begin{align}
\big\|\nabla_{\bM}\Phi\big\|_2^2 = \sum_{i,j=1}^n\left| \frac{\partial\Phi}{\partial M_{ij}}\right|^2\le n^2\, .
\end{align}
This implies Eq.~(\ref{eq:ConcentrationPhi}) for $\bM=\bB$ by Gaussian
isoperimetry (with constant $C=4$).

For $\bM = \bAc_G$ the proof is analogous. Let $G$ be a graph that
does not contain edge $(i,j)$, and $G^+$ denote the same graph, to
which edge $(i,j)$ has been added. Then writing the definition of
$\Phi(\,\cdots\,)$, we get
\begin{align}
\Phi\big(\beta,k;\bAc_{G^+}/\sqrt{d}\big)-\Phi\big(\beta,k;\bAc_{G}/\sqrt{d}\big)=\frac{1}{\beta}\log\Big\{\mu_{\bAc_{G}}
\big(e^{\frac{\beta}{\sqrt{d}}\<\bsigma_i,\bsigma_j\>}\big)\Big\}\, .
\end{align}
In particular 
\begin{align}
\Big|\Phi(\beta,k;\bAc_{G^+})-\Phi(\beta,k;\bAc_{G})\Big|\le
  \frac{1}{\sqrt{d}}\, .
\end{align}
The claim then follows from a standard application of the `method of
bounded differences' \cite{BLM} i.e. from Azuma-H\"oeffding
inequality, whereby we construct a bounded differences martingale
with a number of steps equal to a sufficiently large constant times
the number of edges, e.g. $10dn$. 
\end{proof}

\begin{lemma}\label{lemma:ConcentrationSDP}
Let $G\sim\sG(n,a/n,b/n)$, $d=(a+b)/2$, and $\bAc_G = \bA_G-(d/n)\bone\bone^{\sT}$
be its centered adjacency matrix. Then there exists a universal
constant $C$ such
that, for any $t\ge 0$
\begin{align}
\prob\Big\{\big| \SDP(\bAc_G)-\E \SDP(\bAc_G)\big|\ge
 nt\Big\}\le C\, e^{-nt^2/(Cd)}\, . \label{eq:ConcentrationSDP}
\end{align}
\end{lemma}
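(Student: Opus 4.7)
To prove Lemma~\ref{lemma:ConcentrationSDP}, my plan is to follow essentially the same edge-exposure martingale argument used in the proof of Lemma~\ref{lemma:ConcentrationPhi}, but applied to the \emph{unnormalized} matrix $\bAc_G$, and to upgrade Azuma--Hoeffding to a Bernstein/Freedman-type bound so as to extract the Bernoulli sparsity $p = d/n$. This is precisely what produces the factor $d$ in the exponent that is absent in Lemma~\ref{lemma:ConcentrationPhi}.

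The first step is a deterministic per-edge Lipschitz bound. If $G^{+}$ and $G$ differ by the addition of a single edge $(i,j)$, then $\bAc_{G^{+}} - \bAc_G = \bfe_i \bfe_j^{\sT} + \bfe_j \bfe_i^{\sT}$. Writing $\SDP$ as a maximum of linear functionals over $\bX \in \PSD_1(n)$ and using $|X_{ij}| \le \sqrt{X_{ii} X_{jj}} = 1$, one obtains
\[
\bigl|\SDP(\bAc_{G^{+}}) - \SDP(\bAc_G)\bigr| \le 2 \sup_{\bX \in \PSD_1(n)} |X_{ij}| \le 2.
\]

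Next, I would form the Doob martingale $M_m = \E[\SDP(\bAc_G) \mid \mathcal{F}_m]$, where $\mathcal{F}_m$ is generated by the first $m$ edge indicators $\eta_{ij}$ in an arbitrary ordering of the $\binom{n}{2}$ potential edges. The increments $D_m = M_m - M_{m-1}$ have the form $g_m(\eta_m) - \E g_m(\eta_m)$ for a bounded function $g_m$ with $|g_m(1) - g_m(0)| \le 2$, hence $|D_m| \le 2$ and
\[
\E[D_m^2 \mid \mathcal{F}_{m-1}] = p(1-p)\bigl(g_m(1)-g_m(0)\bigr)^2 \le 4p = 4d/n.
\]
Summing gives total predictable variation $V \le 4p \binom{n}{2} \le 2nd$ deterministically, and Freedman's martingale inequality then yields, with $s=nt$,
\[
\prob\bigl(|\SDP(\bAc_G) - \E \SDP(\bAc_G)| \ge nt\bigr) \le 2\exp\!\Bigl(-\frac{n^2 t^2/2}{2nd + 4nt/3}\Bigr),
\]
which is at most $2\exp(-nt^2/(Cd))$ in the Gaussian regime $t \lesssim d$.

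The main obstacle is that the Freedman bound is only sub-exponential once $t \gg d$, while the lemma asks for a uniform sub-Gaussian tail. I would close this gap using the crude deterministic estimate $|\SDP(\bAc_G)| \le 2|E(G)| + nd$ together with the standard Chernoff concentration of $|E(G)|$ around $nd/2$: on the high-probability event $\{|E(G)| \le 10\,dn\}$ one has $|\SDP - \E\SDP| \le Cnd$, so the event $\{|\SDP - \E\SDP| \ge nt\}$ is empty for $t$ exceeding a universal constant, and the residual probability $e^{-n/C}$ coming from the complementary event can be absorbed into the bound $C e^{-nt^2/(Cd)}$ by enlarging $C$. I do not expect any conceptually new input beyond this truncation to be needed.
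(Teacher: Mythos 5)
Your overall strategy coincides with the paper's: first a per-edge Lipschitz bound for $\SDP(\bAc_G)$, then a martingale concentration bound whose variance proxy exploits the Bernoulli sparsity $p = d/n$, so that the exponent scales with $dn$ rather than $n^2$. (Incidentally, your per-edge constant $2$ is the correct one: adding edge $(i,j)$ changes $\<\bAc_G,\bX\>$ by $X_{ij}+X_{ji}=2X_{ij}$; the paper's displayed chain keeps only $X_{ij}$ and reports the constant as $1$, a harmless factor-of-two slip.) The paper phrases the second step as Azuma--Hoeffding on a martingale with $\approx 10dn$ steps, i.e.\ one that reveals only the actual edges; your Freedman/Bernstein bound over the $\binom{n}{2}$ edge indicators is an equivalent, arguably cleaner, route to the same variance proxy $\Theta(dn)$. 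For $t = O(d)$ either argument yields the claimed sub-Gaussian bound.

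Where your proposal does not quite close is the final truncation for $t\gg d$, and the gap is a bit wider than you suggest. On $\{|E|\le 10dn\}$ one has $|\SDP-\E\SDP|\le Cnd$, so $\{|\SDP-\E\SDP|\ge nt\}$ is empty for $t> Cd$; note this threshold is of order $d$, not a $d$-free ``universal constant'' as you wrote. The complementary event has probability at most of order $e^{-c\,dn}$ by Chernoff, and absorbing $e^{-c\,dn}$ into $C\,e^{-nt^2/(Cd)}$ requires $nt^2/(Cd)\lesssim c\,dn$, i.e.\ again $t\lesssim d$ --- precisely the regime already handled by Freedman. Thus for $t$ a large multiple of $d$ (but still small enough that the event can actually occur) neither the sub-exponential tail from Freedman nor a fixed-level truncation reaches the claimed $e^{-nt^2/(Cd)}$; a full proof of the stated uniform-in-$t$ bound would need either a $t$-adaptive truncation combined with a sharper estimate on the upper tail of $\SDP(\bAc_G)$ itself (not just of $|E|$), or one should simply restrict the lemma to $t=O(d)$. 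To be fair, the paper's own two-sentence proof glosses over the same issue, and every application of the lemma in the paper takes $t$ of order $\sqrt d$ or a fixed constant with $d$-dependent constants permitted, so the $t=O(d)$ version is all that is actually used.
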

\begin{proof}
 Let $G$ be a graph that
does not contain edge $(i,j)$, and $G^+$ denote the same graph, to
which edge $(i,j)$ has been added. Let $\bX\in\PSD_1(n)$ be an
optimizer of the SDP with data $\bAc_G$, i.e. a feasible point such
that  $\<\bAc_G,\bX\> = \SDP(\bAc_G)$. Then
\begin{align}
\SDP(\bAc_{G^+})&\ge \<\bAc_{G^+},\bX\> \\
& = \<\bAc_{G},\bX\> +X_{ij}\\
&\ge \SDP(\bAc_G) -1\, ,
\end{align}
where we used the fact that $\bX$ is positive semidefinite to obtain 
$|X_{ij}|\le \sqrt{X_{ii}X_{jj}} = 1$. Exchanging the role of $G$ and
$G^+$, we obtain
\begin{align}
\big|\SDP(\bAc_{G^+})-\SDP(\bAc_G)\big|\le 1\, ,
\end{align}
As in the previous lemma, the claim follows from an application of the `method of
bounded differences' \cite{BLM} i.e. from Azuma-H\"oeffding inequality
(we can apply this to a martingale with a number of steps proportional
to the expected number of edges, say $10dn$, whence the claimed
probability bound follows).  
\end{proof}

\begin{lemma}\label{lemma:MixedNorm}
Let $\bAc_G$, $\bB$ be defined as in Lemma
\ref{lemma:ConcentrationPhi}. Then, there exists an absolute constant
$C>0$  such that the following holds with probability at least $1-C\, e^{-n/C}$:
\begin{align} 
\|\bAc_G\|_{\infty\to 2}\le C d\sqrt{n}\, ,\;\;\;\;\;\;\;
  \|\bB\|_{\infty\to 2}\le (C+\lambda)\sqrt{n}
\end{align}
\end{lemma}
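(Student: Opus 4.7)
The plan is to reduce both bounds to controlling a matrix operator norm (Gaussian case) and a low-degree polynomial of independent Bernoullis (\ER case). For the Gaussian case, I write $\bB=(\lambda/n)\bone\bone^{\sT}+\bW$ with $\bW\sim\GOE(n)$ and use the triangle inequality. For any $\bx$ with $\|\bx\|_\infty\le 1$, the rank-one part maps $\bx$ to $(\lambda/n)(\bone^{\sT}\bx)\bone$, of $\ell_2$-norm at most $\lambda\sqrt{n}$; for the noise part the elementary bound $\|\bW\bx\|_2\le \|\bW\|_{op}\|\bx\|_2\le \|\bW\|_{op}\sqrt{n}$ applies. Standard Gaussian concentration (the map $\bW\mapsto\|\bW\|_{op}$ is $1$-Lipschitz in the Frobenius norm, and $\E\|\bW\|_{op}\to 2$) gives $\|\bW\|_{op}\le 3$ with probability at least $1-Ce^{-n/C}$, so $\|\bB\|_{\infty\to 2}\le(C+\lambda)\sqrt{n}$ with the required probability.

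For the \ER case I split off the mean via the triangle inequality: $\|\bAc_G\|_{\infty\to 2}\le \|\bA_G\|_{\infty\to 2}+\|(d/n)\bone\bone^{\sT}\|_{\infty\to 2}$. The second term is deterministic, since $\|(d/n)\bone\bone^{\sT}\bx\|_2=(d/n)|\bone^{\sT}\bx|\sqrt{n}\le d\sqrt{n}$ whenever $\|\bx\|_\infty\le 1$. The key observation is that because $\bA_G$ has non-negative entries,
\begin{align}
|(\bA_G\bx)_i|=\Big|\sum_{j}A_{ij}x_j\Big|\le\sum_{j}A_{ij}=d_i
\end{align}
for every $\bx\in[-1,+1]^n$, with equality at $\bx=\bone$. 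Hence $\|\bA_G\|_{\infty\to 2}^2=\|\bA_G\bone\|_2^2=\sum_i d_i^2$, and the task reduces to controlling this degree-squared sum with the right probability.

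I next use the decomposition $\sum_i d_i^2=2|E|+2|P_2|$, where $|P_2|=\sum_i\binom{d_i}{2}$ counts the copies of the length-two path in $G$. The edge count $|E|$ is a $\mathrm{Binomial}(\binom{n}{2},d/n)$ variable, so Bernstein's inequality yields $|E|\le nd$ with probability at least $1-e^{-n/C}$. For $|P_2|$, whose expectation is of order $nd^2/2$, I invoke the Janson upper-tail inequality for subgraph counts: the controlling quantity $\Phi_{P_2}=\min_{\emptyset\neq H'\subseteq P_2}\E X_{H'}$ is attained at the single-edge subgraph and equals $\Theta(nd)$, so $\prob(|P_2|\ge 2\,\E|P_2|)\le e^{-cnd}$. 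Combining, $\sum_i d_i^2\le 4nd^2$ with probability $1-Ce^{-n/C}$, whence $\|\bA_G\|_{\infty\to 2}\le 2d\sqrt{n}$ and $\|\bAc_G\|_{\infty\to 2}\le 3d\sqrt{n}$.

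The main technical obstacle is obtaining an exponential-in-$n$ concentration of $|P_2|$ despite the presence of high-degree vertices ($\max_i d_i\sim \log n/\log\log n$ in a sparse \ER graph with bounded $d$). A bounded-differences martingale for $\sum_i d_i^2$ has increments of order $d_i+d_j$ per toggled edge, which is too large for Azuma--Hoeffding alone to reach an $e^{-n/C}$ tail; exploiting the degree-two polynomial structure of $|P_2|$ through Janson's inequality is what recovers the correct rate.
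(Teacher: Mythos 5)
Your bound for $\bB$ is essentially the paper's: both pass through $\|\bW\bx\|_2\le\|\bW\|_{op}\|\bx\|_2\le\sqrt{n}\|\bW\|_{op}$ and GOE concentration, and your handling of the rank-one part is the same computation with the triangle inequality applied one step earlier. For $\bAc_G$, your route to the intermediate bound $\|\bA_G\|_{\infty\to 2}^2\le\sum_i d_i^2$ is actually a bit cleaner than the paper's: you use non-negativity of the entries to get $|(\bA_G\bx)_i|\le d_i$ on the cube, with equality at $\bx=\bone$, which gives the stronger identity $\|\bA_G\|_{\infty\to 2}^2=\sum_i d_i^2$; the paper instead argues via convexity that the maximum lies at a hypercube vertex and then bounds $\|\bA_G\bsigma\|_2^2\le\sum_i d_i^2$. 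Both work.

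The gap is in your concentration step for $\sum_i d_i^2$. You invoke a ``Janson upper-tail inequality'' with the controlling quantity $\Phi_{P_2}=\min_{H'\subseteq P_2}\E X_{H'}$, but that parameter governs Janson's \emph{lower}-tail bound, not the upper tail. The upper tail of the path-count $|P_2|=\sum_i\binom{d_i}{2}$ in $\sG(n,d/n)$ with bounded $d$ simply does not decay at rate $e^{-\Omega(n)}$: planting a single vertex of degree $k\approx d\sqrt{2n}$ already gives $|P_2|\gtrsim nd^2\approx 2\,\E|P_2|$, and since $\deg(v)$ is approximately $\mathrm{Poisson}(d)$,
\begin{align}
\prob\big(\deg(v)\ge d\sqrt{2n}\big)\;\asymp\;\Big(\frac{e}{\sqrt{2n}}\Big)^{d\sqrt{2n}}=\exp\big(-\Theta(\sqrt{n}\log n)\big)\,.
\end{align}
So $\prob\big(|P_2|\ge 2\E|P_2|\big)\ge\exp(-\Theta(\sqrt{n}\log n))$, which eventually dwarfs $e^{-n/C}$. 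Your claimed bound $e^{-cnd}$ is therefore unattainable, and your closing assertion that the polynomial structure of $|P_2|$ ``recovers the correct rate'' is exactly what fails: the infamous-upper-tail phenomenon says it does not. Your instinct that Azuma--Hoeffding on $\sum_i d_i^2$ cannot reach $e^{-n/C}$ is correct; the right conclusion is that no argument can, because the true rate is $e^{-\Theta(\sqrt{n}\log n)}$. (This also means the probability asserted in the lemma statement itself is slightly too strong as written; the paper only uses this lemma to conclude bounds with high probability, and the $e^{-n/C}$ rate for the main theorems is instead recovered afterwards from Lemma \ref{lemma:ConcentrationSDP}, which uses a $1$-Lipschitz property of $\SDP$ rather than any subgraph-count estimate.) A correct version of your step would either prove the bound with the weaker probability $1-e^{-\sqrt{n}/C}$ (by a union bound over vertices of degree exceeding $K d\sqrt{n}$ plus a truncated second-moment or bounded-differences argument for the remaining low-degree contribution), or simply prove it ``with high probability,'' which is all that is used downstream.
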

\begin{proof}
For $\bB$ we use (letting $\|\bM\|_{2\to 2}= \|\bM\|_{op} =\max(\lambda_1(\bM),-|\lambda_n(\bM)|)$):
\begin{align}
\|\bB\|_{\infty\to 2}&\le \sqrt{n}\|\bB\|_{2\to 2}\le \sqrt{n} \big(\lambda
  + \|\bW\|_{2\to 2}\big)\\
&\le (C+\lambda)\sqrt{n}\, ,
\end{align}
where the last inequality holds with the desired probability by
standard concentration bounds on the extremal eigenvalues of GOE
matrices \cite{AGZ}[Section 2.3]. 

For $\bAc_G$, first note that 
\begin{align}
\|\bAc_G\|_{\infty\to 2}&\le
\|\bA_G\|_{\infty\to 2}+\frac{d}{n}\|\bone\bone^{\sT}\|_{\infty\to 2}\le \|\bA_G\|_{\infty\to 2}+\frac{d}{\sqrt{n}} \|\bone\bone^{\sT}\|_{2\to 2}\\
&\le \|\bA_G\|_{\infty\to 2}+d\sqrt{n}\, .
\end{align}
Next we observe that $\bsigma\mapsto\|\bA_G\bsigma\|_2^2$ is a
convex function on $\|\bsigma\|_{\infty} \leq 1$, and thus attains it maxima
at one of the corners of the hypercube $[-1,1]^n$. In other words,
$\|\bA_G\|_{\infty\to 2}^2 = \max_{\bsigma \in \{\pm
  1\}^n} \|\bA_G\bsigma\|_2^2$. For $\bsigma \in \{+1,-1 \}^n$, we get
\begin{align}
\|\bA_G\bsigma\|_2^2 \leq \sum_{i=1}^n \deg_G(i)^2 
\end{align}
where $\deg_G(i)$ is the degree of vertex $i$ in $G$. 
The desired bound follows since $\sum_{i=1}^n \deg(i)^2 \le C_0d^2
n$ with the desired probability for some constant $C_0$ large
enough (see, e.g. \cite{Janson}).
\end{proof}

\subsection{A general approximation result}

\begin{theorem}\label{thm:Approx}
Let $G\sim\sG(n,a/n,b/n)$, $d=(a+b)/2$, and $\bAc_G = \bA_G-(d/n)\bone\bone^{\sT}$
be its centered adjacency matrix. Let $\lambda = (a-b)/\sqrt{2(a+b)}$
and define $\bB = \bB(\lambda)$ to be the deformed GOE matrix in Eq.~(\ref{eq:Bdefinition}).
Then, there exists $C=C(\lambda)$ such that, with probability at least
$1-C\, e^{-n/C}$, for all $n\ge n_0(a,b)$
\begin{align}
\left|\frac{1}{n\sqrt{d}}\SDP(\bAc_G)-\frac{1}{n}\SDP(\bB(\lambda))\right|&\le
  \frac{C\log d}{d^{1/10}}\, ,\label{eq:SDPcontPlus}\\
\left|\frac{1}{n\sqrt{d}}\SDP(-\bAc_G)-\frac{1}{n}\SDP(-\bB(\lambda))\right|&\le
  \frac{C\log d}{d^{1/10}}\, . \label{eq:SDPcontMinus}
\end{align}
Further $C(\lambda)$ is bounded over compact intervals $\lambda\in
[0,\lambda_{\rm max}]$
\end{theorem}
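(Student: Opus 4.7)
The plan is to realize the three-step scheme of Section \ref{sec:Strategy} by inserting the smooth log-partition function $\Phi(\beta,k;\bM)$ between $\SDP(\bM)$ and its Gaussian counterpart. Concretely, I will pass along the chain
\[
\SDP(\bM)\;\longleftrightarrow\;\OPT_k(\bM)\;\longleftrightarrow\;\Phi(\beta,k;\bM)\;\longleftrightarrow\;\E\Phi(\beta,k;\bM)
\]
for $\bM\in\{\pm\bAc_G/\sqrt d,\pm\bB(\lambda)\}$ and compare the two centred expectations $\E\Phi(\beta,k;\pm\bAc_G/\sqrt d)$ and $\E\Phi(\beta,k;\pm\bB(\lambda))$ through Lemma \ref{lemma:Interpolation}. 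The first arrow will be handled by Theorem \ref{thm:Gro}, the second by Lemma \ref{lemma:ZeroTemperature} (which applies on the high-probability event of Lemma \ref{lemma:MixedNorm} with $L=C\sqrt d$ and $L=C+\lambda$ respectively), and the third by the concentration inequality of Lemma \ref{lemma:ConcentrationPhi}.

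The delicate point is the first arrow: Theorem \ref{thm:Gro} controls $|\SDP(\bM)-\OPT_k(\bM)|$ by $(1-\alpha_k)(\SDP(\bM)+\SDP(-\bM))$, so one needs an a-priori bound $\SDP(\pm\bAc_G/\sqrt d)\le Cn$. The naive estimate $\SDP(\bM)\le n\|\bM\|_{op}$ gives only $n\sqrt{\log n/\log\log n}$ and is useless here. To break this circularity I would sum the two instances of Theorem \ref{thm:Gro}\,(\ref{eq:Gro}) for $+\bM$ and $-\bM$, obtaining the dual estimate
\[
(2\alpha_k-1)\bigl[\SDP(\bM)+\SDP(-\bM)\bigr]\le \OPT_k(\bM)+\OPT_k(-\bM),
\]
which together with $\OPT_k\le\Phi+nE_{\mathrm{sm}}$ (from the smoothing lemma) reduces the a-priori bound to a bound on $\Phi(\beta,k;\pm\bM)$. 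For the Gaussian side, $\SDP(\pm\bB)\le n\|\bB\|_{op}\le Cn$ w.h.p.\ by standard GOE concentration, and hence $\Phi(\beta,k;\pm\bB)\le Cn$; concentration plus interpolation then transports this to $\Phi(\beta,k;\pm\bAc_G/\sqrt d)\le Cn$, and the summed Grothendieck inequality yields the a-priori estimate $\SDP(\bAc_G/\sqrt d)+\SDP(-\bAc_G/\sqrt d)\le Cn$ on a further high-probability event. A second invocation of Theorem \ref{thm:Gro} then delivers the clean deterministic bound $|\SDP(\bM)-\OPT_k(\bM)|\le Cn/k$ uniformly in $\bM\in\{\pm\bAc_G/\sqrt d,\pm\bB(\lambda)\}$.

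With all four arrows in the chain under control, the triangle inequality yields
\[
\Bigl|\tfrac{1}{n\sqrt d}\SDP(\pm\bAc_G)-\tfrac{1}{n}\SDP(\pm\bB(\lambda))\Bigr|\le \frac{C}{k}+C\sqrt d\,\eps\sqrt k+\frac{k}{\beta}\log\frac{C}{\eps}+\frac{2\beta^2}{\sqrt d}+\frac{8\sqrt\lambda}{d^{1/4}}+t,
\]
valid with probability at least $1-Ce^{-n/C}-Ce^{-nt^2/C}$, after applying the concentration Lemma \ref{lemma:ConcentrationPhi} to $\Phi(\beta,k;\pm\bM)$ on both sides. The proof then closes by optimising over the free parameters: taking $k=d^{1/10}$, $\beta=d^{1/5}$, $\eps=d^{-13/20}$ and $t=(\log d)/d^{1/10}$, a direct check shows that every term on the right is $O((\log d)/d^{1/10})$, matching the target bound $C\log d/d^{1/10}$, and the constant depends on $\lambda$ only through $\sqrt\lambda/d^{1/4}$, hence is bounded on compact $\lambda$-intervals. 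The main obstacle throughout is the Grothendieck circularity flagged above; once it is defused by the summed $\pm\bM$ inequality, the remainder of the argument is a careful bookkeeping exercise in simultaneously satisfying the smoothing, concentration and interpolation constraints, which scale differently in $k$, $\beta$ and $\eps$.
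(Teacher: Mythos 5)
Your proposal is correct and mirrors the paper's own proof: it builds the same approximation chain $\SDP \leftrightarrow \OPT_k \leftrightarrow \Phi \leftrightarrow \E\Phi$, invokes the same Lemmas \ref{lemma:MixedNorm}, \ref{lemma:ZeroTemperature}, \ref{lemma:Interpolation}, \ref{lemma:ConcentrationPhi} and Theorem \ref{thm:Gro}, resolves the a-priori bound on $\SDP(\pm\bAc_G)$ by importing control from the Gaussian side, and lands on the same parameter scalings $k\sim d^{1/10}$, $\beta\sim d^{1/5}$, $\eps\sim d^{-13/20}$ giving the rate $(\log d)/d^{1/10}$. The only cosmetic difference is that you sum \eqref{eq:Gro} over $\pm\bM$ to bound $\SDP(\bM)+\SDP(-\bM)$ before closing the loop, whereas the paper transfers the $O(n)$ bound from $\OPT_k(\pm\bB)$ to $\OPT_k(\pm\bAc_G)$ through the continuity estimate and then applies \eqref{eq:GroBis} directly; the two are equivalent rearrangements of the same pair of Grothendieck inequalities.
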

\begin{proof}
Throughout the proof $C=C(\lambda)$ is a constant that depends
uniquely on $\lambda$, bounded as in the statement, and we will write `for $n$ large enough'
whenever a statement holds for $n\ge n_0(a,b)$.

First notice that by Lemma \ref{lemma:Interpolation} and Lemma \ref{lemma:ConcentrationPhi} we have, with
probability larger than $1-Ce^{-n/C}$, and all $n$ large enough,
\begin{align}
\left|\frac{1}{n}\Phi\big(\beta,k;\bAc_G/\sqrt{d}\big)-\frac{1}{n}\Phi(\big(\beta,k;\bB(\lambda)\big)\right|\le
\frac{4\beta^2}{\sqrt{d}} +\frac{10\lambda^{1/2}}{d^{1/4}}\, . \label{eq:BoundWHPPhi}
\end{align}
Next, by Lemma \ref{lemma:ZeroTemperature} and Lemma
\ref{lemma:MixedNorm},
with the same probability, for $\bM\in\{\bAc_G/\sqrt{d},\bB(\lambda)\}$, and $\beta,d>1$
\begin{align}
\Big|\frac{1}{n}\Phi(\beta,k;\bM)-\frac{1}{n}\OPT_k(\bM)\Big|\le
\frac{k}{\beta}\log\Big(\frac{C\beta(d+\lambda)}{k}\Big)
\end{align}
(where we optimized the bound of Lemma \ref{lemma:ZeroTemperature}
over $\eps$.)
Using triangle inequality with Eq.~(\ref{eq:BoundWHPPhi}), and
optimizing over $\beta$, we get, always with probability at least $1-Ce^{-n/C}$,
\begin{align}
\left|\frac{1}{n\sqrt{d}}\OPT_k(\bAc_G)-\frac{1}{n}\OPT_k(\bB(\lambda))\right|\le
  \frac{Ck^{2/3}}{d^{1/6}}\log\big(d+\lambda\big)\, .\label{eq:OPTkCont}
\end{align}
Proceeding the same way (with $\beta$ replaced by $-\beta$), we also
obtain
\begin{align}
\left|\frac{1}{n\sqrt{d}}\OPT_k(-\bAc_G)-\frac{1}{n}\OPT_k(-\bB(\lambda))\right|\le
  \frac{Ck^{2/3}}{d^{1/6}}\log\big(d+\lambda\big)\, .
\end{align}
Since $|\OPT_k(-\bB)|,|\OPT_k(-\bB)|\le n\|\bB\|_{op}\le Cn$
with probability at least $1-Ce^{-n/C}$, we get also 
\begin{align}
\max\left\{\frac{1}{n}\OPT_k(\pm\bB),\frac{1}{n\sqrt{d}}\OPT_k(\pm\bAc_G)\right\}\le
  C\, ,
\end{align}
whence, using Theorem \ref{thm:Gro}, we obtain
\begin{align}
\left|\frac{1}{n\sqrt{d}}\SDP(\bAc_G)-\frac{1}{n\sqrt{d}}\OPT_k(\bAc_G)\right|&  \le  \frac{C}{k}\,  ,\\
\left|\frac{1}{n}\SDP(\bB)-\frac{1}{n}\OPT_k(\bB)\right|& \le
                                                          \frac{C}{k}\,
                                                          .
\end{align}
The claim (\ref{eq:SDPcontPlus}) follows from using this, together
with Eq.~(\ref{eq:OPTkCont}) and triangular inequality. Equation
(\ref{eq:SDPcontMinus}) follows from exactly the same argument.
\end{proof}

\subsection{Proofs of Theorem  \ref{thm:Main}}

Applying Theorem \ref{thm:Gaussian} to $\lambda=0$ (whence $\bB(\lambda)
=\bW\sim\GOE(n)$), we get, with high probability,
\begin{align}
\frac{1}{n}\SDP(\bW), \frac{1}{n}\SDP(-\bW) \in
  \big[2-d^{-1},2+d^{-1}\big]\, .
\end{align}
(The claim for $-\bW$ follows because $-\bW\sim\GOE(n)$)
Using Theorem \ref{thm:Approx}, applied to $a=b=d$ (whence
$G\sim\sG(n,d/n)$), we have, with high probability
\begin{align}
\frac{1}{n\sqrt{d}}\SDP(\bAc_G), \frac{1}{n\sqrt{d}}\SDP(-\bAc_G) \in
  \left[2-\frac{C\log d}{d^{1/10}},2+\frac{C\log d}{d^{1/10}}\right]\, .
\end{align}
This implies that desired claim (\ref{eq:MaxLimit}) holds  with high
probability.
By the concentration lemma \ref{lemma:ConcentrationSDP} (with $a=b=d$)
it also 
holds with probability at least $1-C(d)e^{-n/C(d)}$.

\subsection{Proofs of Theorem  \ref{thm:SDP_Test}}

Recall --throughout the proof-- that $\lambda = (a-b)/\sqrt{2(a+b)}\ge
1+\eps$
and $d= (a+b)/2$. Further, without loss of generality, we can assume
$\lambda\in [0,\lambda_{\rm max}]$ with $\lambda_{\rm max}>1$ fixed
(e.g. $\lambda_{\rm max}=10^3$).

Recall that $\prob_0$ denotes the law of $G\sim \sG(n,d/n)$ and
$\prob_1$  the law of $G\sim \sG(n,a/n,b/n)$.  We can control the
probability of false positives (i.e. declaring $G$ to have a
two-communities structure, which it has not) using Theorem
\ref{thm:Main}. For any $\delta>0$, we have
\begin{align}
\lim_{n\to\infty}\prob_0\big(T(G;\delta) =1 \big) =
  \lim_{n\to\infty}\prob_0\Big(\frac{1}{n}\SDP(\bAc_G)\ge
  2(1+\delta)\sqrt{d} \Big) = 0\, ,
\end{align}
where the last equality holds for any $d\ge d_0(\delta)$.

We next bound the probability of false negatives. Let $\Delta(\,\cdot\,)$  as
per Theorem \ref{thm:Gaussian}. 
By Theorem \ref{thm:Approx}, there
exists $d_0'=d_0'(\eps)$ such that, for all $d\ge d_0'(\eps)$,
with high probability for $G\sim\sG(n,a/n,b/n)$,
\begin{align}
\frac{1}{n\sqrt{d}}\SDP(\bAc_G)&\ge \frac{1}{n}\SDP(\bB(\lambda)) -
                         \frac{1}{4}\Delta(1+\eps)\\
&\ge \frac{1}{n}\SDP(\bB(1+\eps)) -
                         \frac{1}{4}\Delta(1+\eps)\\
&\ge 2+\frac{3}{4}\Delta(1+\eps)\, ,\label{eq:LowerBoundFinalFalseNeg}
\end{align}
where the second inequality follows because  $\SDP(\bB(\lambda))$ is
monotone non-decreasing in $\lambda$ and the last inequality follows from  Theorem \ref{thm:Gaussian}. 

Selecting $\delta_*(\eps) = \Delta(1+\eps)/2>0$, we then have
\begin{align}
\lim_{n\to\infty}\prob_1\big(T(G;\delta_*(\eps)) =0 \big) &=
  \lim_{n\to\infty}\prob_1\Big(\frac{1}{n \sqrt{d}}\SDP(\bAc_G)<
  2(1+\delta_*(\eps)) \Big) \label{eq:ProbFalseNeg}\\
  & = \lim_{n\to\infty}\prob_1\Big(\frac{1}{n \sqrt{d}}\SDP(\bAc_G)<
  2+\Delta_*(1+\eps) \Big) = 0\, ,
\end{align}
where the last equality follows from
Eq.~(\ref{eq:LowerBoundFinalFalseNeg}).

We proved therefore that the error probability vanishes as
$n\to\infty$, provided $d>d_*(\eps) = \max(d_0(\delta_*(\eps)), \, d_0'(\eps))$.
In fact, our argument also implies (eventually adjusting $d_*$)
\begin{align}
\lim_{n\to\infty}\prob_0\Big(\frac{1}{n\sqrt{d}}\SDP(\bAc_G)\ge
  2+\frac{\delta_*}{2}\Big) &= 0\, ,\\
\lim_{n\to\infty}\prob_1\Big(\frac{1}{n\sqrt{d}}\SDP(\bAc_G)\le
  2+\delta_*\Big) &= 0\, .
\end{align}
It then follows from the concentration lemma
\ref{lemma:ConcentrationSDP} that these probabilities (and hence the
error probability of our test) are bounded by
$C\,e^{-n/C}$ for $C=C(a,b)$ a constant.
%
%
\section{Proof of Theorem \ref{thm:Regular} (SDP for random regular graphs)}
\label{app:Regular}

Recall that $\SDP(\bM)\le n\xi_1(\bM)$. Further, the leading
eigenvector of a $d$-regular graph is the the all-ones vector $\bv_1= \bone/\sqrt{n}$.
Using this remark together  almost-Ramanujan property of random $d$-regular graphs
\cite{Friedman}, 
we have, with high probability, 
\begin{align}
\frac{1}{n} \SDP(\bAc_G) \leq \xi_1(\bAc_G) =\xi_2(\bA_G) = 2
  \sqrt{d-1} + o_n(1)\, , 
\end{align}
This gives us the required upper bound. 

To derive a matching lower bound, we construct explicitly a feasible
point of the optimization problem which  asymptotically attains this
value as $n\to\infty$. 

To this end, let $T_d$ denote the infinite $d$-regular tree with
vertex set $V(T_d)$. 
Cs\'{o}ka et. al. \cite[Theorem 3,4]{csokaetal2014independent}
establish that for any $\lambda$ with 
$|\lambda| \leq d$, there exists a centered  Gaussian process indexed
by the vertices of $T_d$, $\{Z_v: v \in V(T_d)\}$, such that with
probability 1, for all $v \in V(T_d)$, 
\begin{align}
\sum_{u \in N(v)} Z_u = \lambda Z_v, 
\end{align}
where $N(u)$ denotes the neighbors of $u \in V(T_d)$. These processes
are referred to as ``Gaussian wave functions'',
Further, Cs\'{o}ka et. al. prove that for any $|\lambda| < 2
\sqrt{d-1}$, the process $\{Z_v: v \in V(T_d)\}$ can be approximated
by linear factor of i.i.d. processes. More explicitly, let  $\{X_v: v
\in V(T_d)\}$,
a collection of i.i.d. standard Gaussian $Y_v\sim\normal(0,1)$, then
there exists a sequence of coefficients $\{\alpha_{\ell}\}_{\ell\ge
  0}$, $\alpha_{\ell}\in\reals$ such that  the Gaussian wave function
$\{Z_v: v \in V(T_d)\}$ can be constructed so that
\begin{align}
&\lim_{L\to\infty}\E\left\{\Big(Z_v-Z^{(L)}_v\Big)^2\right\}=0\, ,\label{eq:L2Conv}\\
&Z_v^{(L)}\equiv \sum_{\ell=0}^L\sum_{u\in V(T_d): d(u,v) =
  \ell} \alpha_{\ell} Y_u\, .
\end{align}
(Here $d(\,\cdot\,,\,\cdot\,)$ is the usual graph distance.)

 We use  this construction with $\lambda = 2\sqrt{d-1}-\eps$ for
 $\eps$ a small positive number.
 Without loss of generality, we assume that $\Var(Z_v)=1$ for
 all $v\in V(T_d)$. It is easy to see \cite[Equation
 2]{csokaetal2014independent}
that for $u,v \in V(T_d)$ such that $(u,v)\in E(T_d)$, we have
\begin{align}
\E\{Z_u Z_v\} = \frac{2\sqrt{d-1}-\eps}{d}.  \nonumber 
\end{align}
Thus, denoting by $\dv$ the set of neighbors of vertex $v$, $\sum_{u
  \in \dv} \E\{Z_u Z_v\} = 2\sqrt{d-1}-\eps$. 
By Eq.~(\ref{eq:L2Conv}), there exists $L=L(\eps)$ large enough so that
\begin{align}
\sum_{u
  \in \dv} \E\{Z_u^{(L)} Z_v^{(L)}\} \ge 2\sqrt{d-1}-2\eps\, .
\end{align} 

Let $G\sim\Greg(n,d)$ be a random $d$-regular graph on $n$ vertices.
We use the above construction to obtain a feasible point of the SDP,
$\bX\in\PSD_1(n)$, with the desired value.
Namely, let $\{\tY_v:v\in V(G)\}$ be a collection of i.i.d. random
variables $\tY_v\sim\normal(0,1)$, independent of the graph $G$. We define $\{\tZ_v\, : v\in
V(G)\}$ using the same coefficients as above:
\begin{align}
\tZ_v^{(L)} = \sum_{k=0}^{L} \sum_{u\in V(G): d(u,v) = k} \alpha_{k}
  \tY_u, 
\end{align}
We then construct the matrix $\bX= (X_{ij})_{1\leq i,j \leq n}$ by letting
\begin{align}
X_{ij} =\frac{\E\{\tZ_i^{(L)}\tZ_j^{(L)}|G\}
  }{\sqrt{\E\{(\tZ_i^{(L)})^2|G\}\E\{(\tZ_j^{(L)})^2|G\}}}\, .
\end{align}
It is immediate to see from the construction that $\bX\in \PSD_1(n)$ is a feasible
point. 

At this feasible point, 
\begin{align}
\frac{1}{n}\< \bAc_G, X\> = \frac{1}{n} \sum_{i\in V(G)}\sum_{j\in \di }\frac{\E\{\tZ_i^{(L)}\tZ_j^{(L)}|G\}
  }{\sqrt{\E\{(\tZ_i^{(L)})^2|G\}\E\{(\tZ_j^{(L)})^2|G\}}}
-\frac{d}{n^2} \sum_{i,j\in V(G) }\frac{\E\{\tZ_i^{(L)}\tZ_j^{(L)}|G\}
  }{\sqrt{\E\{(\tZ_i^{(L)})^2|G\}\E\{(\tZ_j^{(L)})^2|G\}}}\, . 
\end{align}
Since $G$ converges almost surely as $n\to\infty$ to a $d$-regular tree (in the sense of local weak
convergence, see, e.g. \cite{dembo2010gibbs}),  and $\tZ^{(L)}_i$ is only a function of the
$L$-neighborhood of $i$, we have, $G$-almost surely
\begin{align}
\lim_{n\to\infty}\frac{1}{n} \sum_{i\in V(G)}\sum_{j\in \di }\frac{\E\{\tZ_i^{(L)}\tZ_j^{(L)}|G\}
  }{\sqrt{\E\{(\tZ_i^{(L)})^2|G\}\E\{(\tZ_j^{(L)})^2|G\}}} = \sum_{u\in \dv }\frac{\E\{Z_v^{(L)}Z_u^{(L)}\}
  }{\sqrt{\E\{(Z_v^{(L)})^2\}\E\{(Z_u^{(L)})^2\}}}\ge
  2\sqrt{d-1}-2\eps\, .
\end{align}
Also, since $\E\{\tZ_i^{(L)}\tZ_j^{(L)}\}=0$ , whenever $d(i,j)>2L$,
we have 
\begin{align}
\lim_{n\to\infty}\frac{d}{n^2} \sum_{i,j\in V(G) }\frac{\E\{\tZ_i^{(L)}\tZ_j^{(L)}|G\}
  }{\sqrt{\E\{(\tZ_i^{(L)})^2|G\}\E\{(\tZ_j^{(L)})^2|G\}}} = 0\, .
\end{align}
We conclude by noting that
\begin{align}
\lim_{n\to\infty}\frac{1}{n}\SDP(\bAc_G) \ge \lim_{n\to\infty}
  \frac{1}{n}\<\bAc_G,\bX\> \ge 2\sqrt{d-1}-2\eps\, ,
\end{align}
and the thesis follows since $\eps$ is arbitrary.

The proof for $-\bAc_G$ is exactly the same. 
%
%
\section{Proof of Theorem \ref{thm:Gro} (Grothendieck-type
  inequality)}
\label{app:ProofGro}

As mentioned already, the upper bound in Eq.~(\ref{eq:Gro}) is trivial.
The proof of the lower bound follows Rietz's method \cite{rietz1974proof}.

Let $\bX$ be a solution of the problem (\ref{eq:SDP.DEF}) and
through its Cholesky decomposition write $X_{ij} = \<\bsigma_i,\bsigma_j\>$, with 
$\bsigma_i\in \reals^n$, $\|\bsigma_i\|_2=1$. In other words we have, letting $\bM =
(M_{ij})_{i,j\in [n]}$, 
\begin{align}
\SDP(\bM) = \sum_{i,j=1}^n B_{ij}\<\bsigma_i,\bsigma_j\>\, .\label{eq:QBrepresentation}
\end{align}
Let $\Ga \in\reals^{k\times n}$ be a matrix with i.i.d. entries
$\Ga _{ij}\sim\normal(0,1/k)$. Define, $\bx_i\in\reals^k$, for $i\in [n]$, by letting
\begin{align}
\bx_i = \frac{\Ga \, \bsigma_i}{\|\Ga \, \bsigma_i\|_2}\, .
\end{align}
We next need a technical lemma.
\begin{lemma}\label{eq:RoundingLemma}
Let $\bu,\bv\in\reals^n$ with $\|\bu\|_2 = \|\bv\|_2 =1$ and
$\Ga \in\reals^{k\times n}$ be defined as above. Further, for
$\bw\in\reals^n$, let $z(\bw) \equiv
(1-\alpha_k^{-1/2}\|\Ga \bw\|_2^{-1})\Ga \bw$. Then
\begin{align}
\E\Big\<\frac{\Ga \bu}{\|\Ga \bu\|_2},\frac{\Ga \bv}{\|\Ga \bv\|_2}\Big\>= 
\alpha_k\<\bu,\bv\>+ \alpha_k\E\<z(\bu),z(\bv)\>\, .
\end{align}
\end{lemma}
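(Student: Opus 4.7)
The plan is to use Rietz's trick: invert the definition of $z(\bw)$ to express the normalized projection $\Ga\bw/\|\Ga\bw\|_2$ as an affine combination of $\Ga\bw$ and $z(\bw)$, expand the inner product, and then compute expectations using Gaussian rotation invariance. Concretely, since $\|\bu\|_2=\|\bv\|_2=1$, for any unit vector $\bw$ the random vector $\Ga\bw$ is distributed as $\bg\sim\normal(\bzero,\id_k/k)$, so $\E\|\Ga\bw\|_2=\sqrt{\alpha_k}$. Rearranging the definition $z(\bw)=(1-\alpha_k^{-1/2}\|\Ga\bw\|_2^{-1})\Ga\bw$ gives the key identity
\begin{align*}
\frac{\Ga\bw}{\|\Ga\bw\|_2} = \alpha_k^{1/2}\bigl(\Ga\bw - z(\bw)\bigr).
\end{align*}

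Applying this to both $\bu$ and $\bv$ and expanding,
\begin{align*}
\Big\<\frac{\Ga\bu}{\|\Ga\bu\|_2},\frac{\Ga\bv}{\|\Ga\bv\|_2}\Big\>
= \alpha_k\Bigl[\<\Ga\bu,\Ga\bv\>-\<\Ga\bu,z(\bv)\>-\<z(\bu),\Ga\bv\>+\<z(\bu),z(\bv)\>\Bigr].
\end{align*}
A direct computation with the independent $\normal(0,1/k)$ entries of $\Ga$ gives $\E\<\Ga\bu,\Ga\bv\>=\<\bu,\bv\>$, so the claimed identity reduces to showing that both cross terms $\E\<\Ga\bu,z(\bv)\>$ and $\E\<z(\bu),\Ga\bv\>$ vanish.

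To handle the cross term $\E\<\Ga\bu,z(\bv)\>$, I would decompose $\bu = \<\bu,\bv\>\bv + \bup$ with $\bup\perp\bv$. By Gaussian rotation invariance $\Ga\bup$ is a centered Gaussian vector independent of $\Ga\bv$ (and hence of $z(\bv)$, which is a deterministic function of $\Ga\bv$), so $\E\<\Ga\bup,z(\bv)\>=0$. It remains to check the diagonal identity $\E\<\Ga\bv,z(\bv)\>=0$, which follows from
\begin{align*}
\E\<\Ga\bv,z(\bv)\> = \E\|\Ga\bv\|_2^2 - \alpha_k^{-1/2}\,\E\|\Ga\bv\|_2 = 1 - \alpha_k^{-1/2}\cdot\alpha_k^{1/2} = 0,
\end{align*}
using $\E\|\Ga\bv\|_2^2=1$ and $\E\|\Ga\bv\|_2=\sqrt{\alpha_k}$. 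The symmetric argument swapping $\bu$ and $\bv$ disposes of $\E\<z(\bu),\Ga\bv\>$.

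Combining these pieces yields $\E\<\Ga\bu/\|\Ga\bu\|_2,\Ga\bv/\|\Ga\bv\|_2\> = \alpha_k\<\bu,\bv\>+\alpha_k\E\<z(\bu),z(\bv)\>$, as required. The only delicate point is verifying the diagonal vanishing $\E\<\Ga\bv,z(\bv)\>=0$, which is precisely the reason the constant $\alpha_k$ was defined as $(\E\|\bg\|_2)^2$ rather than $\E\|\bg\|_2^2$; the whole decomposition was engineered to make this orthogonality hold, so no further obstacle is expected.
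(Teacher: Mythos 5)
Your proof is correct and follows the same Rietz-type argument the paper uses. The only difference is organizational: you invert the definition of $z$ to write $\Ga\bw/\|\Ga\bw\|_2 = \alpha_k^{1/2}(\Ga\bw - z(\bw))$ and expand the left-hand side, whereas the paper expands $\E\<z(\bu),z(\bv)\>$ and solves for the left-hand side; the underlying calculations (rotation invariance, the vanishing of the cross terms because $\alpha_k = (\E\|\bg\|_2)^2$) are identical.
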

\begin{proof}
Let $\bg_1,\bg_2\sim \normal(0,\id_{k}/k) $ be independent vectors
(distributed as the first two columns of $\Ga $. Let $a=\<\bu,\bv\>$ and
$b = \sqrt{1-a^2}$. Then by rotation invariance
\begin{align}
\E\<\Ga \bu,\Ga \bv\> = \E\<\bg_1, a\bg_1+\bg_2\> = a\E(\|\bg_1\|_2^2) = \<\bu,\bv\>\, ,
\end{align}
and 
\begin{align}
\E\big\<\frac{\Ga \bu}{\|\Ga \bu\|_2},\Ga \bv\big\> &= \E\big\<\frac{\bg_1}{\|\bg_1\|_2},
  a\bg_1+\bg_2\big\>\\
& = a\E(\|\bg_1\|_2) = 
\alpha_k^{1/2}\<\bu,\bv\>\, .
\end{align}

By expanding the product we have
\begin{align}
\E \<z(\bu),z(\bv)\> &= \<\bu,\bv\> -\alpha_k^{-1/2}
  \E\big\<\frac{\Ga \bu}{\|\Ga \bu\|_2},\Ga \bv\big\> 
-\alpha_k^{-1/2}
  \E\big\<\Ga \bu,\frac{\Ga \bv}{\|\Ga \bv\|_2}\big\>
                   +\frac{1}{\alpha_k}\E\big\<\frac{\Ga \bu}{\|\Ga \bu\|_2},\frac{\Ga \bv}{\|\Ga \bv\|_2}\big\>\\
&= -\<\bu,\bv\> + \frac{1}{\alpha_k}\E\big\<\frac{\Ga \bu}{\|\Ga \bu\|_2},\frac{\Ga \bv}{\|\Ga \bv\|_2}\big\>
\end{align}
which is equivalent to the statement of our lemma.
\end{proof}

Now, by definition of the $\bx_i$'s we have
\begin{align}
\E\Big\{\sum_{i,j=1}^n M_{ij}\<\bx_i,\bx_j\>\Big\}& = 
\sum_{i,j=1}^n M_{ij}
  \E\Big\<\frac{\Ga \bu_i}{\|\Ga \bu_i\|_2},\frac{\Ga \bu_j}{\|\Ga \bu_j\|_2}\Big\>\\
&  =  \alpha_k\sum_{i,j=1}^n M_{ij}\<\bu_i,\bu_j\>
+\alpha_k \sum_{i,j=1}^n M_{ij}\E\<z(\bu_i),z(\bu_j)\>\\
&  =  \alpha_k\SDP(\bM)
+\alpha_k\sum_{i,j=1}^n M_{ij}\E\<z(\bu_i),z(\bu_j)\>
\, .\label{eq:zs}
\end{align}
Now we interpret $z(\bu_i)$ as a vector in a Hilbert space with scalar
product $\E\<\,\cdot\,,\,\cdot\,\>$. Further by the rounding lemma
\ref{eq:RoundingLemma}, these vectors have norm 
\begin{align}
\E (\|z(\bu_i)\|_2^2) = \frac{1}{\alpha_k}-1\, .
\end{align}
Hence, by definition of $\SDP(\,\cdot\,)$,
we have
\begin{align}
-\sum_{i,j=1}^n M_{ij}\E\<z(\bu_i),z(\bu_j)\>\le
  \Big(\frac{1}{\alpha_k}-1\Big)\SDP(-\bM) \, .
\end{align}
Substituting this in Eq.~(\ref{eq:zs}), we obtain
\begin{align}
\OPT_k(\bM)\ge \E\Big\{\sum_{i,j=1}^n M_{ij}\<\bx_i,\bx_j\>\Big\}
\ge\alpha_k\SDP(\bM)
-(1-\alpha_k)\SDP(-\bM)\, ,
\end{align}
which coincides with the claim (\ref{eq:Gro}).

In order to prove Eq.~(\ref{eq:GroBis}), we apply  Eq.~(\ref{eq:Gro})
to $-\bM$, thus getting
\begin{align}
\SDP(-\bM)\le \frac{1}{\alpha_k}\OPT_k(-\bM) + \frac{1-\alpha_k}{\alpha_k}\,
  \SDP(\bM)\, .
\end{align}
Substituting this in Eq.~(\ref{eq:Gro}), we obtain
Eq.~(\ref{eq:GroBis}).

%
%
\section{Proof of Lemma  \ref{lemma:ZeroTemperature}
  (zero-temperature approximation)}
\label{app:ProofZeroT}

Define the objective function
$H_{\bM}:(\bbS^{k-1})^n\to \reals$
\begin{align}
H_{\bM}(\bsigma) =
\<\bsigma,\bM\bsigma\> = \sum_{i,j=1}^n
M_{ij} \<\bsigma_i,\bsigma_j\>\, .
\end{align}
(In the first expression that $\<\,\cdot\,,\,\cdot\,\>$ denotes the scalar product between matrices
and we interpret $\bsigma$ as a matrix $\bsigma\in\reals^{n\times k}$.)
Let
$\bsigma^*\in\arg\max\{H_{\bM}(\bsigma):\;(\bbS^{k-1})^n\}$. 
We then have (denoting by
$\|\,\cdot\,\|_F$ the Frobenius norm):
\begin{align}
|H_{\bM}(\bsigma) - H_{\bM}(\bsigma^*)| &\le  | \< \bsigma- \bsigma^*,\bM\bsigma\>| + | \< \bsigma- \bsigma^*, \bM \bsigma^* \>| \\
&\le 2\, \max \{ \|\bM \bsigma\|_{F} ,
  \|\bM \bsigma^*\|_{F} \} \,\|\bsigma - \bsigma^*\|_{F} \\
& \le 2\, \sqrt{k}\,\|\bM\|_{\infty\to 2}\, \|\bsigma -
\bsigma^*\|_{F} \, .
\end{align}
Define the partition function
\begin{align}
Z(\beta,k;\bM)\equiv \int \,
                    \exp\big\{\beta H_{\bM}(\bsigma)\big\}\,\de\nu(\bsigma) \, ,
\end{align}
so that, in particular $\Phi(\beta,k;\bM) = (1/\beta)\log Z(\beta,k;\bM)$. By the above bound, 
and recalling $L\ge \|\bM\|_{\infty\to 2}/\sqrt{n}$
\begin{align}
e^{\beta H_{\bM}(\bsigma^{*})} \geq
Z(\beta,k;\bM)\geq e^{\beta
  H_{\bM}(\bsigma^{*})} 
\int \exp(-2\beta L\sqrt{kn} \,\|\bsigma -
\bsigma^*\|_{F}) \, \de \nu(\bsigma) \, .
\label{equation:boundsTemp}
\end{align}   
For any $\ve >0$, we have (here $\ind(\,\cdot\,)$ denotes the indicator function)
\begin{align}
\int \exp(-2\beta L \sqrt{kn} \|\bsigma - \bsigma^*\|_{F}) \,
\de \nu(\bsigma)  &\geq \int \exp(-2\beta L \sqrt{kn} \|\bsigma
- \bsigma^*\|_{F}) \ind(\max_{i\in [n]}\|\bsigma_i - \bsigma_i^*\|_2
\leq \ve )\, \de \nu(\bsigma)\nonumber \\
&\geq \exp(-2\beta Ln\ve \sqrt{k}  )\, ( V_k(\ve))^n \, ,  \label{equation:sphericalcap}
\end{align}
where $V_k(\ve)$ is  volume of the spherical cap $\{\bsigma_1 \in \bbS^{k-1}: \|\bsigma_1^* - \bsigma_1\|_2 \leq \ve\}$
(with respect to the normalized measure on the unit sphere $\bbS^{k-1}$).
By a simple integral in spherical coordinates  have $V_k(\ve) = (1/2) \prob\{ X < \ve^2 - (\ve^4/4)\}$ where $X\sim \dBet ( \frac{k-1}{2} , \frac{1}{2})$. Further
\begin{align}
\prob\Big( X < \ve^2 - \frac{\ve^4}{4}\Big) \geq \frac{1}{\dBet(\frac{k-1}{2}, \frac{1}{2}) } \int_{0}^{\ve^2 - \ve^4/4} t^{\frac{k-1}{2} -1} \de t 
\geq \frac{c}{\sqrt{k} } (\ve^2 - \ve^4/4)^{\frac{k-1}{2}} 
\end{align}
Plugging this Eq.~(\ref{equation:boundsTemp}), we obtain (since $\OPT_k(\bM) = H_{\bM}(\bsigma^*)$):
\begin{align}
e^{\beta \OPT_k(\bM)} \geq
Z(\beta,k;\bM)\geq e^{\beta
  \OPT_k(\bM)-2\beta Ln\ve \sqrt{k}} \left(\frac{\eps}{C}\right)^{kn} \, .
\end{align}   
Taking logarithms yields the desired bound (\ref{eq:TemperatureBound}).

%
%
\section{Proof of Lemma \ref{lemma:Interpolation} (interpolation)}
\label{sec:ProofInterpolation}

Throughout this proof, we will fix, without loss of generality $S_1=\{1,\dots n/2\}$ and $S_2 =
\{(n/2)+1,\dots,n\}$. Define $\bv \in\reals^n$ by letting $v_i = 1/\sqrt{n}$ if $i\in S_1$ and $v_i = -1/\sqrt{n}$
if $i\in S_2$. Define
\begin{align}
\bBp(\lambda) = \lambda\bv\bv^{\sT} + \bW\, .
\end{align}
(We will drop the argument $\lambda$ when clear from the context.)
By a change of variables in the definition of $\Phi(\beta,k;\,\cdot\,)$ (namely, $\bsigma_i\to -\bsigma_i$ for
$i\in S_2$), and since $W_{ij}\ed -W_{ij}$, we have
\begin{align}
\E\Phi\big(\beta,k;\bB(\lambda)\big) = \E\Phi\big(\beta,k;\bBp(\lambda)\big)\, . 
\end{align}
We can and will therefore replace $\bB(\lambda)$ by $\bBp(\lambda)$. We will 
drop the superscript `new.'

We proceed in two steps, and define an intermediate Gaussian
random matrix
\begin{align}
\bD(\lambda) = \lambda\bv\bv^{\sT} + \bU\, ,
\end{align}
where $\bU = \bU^{\sT}\in\reals^{n\times n}$ is a Gaussian random
matrix with $\{U_{ij}\}_{1\le i\le j\le n}$ independent zero-mean Gaussian random
variables with
 \begin{align}
\Var(U_{ij}) = \begin{cases}
a[1-a/n]/(nd)& \mbox{ if $\{i,j\}\subseteq S_1$ or $\{i,j\}\subseteq S_2$,}\\
b[1-b/n]/(n d) & \mbox{ if $i\in S_1,\, j\in S_2$ or $i\in
  S_2 ,j\in S_1$,}
\end{cases}\label{eq:UnequalVariances}
\end{align}
and $U_{ii}=0$.
By triangular inequality
\begin{align}
\left|\frac{1}{n}\E\Phi\big(\beta,k;\bAc_G/\sqrt{d}\big)-\frac{1}{n}\E\Phi\big(\beta,k;\bB\big)\right|&\le 
\left|\frac{1}{n}\E\Phi\big(\beta,k;\bAc_G/\sqrt{d}\big)-\frac{1}{n}\E\Phi\big(\beta,k;\bD\big)\right|\nonumber\\
&+
\left|\frac{1}{n}\E\Phi\big(\beta,k;\bD\big)-\frac{1}{n}\E\Phi\big(\beta,k;\bB\big)\right|
\, .
\end{align}

The proof of  Lemma \ref{lemma:Interpolation} follows therefore from the next two results, which will be proved
in the next subsections.
\begin{lemma}\label{lemma:Interpolation1}
With the above definitions, if $n\ge (15d)^2$, then
\begin{align}
\left|\frac{1}{n}\E\Phi\big(\beta,k;\bAc_G/\sqrt{d}\big)-\frac{1}{n}\E\Phi\big(\beta,k;\bD\big)\right|\le \frac{2\beta^2}{\sqrt{d}}\, .
\end{align}
\end{lemma}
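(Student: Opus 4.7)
This is a Lindeberg/Chatterjee-style swap argument. The matrices $\bAc_G/\sqrt d$ and $\bD(\lambda)$ have been constructed to agree in mean ($\lambda\bv\bv^{\sT}$ off-diagonally) and in the entrywise variance profile~(\ref{eq:UnequalVariances}), so the goal is to control the error from not matching higher moments. This is possible because $\Phi(\beta,k;\,\cdot\,)$ is smooth in its matrix argument, with derivatives controlled in terms of $\beta$ and the compact geometry of $\bbS^{k-1}$.

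First I would dispose of the diagonal discrepancy. Since $(\bAc_G/\sqrt d)_{ii}=-\sqrt d/n$ while $D_{ii}=\lambda/n$, and since $\partial_{M_{ii}}\Phi(\beta,k;\bM)=\mu_{\bM}(\|\bsigma_i\|_2^2)=1$ by the spherical constraint, a linear interpolation shows that the total diagonal contribution to $|\Phi/n|$ is at most $(\sqrt d+\lambda)/n$; under $n\ge(15d)^2$ (and an absolute threshold $n\ge n_0$) this is absorbed into the stated $2\beta^2/\sqrt d$ bound. It then suffices to compare $\Phi$ at two matrices that agree on the diagonal and differ off-diagonally by independent centered variables $\tilde X_{ij}$ (rescaled Bernoulli, centered at its mean) versus $Y_{ij}$ (Gaussian), with matching first two moments.

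Now enumerate the off-diagonal pairs and swap $\tilde X_{ij}\mapsto Y_{ij}$ one at a time, obtaining a telescoping sum. For each single swap, expand $\Phi$ in the single entry $M_{ij}$ to third order (treating $M_{ij}$ and $M_{ji}$ together by symmetry, which merely rescales constants). Taking expectations, the first-order term vanishes because $\E\tilde X_{ij}=\E Y_{ij}=0$ and the first derivative is evaluated at a matrix independent of the variable being integrated; the second-order term vanishes because $\E\tilde X_{ij}^2=\E Y_{ij}^2$ by the construction of $\bU$. Only the third-order Taylor remainder survives. Using the derivative identity $\partial_{M_{ij}}\Phi\propto\mu_{\bM}(\<\bsigma_i,\bsigma_j\>)$ from the proof of Lemma~\ref{lemma:ConcentrationPhi}, each further derivative contributes a factor of $\beta$ and a Gibbs cumulant of $\<\bsigma_i,\bsigma_j\>$ whose modulus is at most~$1$, yielding the uniform estimate $|\partial^3_{M_{ij}}\Phi|\le C\beta^2$.

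The third moments satisfy $\E|\tilde X_{ij}|^3\le \|\tilde X_{ij}\|_\infty\,\E\tilde X_{ij}^2\le Cp/d^{3/2}$ and $\E|Y_{ij}|^3=O((p/d)^{3/2})$ with $p\in\{a/n,b/n\}=\Theta(d/n)$. Summing the per-entry bound $C\beta^2(\E|\tilde X_{ij}|^3+\E|Y_{ij}|^3)$ over the $\Theta(n^2)$ off-diagonal pairs yields a total error of order $\beta^2\cdot n^2\cdot(d/n)/d^{3/2}=\beta^2 n/\sqrt d$; dividing by $n$ produces $O(\beta^2/\sqrt d)$, and a careful accounting of constants (using the exact split of same-block and cross-block pairs so that $a+b=2d$ enters the numerator) delivers the advertised $2\beta^2/\sqrt d$. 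The main obstacle is the uniform third-derivative bound on $\Phi$ together with proper bookkeeping for the symmetric parametrization of $\bM$ and for the fact that the Taylor remainder is evaluated at an intermediate matrix whose $(i,j)$-entry itself depends on the variable being swapped — all of which reduces to the $\bM$-uniformity of the Gibbs-cumulant estimate coming from $|\<\bsigma_i,\bsigma_j\>|\le 1$.
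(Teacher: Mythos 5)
Your proof is the Lindeberg-swap argument that the paper itself uses (its proof of Lemma~\ref{lemma:Interpolation1} invokes the stated Lemma~\ref{lemma:Lindeberg}, computes $\|\partial^3_{ij}\Phi\|_\infty\le 6\beta^2$ via the same Gibbs-cumulant identities, and bounds the aggregate third moment $S_3\le 2n/\sqrt d$ for $n\ge(15d)^2$), so this is essentially the same approach. Your extra observation about the diagonal is a real one — the paper applies the Lindeberg lemma only to the $n(n-1)/2$ off-diagonal entries and silently ignores that $(\bAc_G/\sqrt d)_{ii}=-\sqrt d/n$ while $D_{ii}=\lambda/n$ — but since the diagonal only contributes an additive constant $\sum_i M_{ii}$ to $\Phi$, the resulting discrepancy $(\sqrt d+\lambda)/n=O(d^{-3/2})$ is negligible in the regime where the lemma is actually used (large $\beta$, large $d$); a fully rigorous statement would carry it as an extra $O(d^{-3/2})$ error term rather than claim it is absorbed into $2\beta^2/\sqrt d$ for \emph{all} $\beta>0$.
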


\begin{lemma}\label{lemma:Interpolation2}
With the above definitions, there exists an absolute constant $n_0$
such that, for all $n\ge n_0$, 
\begin{align}
\left|\frac{1}{n}\E\Phi\big(\beta,k;\bB\big)-\frac{1}{n}\E\Phi\big(\beta,k;\bD\big)\right|\le 5\sqrt{\frac{a-b}{d}}\, .
\end{align}
\end{lemma}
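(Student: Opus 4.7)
My plan is to combine a $\beta$- and $k$-free Lipschitz estimate for $\Phi$ with a variance-matched Gaussian coupling that makes $\bU - \bW$ behave like a small Wigner-type matrix of variance proportional to $(a-b)/(nd)$. The first observation is that, because $\|\bsigma_i\|_2 = 1$ for all $i$, the Hamiltonian $H_\bM$ depends on the diagonal of $\bM$ only through the trace, so $\Phi(\beta,k;\bM) = \Phi(\beta,k;\bM^{\circ}) + \sum_i M_{ii}$, where $\bM^{\circ}$ is $\bM$ with zeroed diagonal. Since $B_{ii} - D_{ii} = W_{ii}$ is mean zero, taking expectations gives $\E\Phi(\bB) - \E\Phi(\bD) = \E\Phi(\bB^{\circ}) - \E\Phi(\bD^{\circ})$, and the diagonal of $\bW$ plays no role.

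Next, using the variational representation $\Phi(\beta,k;\bM) = \sup_{\rho}\{\rho(H_\bM) - \beta^{-1}D(\rho\|\nu)\}$ (equivalently, convexity of $\bM \mapsto \Phi$) together with the fact that each Gram matrix $X(\bsigma) = (\<\bsigma_i,\bsigma_j\>)_{ij}$ is PSD with trace $n$, one obtains the $\beta$- and $k$-free bound
\[
|\Phi(\beta,k;\bM_1) - \Phi(\beta,k;\bM_2)| \le n\,\|\bM_1 - \bM_2\|_{op}.
\]
This step is the key to obtaining a bound that does not depend on $\beta$: a Lindeberg-type expansion would instead introduce $\partial^2 \Phi/\partial M_{ij}^2 \propto \beta$ and destroy this independence.

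For the operator-norm part, I realize $\bU$ and $\bW$ on a common probability space via $\bU = \bG + \bE_U$, $\bW = \bG + \bE_W$, where $\bG$ is a symmetric Gaussian matrix with entry-wise variance $\sigma_{\min}^2(i,j) = \min\{\Var(U_{ij}),\Var(W_{ij})\}$ and $\bE_U,\bE_W$ are independent centered Gaussians carrying the residual variances (so in each coordinate exactly one of $\bE_U,\bE_W$ vanishes). Assuming $a > b$, a direct computation using $d = (a+b)/2$ yields
\[
\Var\bigl((\bU-\bW)_{ij}\bigr) = \bigl|\Var(U_{ij})-\Var(W_{ij})\bigr| = \frac{a-b}{2nd} + O\!\left(\frac{d}{n^2}\right)
\]
on \emph{both} same-block and diff-block pairs. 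Hence $\bU^{\circ}-\bW^{\circ}$ is a centered symmetric Gaussian matrix with independent off-diagonal entries of variance $\bigl(\tfrac{a-b}{2nd}\bigr)\bigl(1+o_n(1)\bigr)$, and standard nonasymptotic operator-norm bounds give $\E\|\bU^{\circ}-\bW^{\circ}\|_{op} \le \sqrt{2(a-b)/d}\,(1+o_n(1))$.

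Combining the three steps,
\[
\frac{1}{n}\bigl|\E\Phi(\bB) - \E\Phi(\bD)\bigr| \le \E\|\bU^{\circ}-\bW^{\circ}\|_{op} \le \sqrt{\tfrac{2(a-b)}{d}}\,(1+o_n(1)) \le 5\sqrt{\tfrac{a-b}{d}}
\]
for $n \ge n_0$. The main technical obstacle I anticipate is being careful with constants: the leading bound $\sqrt 2\,\sqrt{(a-b)/d}$ is comfortably below the target $5\sqrt{(a-b)/d}$, so there is room, but one must handle the $O(d/n^2)$ corrections to the variance uniformly in $a,b,d$ and use a nonasymptotic rather than semicircle-law bound for the Wigner-type operator norm in order to absorb these corrections into the absolute constant $n_0$.
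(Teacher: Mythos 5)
Your proof is correct and takes a genuinely different, tighter route through the coupling step, though the overall two-step skeleton (a $\beta,k$-free Lipschitz bound $|\Phi(\beta,k;\bM_1)-\Phi(\beta,k;\bM_2)|\le n\|\bM_1-\bM_2\|_{op}$, then a Gaussian coupling controlling $\E\|\bW-\bU\|_{op}$) is the same as the paper's. Where the paper writes $\bW=\bZ_0+\bZ_2$ and $\bU=\bZ_0+\bZ_1$ with a common low-variance piece $\bZ_0$ and then applies the triangle inequality $\|\bZ_2-\bZ_1\|_{op}\le\|\bZ_1\|_{op}+\|\bZ_2\|_{op}$ (incurring roughly a factor of two of slack, giving $\approx 3\sqrt{(a-b)/d}$), you use a minimum-variance coupling $\bU=\bG+\bE_U$, $\bW=\bG+\bE_W$ and exploit the algebraic identity $a-d=d-b=(a-b)/2$ to observe that $|\Var(U_{ij})-\Var(W_{ij})|=\tfrac{a-b}{2nd}\bigl(1+O(d/n)\bigr)$ is the \emph{same} on within-block and across-block pairs. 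This makes $\bU-\bW$ a single near-homogeneous Wigner-type matrix and gives the sharper estimate $\sqrt{2(a-b)/d}$, comfortably inside the paper's constant $5$. Your observation that the diagonal enters only through the trace (which has matching expectation) is also a clean way to dispense with the $W_{ii}$ terms; the paper is a bit loose here, assigning $\bZ_2$ diagonal variance $1/n$ even though $W_{ii}$ has variance $2/n$, which does not affect the final bound but is an unexplained mismatch.

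One small caveat worth spelling out if you write this up fully: the $O(d/n^2)$ correction to the entrywise variance means $n_0$ is not literally absolute but depends on $a,b$ (e.g.\ one wants $n\gtrsim d^2/(a-b)$ so that the correction is dominated by the main term $(a-b)/(2nd)$). This is, however, the same dependence implicitly present in the paper's own proof, and is consistent with how the lemma is consumed downstream in Theorem \ref{thm:Approx}, which only requires $n\ge n_0(a,b)$.
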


%
%
\subsection{Proof of Lemma \ref{lemma:Interpolation1}}

We use the following Lindeberg interpolation lemma, see e.g.  \cite{tao2012topics,chatterjee2005simple}.
\begin{lemma}\label{lemma:Lindeberg}
Let $F:\reals^N\to\reals$ be three times continuously differentiable.
Further, let $\bX = (X_1,\dots,X_N)$ and  $\bZ = (Z_1,\dots, Z_N)$ be two vectors of 
independent random variables, satisfying $\E\{X_i\} = \E\{Z_i\}$, $\E\{X_i^2\} = \E\{Z_i^2\}$
for each $i\in \{1,\dots, N\}$. Then, we have
\begin{align}
\big|\E\big\{F(\bX)-F(\bZ)\big\} \big|&\le  \frac{1}{6}\, S_3\, \max_{i\in [N]}\|\pai^3
  F\|_{\infty}\, , \\
S_3 &\equiv\sum_{i=1}^N\Big\{\E[|X_i|^3] +\E[|Z_i|^3]\Big\}\, .\label{eq:T3Def}
\end{align} 
where $\pai^{\ell} F(\bx)\equiv \frac{\partial^{\ell} F}{\partial x_i^{\ell}}$, and $\|\pai^{\ell} F\|_{\infty}\equiv \sup_{\bx\in\reals^N}|\pai^{\ell}F(\bx)|$.
\end{lemma}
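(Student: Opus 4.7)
The plan is to use the classical Lindeberg hybrid argument, swapping coordinates of $\bX$ for coordinates of $\bZ$ one at a time and controlling each swap by a second-order Taylor expansion. First I would introduce the interpolating sequence $\bW_k := (Z_1,\ldots,Z_k,X_{k+1},\ldots,X_N)$ for $k=0,1,\ldots,N$, so that $\bW_0 = \bX$ and $\bW_N = \bZ$, and write the telescoping identity
\[
F(\bZ) - F(\bX) \;=\; \sum_{k=1}^{N}\big[F(\bW_k)-F(\bW_{k-1})\big].
\]
It then suffices to bound the expectation of each term and sum. We may of course assume $\max_i\|\partial_i^3 F\|_\infty<\infty$ and each $\E|X_i|^3,\E|Z_i|^3<\infty$, else the inequality is vacuous.

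For fixed $k$, I would define the ``pivot'' $\bV_k := (Z_1,\ldots,Z_{k-1},0,X_{k+1},\ldots,X_N)$, so that $\bW_{k-1} = \bV_k + X_k\bfe_k$ and $\bW_k = \bV_k + Z_k\bfe_k$. The essential structural observation is that $\bV_k$ is independent of the pair $(X_k,Z_k)$, since the coordinates of $\bX$ and of $\bZ$ are jointly independent. Applying Taylor's theorem in the $k$-th coordinate around $\bV_k$ yields, for every $Y\in\reals$,
\[
F(\bV_k + Y\bfe_k) \;=\; F(\bV_k) + Y\,\partial_k F(\bV_k) + \tfrac{Y^2}{2}\,\partial_k^2 F(\bV_k) + R_k(Y),\qquad |R_k(Y)|\le \frac{|Y|^3}{6}\,\|\partial_k^3 F\|_{\infty}.
\]

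Subtracting the $Y=Z_k$ version from the $Y=X_k$ version and taking expectations, the independence of $\bV_k$ from $(X_k,Z_k)$ lets each expectation factor, and then the moment-matching hypotheses $\E X_k=\E Z_k$ and $\E X_k^2=\E Z_k^2$ annihilate the zeroth-, first-, and second-order contributions. Only the remainders survive:
\[
\big|\E[F(\bW_{k-1})-F(\bW_k)]\big| \;\le\; \tfrac{1}{6}\,\|\partial_k^3 F\|_{\infty}\big(\E|X_k|^3+\E|Z_k|^3\big).
\]
Summing over $k$ and replacing each $\|\partial_k^3 F\|_{\infty}$ by $\max_{i\in[N]}\|\partial_i^3 F\|_\infty$ recovers exactly the claimed bound with the prefactor $1/6$ and the constant $S_3$ from (\ref{eq:T3Def}). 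The argument is entirely elementary, so there is no genuine obstacle; the only point requiring care is the independence of the pivot $\bV_k$ from the two swapped entries $X_k$ and $Z_k$, which is precisely where the full independence of the coordinates of $\bX$ and of $\bZ$ is used.
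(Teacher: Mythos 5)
Your proof is correct, and it is the standard Lindeberg telescoping/swap argument. The paper does not actually prove this lemma but simply cites it to Tao's random matrix book and Chatterjee's note; your argument is precisely the one those references give (hybridize one coordinate at a time, Taylor-expand to third order around a pivot with the $k$-th slot zeroed, use independence to factor the expectation of the first two Taylor terms and cancel them via the moment-matching hypotheses, then sum the cubic remainders). The one point worth making fully explicit, which you note, is that the argument requires $\bV_k$ to be independent of $(X_k,Z_k)$; since the conclusion depends only on the marginal laws of $\bX$ and $\bZ$, one may always realize $\bX$ and $\bZ$ on a common probability space as independent of each other, so this is harmless.
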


We apply this to the function $\bM\mapsto\Phi(\beta,k;\bM)$ with $N= n(n-1)/2$, to compare the the two 
sets of independent random variables $\bD = \{D_{ij}\}_{i<j}$ and $\bM = \{M_{ij}\}_{i<j}$ where $\bM = \bAc/\sqrt{d}$. 
It is immediate to check the equality of the first two moments. Indeed
 \begin{align}
\E\{D_{ij}\} =\E\{M_{ij}\} =\begin{cases}
(a-b)/(2n\sqrt{d})& \mbox{ if $\{i,j\}\subseteq S_1$ or $\{i,j\}\subseteq S_2$,}\\
-(a-b)/(2n\sqrt{d}) & \mbox{ if $i\in S_1,\, j\in S_2$ or $i\in
  S_2 ,j\in S_1$,}
\end{cases}
\end{align}
and
 \begin{align}
\Var(D_{ij}) =\Var(M_{ij})=\begin{cases}
a[1-a/n]/(nd)& \mbox{ if $\{i,j\}\subseteq S_1$ or $\{i,j\}\subseteq S_2$,}\\
b[1-b/n]/(n d) & \mbox{ if $i\in S_1,\, j\in S_2$ or $i\in
  S_2 ,j\in S_1$,}
\end{cases}
\end{align}

Next we compute the partial derivatives of $\bM\mapsto\Phi(\beta,k;\bM)$. To this end,
it is convenient to define the following Gibbs probability measure over $(\bbS^{k-1})^n$,
which is naturally associated to the free energy $\Phi$:
\begin{align}
\mu_{\bM} (\bsigma) &= \frac{ \exp(\beta H_\bM(\bsigma) )}{ \int \exp(\beta H_\bM(\btau)) \de \nu(\btau) } \, \de\nu(\bsigma)\, .
\end{align}
where
\begin{align}
H_{\bM}(\bsigma) =
\<\bsigma,\bM\bsigma\> = \sum_{i,j=1}^n
M_{ij} \<\bsigma_i,\bsigma_j\>\, .
\end{align}
(The same construction was useful in Section \ref{sec:MainTech}. We
repeat it here for the reader's convenience.)
It is then immediate to get (letting $\partial_{ij}\equiv \frac{\partial\phantom{M}}{\partial M_{ij}}$):
\begin{align}
\partial_{ij}\Phi(\beta,k;\bM) &= \mu_{\bM}(\<\bsigma_i,\bsigma_j\>)\, ,\\
\partial^2_{ij}\Phi(\beta,k;\bM) &= \beta\Big(\mu_{\bM}(\<\bsigma_i,\bsigma_j\>^2)-\mu_{\bM}(\<\bsigma_i,\bsigma_j\>)^2\Big)\, ,\\
\partial^3_{ij}\Phi(\beta,k;\bM) &= \beta^2\Big(\mu_{\bM}(\<\bsigma_i,\bsigma_j\>^3)-3 \mu_{\bM}(\<\bsigma_i,\bsigma_j\>^2) \mu_{\bM}(\<\bsigma_i,\bsigma_j\>)
+2 \mu_{\bM}(\<\bsigma_i,\bsigma_j\>)^3\Big)
\, ,
\end{align}
where we used the convention of letting $\mu_{\bM}(f(\bsigma))$ denote the expectation of $f(\bsigma)$ with respect to the probability
measure $\mu_{\bM}$. In particular, the above imply
\begin{align}
\big\|\partial^3_{ij}\Phi\big\|_{\infty}\le 6\beta^2\, .\label{eq:ThirdDerivative}
\end{align}

We are finally left with the task of bounding the sum of third moments defined in Eq. (\ref{eq:T3Def}). Note that
$M_{ij} = (1-(d/n))/\sqrt{d}$  if $(i,j)\in E(G)$ and $M_{ij} = -\sqrt{d}/n$  otherwise.
Hence, we have
\begin{align}
\E\{|M_{ij}|^3\} \le \begin{cases}
(a/n)d^{-3/2}+(\sqrt{d}/n)^3& \mbox{ if $\{i,j\}\subseteq S_1$ or $\{i,j\}\subseteq S_2$,}\\
(b/n)d^{-3/2}+(\sqrt{d}/n)^3 & \mbox{ if $i\in S_1,\, j\in S_2$ or $i\in
  S_2 ,j\in S_1$,}
\end{cases}
\end{align}
Therefore
\begin{align}
S_3 &= \sum_{1\le i<j\le n}\E\big(|D_{ij}|^3\big) + \sum_{1\le i<j\le n}\E\big(|M_{ij}|^3\big) \\
&\le 
\frac{n^2}{2} \, 4\,\E\left\{\Big(\frac{\lambda}{n}\Big)^3+\Big(\frac{a}{n}\Big)^{3/2}|Z|^3\right\}
+ \frac{n^2}{4}\left\{\frac{a}{nd^{3/2}}+\left(\frac{\sqrt{d}}{n}\right)^3\right\}
+ \frac{n^2}{4}\left\{\frac{b}{nd^{3/2}}+\left(\frac{\sqrt{d}}{n}\right)^3\right\}\\
& \le
\frac{2\lambda^3}{n}+4n^{1/2}a^{3/2}+\frac{n}{2d^{1/2}}+\frac{d^{3/2}}{2n}\\
& \le 5 n^{1/2}a^{3/2}+ \frac{n}{d^{1/2}} \le \frac{2n}{\sqrt{d}}\, , \label{eq:S3Bound}
\end{align}
where the last two inequalities hold for $n\ge (15d)^2$. 

Finally, using Lemma \ref{lemma:Lindeberg} with
Eq.~(\ref{eq:ThirdDerivative}) and the bound (\ref{eq:S3Bound}) we
obtain 
\begin{align}
\big|\E\Phi(\beta,k;\bM)-\E\Phi(\beta,k;\bD)\big|\le \frac{2\beta^2
  n}{\sqrt{d}}\, ,
\end{align}
which is the required claim.
%
%
\subsection{Proof of Lemma \ref{lemma:Interpolation2}}

This proof is by coupling. We first observe that (here the scalar
product  $\<\bsigma,\bM\bsigma\>$ is to be interpreted as a product
between matrices with $\bsigma\in\reals^{n\times k}$)
\begin{align}
\Phi(\beta,k;\bB) & =  \frac{1}{\beta}\, \log\left\{\int \,
                    \exp\big\{\beta\<\bsigma,\bB\bsigma\>\big\}\,\de\nu(\bsigma)\right\}\\
&=  \frac{1}{\beta}\, \log\left\{\int \,
                    \exp\big\{\beta\<\bsigma,\bD\bsigma\>\
                    +\beta\<(\bB-\bD),\bsigma\bsigma^{\sT}\>\big\}\,\de\nu(\bsigma)\right\}\\
&\le  \frac{1}{\beta}\, \log\left\{\int \,
                    \exp\big\{\beta\<\bsigma,\bD\bsigma\>\
                    +n\beta\|\bB-\bD\|_{op}\big\}\,\de\nu(\bsigma)\right\}\\
& \le \Phi(\beta,k;\bB)  + n \|\bB-\bD\|_{op}\, ,
\end{align}
where we used $\|\bsigma\bsigma^{\sT}\|_* = \|\bsigma\|_F^2 = n$ (with
$\|\,\cdot\,\|_*$ denoting the nuclear norm). Hence
\begin{align}
\Big|\frac{1}{n}\Phi(\beta,k;\bB)
-\frac{1}{n}\Phi(\beta,k;\bD)\Big|\le \|\bB-\bD\|_{op}\, . \label{eq:MatrixCont}
\end{align}
In order to couple $\bB$ and $\bD$ se construct three independent
symmetric Gaussian random matrices $\bZ_0$, $\bZ_1$,
$\bZ_2\in\reals^{n\times n}$ as follows. All of the three matrices
have centered independent entries, differ in the
variances. Setting $v(a) =(a/(nd))\, (1-a/n)$, and $v(b)
=(b/(nd))\, (1-b/n)$, we let
\begin{align}
\Var(Z_{0,ij}) & = \begin{cases}
v(b) & \mbox{ if $i\neq j$,}\\
0& \mbox{ if $i=j$,}
\end{cases}\\
\Var(Z_{1,ij}) &= \begin{cases}
v(a)-v(b) & \mbox{ if $\{i,j\}\subseteq S_1$ or $\{i,j\}\subseteq
  S_2$, and $i\neq j$,}\\
0& \mbox{otherwise,}
\end{cases}
\end{align}
and, finally,
\begin{align}
\Var(Z_{2,ij}) & = \begin{cases}
(1/n)-v(b) & \mbox{ if $i\neq j$,}\\
(1/n)& \mbox{ if $i=j$.}
\end{cases}
\end{align}
It is therefore easy to see that
\begin{align}
\bB & = \lambda\, \bv\, \bv^{\sT} +  \bZ_0 + \bZ_2\, ,\\
\bD & = \lambda\, \bv\, \bv^{\sT} +  \bZ_0 + \bZ_1\, .
\end{align}
Hence using Eq.~(\ref{eq:MatrixCont}) and triangular inequality
\begin{align}
\Big|\frac{1}{n}\E\Phi(\beta,k;\bB)
-\frac{1}{n}\E\Phi(\beta,k;\bD)\Big|& \le \E\|\bZ_1\|_{op}+
\E\|\bZ_2\|_{op}\\
&\le 2.1\sqrt{1-nv(b)}+ 2.1\sqrt{n(v(a)-v(b))/2} \\
&\le 5\sqrt{\frac{a-b}{d}}
 \, ,
\end{align}
where the last bounds hold for all $n\ge n_0$
by standard estimates  on the eigenvalues of GOE matrices
\cite{AGZ}.
%
%
\section{Proof of Theorem \ref{thm:Gaussian}.$(a)$ (deformed GOE
  matrices, $\lambda\le 1$)}
\label{app:ProofGaussianA}

In this section we prove part $(a)$ of  Theorem \ref{thm:Gaussian}. We
start with two useful technical facts, and then present the actual
proof. Throughout $\bB(\lambda) = (\lambda/n)\, \bone\bone^{\sT}+
\bW$, with $\bW\sim\GOE(n)$
is defined as per Eq.~(\ref{eq:Bdefinition}).

\subsection{Two technical lemmas}

\begin{lemma}\label{lemma:Monotoniciy}
For any fixed $\bW$, the function $\lambda\mapsto \SDP(\bB(\lambda))$ is monotone nondecreasing.
\end{lemma}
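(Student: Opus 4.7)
The plan is to argue directly from the definition of $\SDP$ as a maximum over the elliptope $\PSD_1(n)$, using only the fact that every feasible $\bX$ is positive semidefinite.

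Fix $\bW$ and consider any two values $\lambda_1 \le \lambda_2$. For an arbitrary feasible matrix $\bX \in \PSD_1(n)$,
\begin{align*}
\langle \bB(\lambda_2),\bX\rangle - \langle \bB(\lambda_1),\bX\rangle
= \frac{\lambda_2-\lambda_1}{n}\,\langle \bone\bone^{\sT},\bX\rangle
= \frac{\lambda_2-\lambda_1}{n}\, \bone^{\sT}\bX\,\bone.
\end{align*}
Since $\bX \succeq 0$, the quadratic form $\bone^{\sT}\bX\,\bone$ is nonnegative, so the right-hand side is nonnegative. Thus $\langle \bB(\lambda_2),\bX\rangle \ge \langle \bB(\lambda_1),\bX\rangle$ for every feasible $\bX$.

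Taking the supremum over $\bX \in \PSD_1(n)$ on both sides yields $\SDP(\bB(\lambda_2)) \ge \SDP(\bB(\lambda_1))$, which is the claim. There is no real obstacle here: the lemma is essentially the envelope-theorem observation that the derivative of the SDP value with respect to $\lambda$ equals $(1/n)\,\bone^{\sT}\bX_\star\,\bone$ at any optimizer $\bX_\star$, and this is always $\ge 0$ by positive semidefiniteness of $\bX_\star$. The one-line proof above bypasses differentiability issues by comparing values at two $\lambda$'s directly.
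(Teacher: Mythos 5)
Your proof is correct and is essentially the same argument as the paper's: both reduce to the observation that $\bone^{\sT}\bX\bone\ge 0$ for any $\bX\succeq 0$. The paper evaluates at an optimizer for $\lambda_1$ and chains inequalities; you show the pointwise inequality over all feasible $\bX$ and take the supremum, which is an equivalent phrasing.
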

\begin{proof}
Let $\lambda_1\le \lambda_2$ and choose $\bX_*\in\PSD_1(n)$ such that
$\<\bB(\lambda_1),\bX_*\> = \SDP(\bB(\lambda_1))$ (this exists since
$\PSD_1(n)$ is compact). Then 
\begin{align}
\SDP(\bB(\lambda_2)) &\ge \<\bB(\lambda_2),\bX_*\> \\
& \ge
  \<\bB(\lambda_1)+(\lambda_2-\lambda_1)\bone\bone^{\sT}/n,\bX_*\>\\
&\ge \SDP(\bB(\lambda_1)) \, ,
\end{align}
where the last inequality follows since $\bX_*\succeq 0$.
\end{proof}

\begin{lemma}\label{lemma:BoundProj}
Fix $\delta\in (0,1]$ and  $k(n)=\lfloor n\delta\rfloor$. Let
$\bU\in\reals^{n\times k(n)}$ be a uniformly random (Haar measure) orthogonal matrix
(in particular $\bU^{\sT}\bU= \id_{k(n)}$). Then there exists
$C=C(\delta)$ such that, for any fixed basis vector $\bfe_i$, 
\begin{align}
\prob\Big(\max_{1\le i\le n}\big| \|\bU^{\sT}\bfe_i\|_2^2-\delta\big|\ge
C\sqrt{\frac{\log n}{n}}\Big)\ge 1-\frac{1}{n^{20}}\, .
\end{align}
\end{lemma}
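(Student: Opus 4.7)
The plan is to reduce $\|\bU^\sT\bfe_i\|_2^2$ to a ratio of independent chi-squared variables, and then combine standard Laurent--Massart tail bounds with a union bound over $i\in[n]$.

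First I would observe that $\|\bU^\sT\bfe_i\|_2^2 = \bfe_i^\sT \bU\bU^\sT \bfe_i = P_{ii}$, where $\bP = \bU\bU^\sT$ is a uniformly random orthogonal projection of rank $k = k(n)$. By embedding $\bU$ as the first $k$ columns of a Haar-distributed orthogonal $\bO\in\reals^{n\times n}$, the $i$-th row of $\bO$ has the same law as $\bg/\|\bg\|$ for $\bg\sim \normal(0,\id_n)$, so
\begin{align}
P_{ii} \ed \frac{\sum_{j=1}^k g_j^2}{\sum_{j=1}^n g_j^2} \ed \frac{\|\bg_k\|_2^2}{\|\bg_k\|_2^2 + \|\bh_{n-k}\|_2^2}\, ,
\end{align}
with $\bg_k\sim\normal(0,\id_k)$ and $\bh_{n-k}\sim\normal(0,\id_{n-k})$ independent; equivalently $P_{ii}\sim\dBet(k/2,(n-k)/2)$, which has mean $k/n$.

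Next I would apply the standard chi-squared concentration inequality (Laurent--Massart): for $\bz\sim\normal(0,\id_m)$ and $x\ge 0$,
\begin{align}
\prob\Big(\big|\|\bz\|_2^2 - m\big|\ge 2\sqrt{mx}+2x\Big)\le 2e^{-x}\, .
\end{align}
Taking $x = 25\log n$ and using $m\le n$ shows that with probability at least $1-4/n^{25}$ there exist random $X,Y$ with $|X|,|Y|\le C_0\sqrt{n\log n}$ (for some absolute $C_0$) such that $\|\bg_k\|_2^2 = k+X$ and $\|\bg_k\|_2^2+\|\bh_{n-k}\|_2^2 = n+Y$. Then
\begin{align}
\Big|P_{ii}-\frac{k}{n}\Big| \;=\; \left|\frac{k+X}{n+Y}-\frac{k}{n}\right| \;=\; \frac{|nX - kY|}{n(n+Y)} \;\le\; \frac{2C_0 n\sqrt{n\log n}}{n^2/2} \;\le\; C_1\sqrt{\frac{\log n}{n}}\, ,
\end{align}
where I used $k\le n$ and $|Y|\le n/2$ for all large $n$. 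Since $|k/n-\delta|\le 1/n$ by the floor, combining gives $|\,\|\bU^\sT\bfe_i\|_2^2-\delta\,|\le C_2(\delta)\sqrt{\log n/n}$ for one fixed $i$, outside an event of probability $\le 4/n^{25}$.

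Finally, a union bound over $i\in[n]$ yields the desired event with probability at least $1-4/n^{24}\ge 1-n^{-20}$ for $n$ large. There is essentially no obstacle here beyond bookkeeping: the only minor points are ensuring $|Y|\le n/2$ (true on the high-probability event for large $n$), absorbing the $1/n$ floor error into $C(\delta)$, and verifying the dependence on $\delta$ enters only through the constant (since $\delta\in(0,1]$ is fixed). For small $n$ one can trivially absorb the bound into the constant $C(\delta)$.
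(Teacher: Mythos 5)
Your proof is correct and follows essentially the same route as the paper's: reduce $\|\bU^{\sT}\bfe_i\|_2^2$ by rotation invariance to a ratio $Z_k/(Z_k+Z_{n-k})$ of independent chi-squared variables (equivalently a $\dBet(k/2,(n-k)/2)$ variable), apply standard chi-squared tail bounds, and union-bound over $i\in[n]$; the paper simply cites ``standard tail bounds'' where you spell out Laurent--Massart. (Note the lemma as printed has the inequality sign reversed inside the probability; you correctly proved the intended $\le$ version, which is what is used later in Eqs.~(\ref{eq:BoundDii}) and (\ref{eq:U00bound}).)
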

\begin{proof}
In order to lighten the notation, we can assume $n\delta$ to be an integer.

Let $\bP=\bU\bU^{\sT}$ be the orthogonal projector on the column space
of $\bU$. 
By the invariance of the Haar measure under rotations, this 
is a projector onto a uniformly random subspace of $n\delta$ dimension
in $\reals^n$, and $Y_i \equiv \|\bU^{\sT}\bfe_i\|_2^2= \<\bfe_i,\bP\bfe_i\> = \|\bP\bfe_i\|_2^2$. Inverting the role of $\bP$
and $\bfe_i$, we see that $Y_{ii}$ is distributed as the square norm of
the first $n\delta$ components of a uniformly random unit vector of $n$
dimensions.
Hence
\begin{align}
Y_{i} \ed \frac{Z_{n\delta}}{Z_{n\delta}+Z_{n(1-\delta)}}\, ,
\end{align}
where $Z_{\ell}\sim \chi^2(\ell)$, $\ell\in\{n\delta,n(1-\delta)\}$ denote
two independent chi-squared random variable with $\ell$
degrees of freedom.
Standard tail bounds on chi-squared random variables imply the claim.
\end{proof}

\subsection{Proof of Theorem \ref{thm:Gaussian}.$(a)$}

We first note that
\begin{align}
\frac{1}{n}\SDP(\bB(\lambda)) \le \xi_1(\bB(\lambda))\le 2+o_n(1)\, ,
\end{align}
where the last inequality holds with high probability, by, e.g., \cite{knowles2013isotropic}[Theorem 2.7]. 

It is therefore sufficient to prove that, for any $\eps>0$,
$\SDP(\bB(\lambda))/n\ge 2-\eps$ with probability converging to one as
$n\to\infty$. By Lemma \ref{lemma:Monotoniciy}, we only need to prove
this for $\lambda =0$, i.e. to lower bound $\SDP(\bW)$ for $\bW\sim\GOE(n)$. We will
achieve this by constructing a witness, i.e. a feasible point
$\bX\in\PSD_1(n)$,  depending on $\bW$ such that $\<\bW,\bX\>/n\ge
2-\eps$ with high probability. 

A more general construction will be developed in Appendix
\ref{app:ProofGaussianB} to prove part $(b)$ of the Theorem. The case
$\lambda=0$ is however much simpler and we prefer to present it
separately here to build intuition.

Fix $\delta>0$, and let $\bu_1$, $\bu_2$,\dots $\bu_{n\delta}$ be the
eigenvectors of $\bW$ corresponding to the top $n\delta$
eigenvalues. Denote by $\bU\in\reals^{n\times (n\delta)}$, $\bU^{\sT}\bU=\id_{n\delta}$ the matrix whose
columns are $\bu_1$, $\bu_2$,\dots $\bu_{n\delta}$, and let
$\bD\in\reals^{n\times n}$ be the diagonal matrix with entries
\begin{align}
\bD_{ii} = (\bU\bU^{\sT})_{ii}\, .
\end{align}
Note that, by invariance of the GOE distribution under orthogonal
transformations, $\bU$ is a uniformly random orthogonal matrix.
Hence by Lemma \ref{lemma:BoundProj} and union bound
\begin{align}
\prob\Big(\max_{i\in [n]}|D_{ii}-\delta|\le C\sqrt{\frac{\log
  n}{n}}\Big) \ge 1-\frac{1}{n^{9}}\, ,\label{eq:BoundDii}
\end{align}
for $C=C(\delta)$ a suitable constant.

We then define our witness as
\begin{align}
\bX = \bD^{-1/2}\bU\bU^{\sT} \bD^{-1/2}\, .
\end{align}
Clearly $\bX\in\PSD_1(\bW)$ is a feasible point. 
Further, letting $\bE = \delta^{1/2}\bD^{-1/2}$
\begin{align}
\<\bW,\bX\>& = \frac{1}{\delta}\<\bW,\bU\bU^{\sT}\>- \frac{1}{\delta}\<\bW-\bE\bW\bE,\bU\bU^{\sT}\>\\
& \ge \frac{1}{\delta}\sum_{\ell=1}^{n\delta}\xi_{\ell}(\bW) -
\frac{1}{\delta}\|\bW-\bE\bW\bE\|_2\|\bU\bU^{\sT}\|_*\\
& \ge n\xi_{n\delta}(\bW) -
\frac{1}{\delta}\|\bW\|_2(1+\|\bE\|_2)\|\bE-\id\|_2\|\bU\bU^{\sT}\|_*\, .
\end{align}
Here $\|\bZ\|_*$ denotes the nuclear norm of $bZ$ (sum of the absolute
values of eigenvalues) and in the last inequality we used
$\|\bW-\bE\bW\bE\|_2\le \|\bW-\bE\bW\|_2 +\|\bE\bW-\bE\bW\bE\|\le \|\bW\|_2\|\bE-\id\|_2 +
\|\bE\|_2\|\bW\|_2 \|\bE-\id\|_2$. 

Next , since $\bU\bU^{\sT}$ is a projector on $n\delta$ dimensions, we have
$\|\bU\bU^{\sT}\|_*=n\delta$, whence
\begin{align}
\frac{1}{n}\, \<\bW,\bX\>\ge \lambda_{n\eps}(\bW)-
\|\bW\|_2(2+\|\bE-\id\|_2)\|\bE-\id\|_2\, .\label{eq:QW_Lower}
\end{align}
By Eq.~(\ref{eq:BoundDii}), we have $\|\bE-\id\|_2\to 0$ almost surely,
and  by a classical result \cite{AGZ}, 
also the following limits hold almost surely
\begin{align}
\lim_{n\to\infty}\|\bW\|_2 & = 2\, ,\\
\lim_{n\to\infty} \lambda_{n\delta}(\bW) & =\xi_*(\delta)\, ,
\end{align}
where $\xi_*(\delta)\uparrow 2$ as $\delta\to 0$. Indeed $\xi_*(\delta)$ can be
expressed explicitly in terms of Wigner semicircle law, namely, for
$\delta\in (0,1)$  it is
the unique positive solution of the following equation.
\begin{align}
\int_{\xi_*(\delta)}^2 \frac{\sqrt{4-x^2}}{2\pi}\;\de x = \delta\, .\label{eq:XiStarDef}
\end{align}
Substituting in Eq.~(\ref{eq:QW_Lower}), we get, almost surely (and as
consequence in probability)
\begin{align}
\lim\inf_{n\to\infty}\frac{1}{n}\, \<\bW,\bX\> \ge \xi_*(\delta)\ge 2-\eps\, .
\end{align}
where the last inequality holds by taking $\delta$ small enough.
%
%
\section{Proof of Theorem \ref{thm:Gaussian}.$(b)$ (deformed GOE
  matrices, $\lambda>1$)}
\label{app:ProofGaussianB}

We begin by recalling the definition of the deformed GOE matrix
$\bB=\bB(\lambda)$, given in  Eq.~(\ref{eq:Bdefinition}),
\begin{align}
\bB \equiv \frac{\lambda}{n}\, \bone\bone^{\sT}+ \bW\, ,
\end{align}
where $\bW\sim\GOE(n)$, and we denote by $(\bu_1,\xi_1)$, \dots,
$(\bu_n,\xi_n)$
denote the eigenpairs of $\bB$, namely
\begin{align}
\bB\bu_k = \xi_k\bu_k\, ,
\end{align}
where $\xi_1\ge \xi_2\ge \dots\ge \xi_n$.

The proof of Theorem  \ref{thm:Gaussian}.$(b)$ is based on the following
construction of a witness $\bX$, which depends on (small) parameters $\eps, \delta>0$ to be fixed
at the end. In order not to complicate the notation un-necessarily, we
will assume $n\delta$ to be an integer.
 Let $R:\reals\to \reals$ be a `capping' function, i.e.
\begin{align}
R(x) \equiv
\begin{cases}
1 & \mbox{ if $x \geq 1$,}\\
x & \mbox{ if $-1<x<1$,}\\
-1 & \mbox{ if $x\le -1$.}\\
\end{cases}
\end{align}
We then define $\bphi\in\reals^n$ by letting $\varphi_i \equiv
  R(\eps\sqrt{n} \, u_{1,i})$. 
We also define $\bU\in\reals^{n\times (n\delta)}$ as the matrix whose
  $i$-th column is $\bu_{i+1}$ (hence it contains the eigenvector
  $\bu_2,\dots \bu_{n\delta+1}$). Note that $\bU$ is an orthogonal
  matrix: $\bU^{\sT}\bU = \id_{n\delta}$.
Finally, we define $\bD\in\reals^{n\times n}$ to be a diagonal matrix
  with entries
\begin{align}
D_{ii} = \frac{\sqrt{1-\varphi^2_i}}{\|\bU^{\sT}\bfe_i\|_2}\, .
\end{align}
Our witness construction is defined as
\begin{align}
\bX = \bphi\bphi^{\sT} + \bD\bU\bU^{\sT}\bD\, .\label{eq:WitnessConstruction}
\end{align}

We analyze this construction through a sequence of lemmas. One  of the
proofs will use Lemma \ref{lemma:LLN_u}, to which we devote a separate
section.
Throughout
we assume the above definitions and the setting of 
  Theorem \ref{thm:Gaussian}.  We use $C, C_0,\dots$ to denote finite non-random
 universal  constants.  Without loss of generality, we will also assume
  $\lambda\in (1,C_0)$ for some $C_0>1$.

We start from an elementary fact.
\begin{lemma}\label{lemma:BnormBound}
There exists a constant $C$ such that
\begin{align}
\lim_{n\to\infty}\prob\big(\|\bB\|_2\ge C\big) = 0\, .\label{eq:Bnorm}
\end{align}
\end{lemma}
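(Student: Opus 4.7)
The plan is to prove the bound by triangle inequality, splitting $\bB$ into its deterministic rank-one part and the Gaussian part $\bW$. First I would write
\begin{equation*}
\|\bB\|_2 \le \frac{\lambda}{n}\,\|\bone\bone^{\sT}\|_2 + \|\bW\|_2 .
\end{equation*}
The first term equals $\lambda$ exactly, since $\bone\bone^{\sT}$ is a rank-one matrix with unique nonzero eigenvalue $n$; by the assumption $\lambda \in (1,C_0)$ this contribution is bounded by a universal constant $C_0$.

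For the second term, I would invoke the classical result on the extremal eigenvalues of the GOE, namely that $\|\bW\|_2 \to 2$ almost surely as $n \to \infty$, together with the Gaussian concentration bound for $\|\bW\|_2$ around its mean (see e.g.\ \cite{AGZ}, Section~2.3, which was already used in the proof of Lemma~\ref{lemma:MixedNorm}). Concretely, $\prob(\|\bW\|_2 \ge 3) \to 0$ as $n \to \infty$. Combining, $\|\bB\|_2 \le C_0 + 3 =: C$ with probability tending to one, which is (\ref{eq:Bnorm}).

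The main (and only) non-trivial ingredient is the spectral norm bound on the GOE, and this is standard and has already been cited earlier in the paper; there is no real obstacle. I expect the proof to occupy just a couple of lines.
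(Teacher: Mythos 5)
Your proof is correct and is essentially identical to the paper's: both split $\bB$ by the triangle inequality into the rank-one mean part (contributing exactly $\lambda$, bounded by $C_0$) and the GOE part $\bW$, then invoke the standard spectral-norm bound for GOE matrices from \cite{AGZ}.
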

\begin{proof}
It follows from triangular inequality that $\|\bB\|_2\le
\lambda+\|\bW\|_2$. Hence the claim follows by standard bounds on the eigenvalues of GOE
matrices \cite{AGZ}[Theorem 2.1.22].
\end{proof}

\begin{lemma}\label{lemma:Bphi}
There exists a constant $C>0$ such that, with high probability,
\begin{align}
\Big|\frac{1}{n}\<\bB,\bphi\bphi^{\sT}\>-\eps^2\xi_1\Big|\le C\,
  \eps^4\, .
\end{align}
\end{lemma}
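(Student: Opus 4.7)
\textbf{Proof plan for Lemma \ref{lemma:Bphi}.} The idea is to compare $\bphi$ with its uncapped version and to exploit the fact that $\bu_1$ is an eigenvector of $\bB$. Introduce
\[
\bpsi \equiv \eps\sqrt{n}\,\bu_1, \qquad \br \equiv \bpsi-\bphi,
\]
so that $r_i = \psi_i - R(\psi_i)$ vanishes whenever $|\psi_i|\le 1$ and satisfies $|r_i|\le |\psi_i|$ always. Expanding the quadratic form,
\[
\bphi^{\sT}\bB\bphi \;=\; \bpsi^{\sT}\bB\bpsi \;-\; 2\,\br^{\sT}\bB\bpsi \;+\; \br^{\sT}\bB\br.
\]
Since $\bB\bu_1=\xi_1\bu_1$, the first term is exactly $\eps^2 n\,\xi_1$, and the cross term simplifies as
\[
\br^{\sT}\bB\bpsi \;=\; \eps\sqrt{n}\,\xi_1\,\br^{\sT}\bu_1 \;=\; \xi_1\sum_{i:|\psi_i|>1}\bigl(\psi_i^2-|\psi_i|\bigr),
\]
which is nonnegative and bounded by $\xi_1\sum_{i:|\psi_i|>1}\psi_i^2$. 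For the quadratic error, $|\br^{\sT}\bB\br|\le \|\bB\|_{2}\|\br\|_2^2\le \|\bB\|_2\sum_{i:|\psi_i|>1}\psi_i^2$. By Lemma \ref{lemma:BnormBound}, $\|\bB\|_2$ is bounded with high probability, and $\xi_1$ is also bounded (this will be used again later when proving part $(b)$). It therefore suffices to show
\[
\sum_{i:\,|\psi_i|>1}\psi_i^2 \;\le\; C\,\eps^4\,n \quad \text{w.h.p.}
\]

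The central step is a delocalization/$\ell^4$ estimate for the top eigenvector of the deformed $\GOE$: I would establish that, for $\lambda\in(1,C_0)$, there exists $C$ such that $\|\bu_1\|_4^4\le C/n$ with high probability. Given this bound, Markov's inequality gives
\[
\sum_{i:\,|u_{1,i}|>t} u_{1,i}^2 \;\le\; t^{-2}\,\|\bu_1\|_4^4,
\]
and taking $t=1/(\eps\sqrt{n})$ yields $\sum_{i:|\psi_i|>1}\psi_i^2 = \eps^2 n\sum_{i:|u_{1,i}|>1/(\eps\sqrt{n})} u_{1,i}^2 \le \eps^2 n\cdot\eps^2 n\cdot (C/n) = C\eps^4 n$, which is exactly what we need. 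Combining the three pieces, $|\bphi^{\sT}\bB\bphi - \eps^2 n\,\xi_1|\le C\eps^4 n$, and dividing by $n$ concludes the proof.

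The main obstacle is the $\ell^4$ delocalization bound for $\bu_1$. For $\lambda>1$ the top eigenvector has a macroscopic overlap with $\bone/\sqrt{n}$ (by the BBAP picture recalled in \eqref{eq:BBAP}), so it is natural to decompose $\bu_1=\alpha\,\bone/\sqrt{n}+\beta\,\bv$ with $\bv\perp\bone$ and $\alpha^2+\beta^2=1$. The spike contributes $\alpha^4/n$ per term, while the orthogonal component $\bv$ can be handled by invariance arguments (or by Lemma \ref{lemma:LLN_u} stated in the paper, which controls linear functionals of the corresponding eigenvectors). Either approach shows $\sum_i u_{1,i}^4 = O(1/n)$ with high probability. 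Aside from this delocalization input, every other ingredient is elementary algebra and the eigenvalue/operator-norm bounds already established in Lemma \ref{lemma:BnormBound}.
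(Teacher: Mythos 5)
Your proof is correct and takes a related but genuinely different route from the paper's. The common starting point is the expansion of the quadratic form around $\bpsi=\eps\sqrt n\,\bu_1$ with remainder $\br=\bpsi-\bphi$ (the paper writes this as $\baR(\bpsi)$). The paper then bounds both the cross term and the quadratic term by Cauchy--Schwarz against $\|\bB\|_2$, which forces it to control $\frac1n\|\br\|_2^2$ down to order $\eps^6$; it does so via the pointwise inequality $\baR(x)^2\le x^6$ and an $\ell^6$-moment law of large numbers for $\bup_1$ (Lemma~\ref{lemma:LLN_u} with $a=6$, $b=0$). You instead exploit the eigenvector equation $\bB\bpsi=\xi_1\bpsi$ to collapse the cross term to $\xi_1\,\br^{\sT}\bpsi$, which together with $|r_i|\le|\psi_i|\,\bone_{\{|\psi_i|>1\}}$ reduces everything to the single quantity $\sum_{|\psi_i|>1}\psi_i^2$, and you bound that by Markov/Chebyshev from an $\ell^4$ delocalization estimate $\|\bu_1\|_4^4=O(1/n)$. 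That estimate does follow from Lemma~\ref{lemma:LLN_u} with $a=4$, $b=0$ after decomposing $\bu_1=z_1\bone/\sqrt n+\sqrt{1-z_1^2}\,\bup_1$ and using $(a+b)^4\le 8(a^4+b^4)$, exactly as you sketch. Net effect: you need only fourth moments of $\bup_1$ where the paper uses sixth, at the cost of a slightly weaker bound $\frac1n\|\br\|_2^2\lesssim\eps^4$ rather than $\eps^6$; both give the final $C\eps^4$ since the cross term is the leading contribution either way. One small point worth making explicit when you write it out: you need $|\xi_1|$ bounded as well as $\|\bB\|_2$, but both are covered by Lemma~\ref{lemma:BnormBound} since $|\xi_1|\le\|\bB\|_2$.
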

\begin{proof}
Define $x-R(x) \equiv \baR(x)$. Further, for a vector
$\bx=(x_1,x_2,\dots,x_n)$, we
write $\baR(\bx)$ for the vector obtained applying $\baR$
componentwise, i.e.  $\baR(\bx) =
(\baR(x_1),\baR(x_2),\dots,\baR(x_n))$.
We then have
\begin{align}
\Big|\frac{1}{n}\<\bB,\bphi\bphi^{\sT}\>-\eps^2\xi_1\Big| &=
\Big|\frac{1}{n}\<\bB,\bphi\bphi^{\sT}\>-\frac{1}{n}\<\bB,(\eps\sqrt{n}\bu_1)
(\eps\sqrt{n}\bu_1)^{\sT}\>\Big|\\
&\le
\frac{2}{n}\big|\<(\eps\sqrt{n}\bu_1),\bB\,\baR(\eps\sqrt{n}\bu_1)\>\big|+ 
\frac{1}{n}\big|\<\baR(\eps\sqrt{n}\bu_1),\bB\, \baR(\eps\sqrt{n}\bu_1)\>\big|\\
&\le 4\, \|\bB\|_2\frac{1}{\sqrt{n}}\big\|\baR(\eps\sqrt{n}\bu_1)\big\|_2\,
\max\Big(\eps\; ;\;
\frac{1}{\sqrt{n}}\big\|\baR(\eps\sqrt{n}\bu_1)\big\|_2\Big)\, .\label{eq:BoundsBat}
\end{align}

Note that 
\begin{align}
\baR(x)^2 = \begin{cases}
(|x|-1)^2 & \mbox{ if $|x|\ge 1$,}\\
0 & \mbox{ if $|x|<1$.}
\end{cases}
\end{align}
In particular $\baR(x)^2\le x^6$ for all $x$.  
We therefore have
\begin{align}
\frac{1}{n}\big\|\baR(\eps\sqrt{n}\bu_1)\big\|_2^2&=
\frac{1}{n}\sum_{i=1}^n \baR(\eps\sqrt{n} u_{1,i})^2\\
& \le \frac{\eps^6}{n}\sum_{i=1}^n (\sqrt{n} u_{1,i})^6\, .
\end{align}
Next we decompose $\bu_1 = z_1(\bone/\sqrt{n})+ \sqrt{1-z_1^2}\,
\bup_1$, where
$z_1 = |\<\bu_1,\bone\>|/\sqrt{n}\in [0,1]$, and $\<\bup_1,\bone\>=0$. Since
$(a+b)^6\le 2^5(a^6+b^6)$, we have
\begin{align}
\frac{1}{n}\big\|\baR(\eps\sqrt{n}\bu_1)\big\|_2^2&\le
\frac{\eps^6}{n}\sum_{i=1}^n 32\big(1+(\sqrt{n} \uper_{1,i})^6\big)\\
& \le 32\eps^6\left[1+\frac{1}{n}\sum_{i=1}^n (\sqrt{n}
  \uper_{1,i})^6\right]
\le C \eps^6\, ,
\end{align}
where the last inequality holds with high probability for some
absolute constant $C$ and all $n\ge n_0$, by Lemma \ref{lemma:LLN_u} below,
applied with $a=6$, $b=0$. Using this together with
Eq.~(\ref{eq:Bnorm}) in Eq.~(\ref{eq:BoundsBat}) we get
\begin{align}
\Big|\frac{1}{n}\<\bB,\bphi\bphi^{\sT}\>-\eps^2\xi_1\Big| &\le
C\eps^3\,\max(\eps;\; C\eps^3)\le C'\eps^4\, ,
\end{align}
which completes our proof.
\end{proof}

\begin{lemma}\label{lemma:DtoF}
Let $\bF\in\reals^{n\times n}$ be a diagonal matrix with entries
$F_{ii} = \sqrt{1-\varphi^2_i}$. Then, there exists a constant
$K=K(\delta)$ such that,  with high probability,
\begin{align}
\Big|\frac{1}{n}\<\bB,\bD\bU\bU^{\sT}\bD\>-\frac{1}{n\delta}\<\bB,\bF\bU\bU^{\sT}\bF\>\Big|
\le K(\delta)\, \sqrt{\frac{\log n}{n}}
\, .
\end{align}
\end{lemma}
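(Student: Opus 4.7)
The plan starts from the factorization $\bD = \bF\bm{L}$, where $\bm{L}$ is diagonal with $L_{ii} = \|\bU^{\sT}\bfe_i\|_2^{-1}$. Since diagonal matrices commute, $\bD\bU\bU^{\sT}\bD = \bF\bm{L}\bU\bU^{\sT}\bm{L}\bF$, so that
\begin{equation*}
\bD\bU\bU^{\sT}\bD - \delta^{-1}\bF\bU\bU^{\sT}\bF = \bF(\bm{L}-\delta^{-1/2}\id)\bU\bU^{\sT}\bm{L}\bF + \delta^{-1/2}\bF\bU\bU^{\sT}(\bm{L}-\delta^{-1/2}\id)\bF.
\end{equation*}
Using the trace inequality $|\<\bB,\bA\bU\bU^{\sT}\bC\>| \le \|\bU\bU^{\sT}\|_*\,\|\bC\bB\bA\|_{op} = n\delta\,\|\bA\|_{op}\|\bB\|_{op}\|\bC\|_{op}$, together with $\|\bF\|_{op}\le 1$ and $\|\bB\|_{op}\le C$ (Lemma \ref{lemma:BnormBound}), the lemma will follow once I establish
\begin{equation*}
\|\bm{L} - \delta^{-1/2}\id\|_{op} \le K(\delta)\sqrt{\log n/n}
\end{equation*}
with high probability; this in particular forces $\|\bm{L}\|_{op} \le 2\delta^{-1/2}$, so the inner-product difference is bounded by $C(\delta)\, n\sqrt{\log n/n}$, and dividing by $n\delta$ yields the stated bound.

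To prove the operator-norm estimate I will show the stronger entrywise statement $\max_i |\|\bU^{\sT}\bfe_i\|_2^2 - \delta| \le K'(\delta)\sqrt{\log n/n}$ w.h.p.; since $\bm{L}$ is diagonal, this translates to the required bound via a one-variable Taylor expansion. The main tool is orthogonal invariance: because $\bW\sim\GOE(n)$ is rotationally invariant and the rank-one spike $\lambda\bone\bone^{\sT}/n$ is fixed by every orthogonal $\bP$ with $\bP\bone = \bone$, the joint law of the eigenvectors of $\bB$ satisfies $(\bP\bu_1,\dots,\bP\bu_n) \sim (\bu_1,\dots,\bu_n)$ for every such $\bP$. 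A standard consequence (using the a.s.\ simplicity of the spectrum to select eigenvectors measurably) is that, conditional on the spectrum and on $\bu_1$, the ordered frame $[\bu_2,\dots,\bu_n]$ is Haar-uniform on the orthonormal frames of $\{\bu_1\}^\perp$; in particular $\bU$ consists of the first $n\delta$ columns of such a Haar frame. Writing $\bfe_i = u_{1,i}\bu_1 + \sqrt{1-u_{1,i}^2}\,\bw_i$ with $\bw_i\in\{\bu_1\}^\perp$ a unit vector and using $\bU^{\sT}\bu_1 = \bzero$, one obtains $\|\bU^{\sT}\bfe_i\|_2^2 = (1-u_{1,i}^2)\|\bU^{\sT}\bw_i\|_2^2$; a direct adaptation of Lemma \ref{lemma:BoundProj} to Haar frames inside the $(n-1)$-dimensional subspace $\{\bu_1\}^\perp$ (applied to $\bw_i$ by rotational invariance) gives $\|\bU^{\sT}\bw_i\|_2^2 = \delta + O(\sqrt{\log n/n})$ with probability $\ge 1-n^{-20}$, and a union bound over $i$ covers all $n$ indices.

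The final ingredient is control of $\max_i u_{1,i}^2$. Applying the same invariance argument to $\bu_1$ itself yields the decomposition $\bu_1 = \alpha\bone/\sqrt{n} + \sqrt{1-\alpha^2}\,\bu_1^\perp$ with $\bu_1^\perp$ Haar-uniform on the unit sphere of $\{\bone\}^\perp$ conditional on $\alpha$, and standard spherical concentration then gives $\max_i u_{1,i}^2 \le C\log n/n$ w.h.p. Combining with the preceding estimate produces the operator-norm bound on $\bm{L}-\delta^{-1/2}\id$, and the lemma follows. The main technical obstacle is the rigorous statement of the conditional Haar distribution of $[\bu_2,\dots,\bu_n]$ given $\bu_1$; this demands care with measurable eigenvector selection (relying on the a.s.-simple spectrum of $\bB$), but the invariance computation is routine once set up.
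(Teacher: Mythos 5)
Your algebraic reduction matches the paper's. Writing $\bD = \bF\bm{L}$ (the paper's $\bH$ equals $\sqrt{\delta}\,\bm{L}$) and bounding the inner-product difference via $\|\bU\bU^{\sT}\|_* = n\delta$, $\|\bF\|_{op}\le 1$, $\|\bB\|_{op}\le C$ is exactly the paper's decomposition, and reducing the lemma to $\max_i\big|\|\bU^{\sT}\bfe_i\|_2^2-\delta\big|\lesssim\sqrt{(\log n)/n}$ with high probability is the same bottleneck the paper faces.

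The gap is in how you try to prove that bottleneck estimate. You assert that ``conditional on the spectrum and on $\bu_1$, the ordered frame $[\bu_2,\dots,\bu_n]$ is Haar-uniform on the orthonormal frames of $\{\bu_1\}^\perp$,'' and derive this from the $\mathrm{Stab}(\bone)$-invariance $(\bR\bu_1,\dots,\bR\bu_n)\ed(\bu_1,\dots,\bu_n)$. That invariance only yields a \emph{covariance} of conditional laws: writing $\mu_{\bw}$ for the conditional law of the frame given $\bu_1=\bw$, one gets $\mu_{\bR\bw} = \bR_{*}\mu_{\bw}$ for all $\bR$ fixing $\bone$. What the Haar claim on $\{\bw\}^\perp$ needs is \emph{invariance} of $\mu_{\bw}$ under every rotation fixing $\bw$ --- a strictly larger group than $\mathrm{Stab}(\bone)\cap\mathrm{Stab}(\bw)$ (the latter fixes the whole two-dimensional span of $\bone,\bw$, leaving a copy of $O(n-2)$, not $O(n-1)$). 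Concretely, the restriction $\bB|_{\{\bu_1\}^\perp}$ contains the rank-one term $\lambda\,(\projp_{1}\bv)(\projp_{1}\bv)^{\sT}$ with $\bv=\bone/\sqrt n$, so the direction $\projp_1\bv/\|\projp_1\bv\|_2\in\{\bu_1\}^\perp$ is distinguished and the claimed conditional rotational symmetry is broken. (The claim does hold when $\lambda=0$, which is why Lemma \ref{lemma:BoundProj} applies cleanly in the proof of Theorem \ref{thm:Gaussian}.$(a)$, but not for $\lambda>1$.)

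Your instinct that the paper's one-line invocation of Lemma \ref{lemma:BoundProj} and union bound at this point requires more care is a good one --- the matrix $\bU$ of eigenvectors of the deformed $\bB(\lambda)$ with $\lambda>1$ is not Haar on $n\times n\delta$ orthonormal frames, so the lemma as literally stated does not apply. But the fix must go through a decomposition relative to $\bv=\bone/\sqrt n$ (where the model \emph{is} rotationally invariant), not relative to $\bu_1$; this is the strategy the paper uses in Lemma \ref{lemma:LLN_u}, and a version of it would have to be carried out here (e.g.\ controlling $\sum_{j=2}^{n\delta+1}u_{j,i}^2$ by splitting each $\bu_j$ into components parallel and orthogonal to $\bv$ and using that the orthogonal components are governed by the $\mathrm{Stab}(\bv)$ symmetry). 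As written, the central step of your proposal does not go through.
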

\begin{proof}
Define $\bH$ to be a diagonal matrix with entries $H_{ii}\equiv
\sqrt{\delta}/\|\bU^{\sT}\bfe_i\|_2$.
Then by definition $\bD =  \bF\bH/\sqrt{\delta}$ and
\begin{align}
\Big|\frac{1}{n}\<\bB,\bD\bU\bU^{\sT}\bD\>-\frac{1}{n\delta}\<\bB,\bF\bU\bU^{\sT}\bF\>\Big|
&=
\frac{1}{n\delta}\big|\<\bF\bB\bF,\bH\bU\bU^{\sT}\bH\>-\<\bF\bB\bF,\bU\bU^{\sT}\>\Big|\\
& \le \frac{1}{n\delta}
\big\|\bH\btB\bH-\btB\big\|_2\|\bU\bU^{\sT}\|_*\, ,
\end{align}
where $\btB = \bF\bB\bF$, and we recall that $\|\bM\|_*$ denotes the
nuclear norm of matrix $\bM$. Note that $\|\bF\|_2= \max_{i\in
  [n]}|F_{ii}|\le 1$, hence by Eq.~(\ref{eq:Bnorm}) we have
$\|\btB\|_2\le C$ with high probability. Further, since $\bU\bU^{\sT}$
is a projector on a space of $n\delta$ dimensions, we have
$\|\bU\bU^{\sT}\|_* = n\delta$. 
Therefore 
\begin{align}
\Big|\frac{1}{n}\<\bB,\bD\bU\bU^{\sT}\bD\>-\frac{1}{n\delta}\<\bB,\bF\bU\bU^{\sT}\bF\>\Big|
&\le \big\|\bH\btB\bH-\btB\big\|_2\\
&\le \|\btB\|_2\|\bH-\id\| \, \big(2+\|\bH-\id\|_2\big)\\
&\le C\|\bH-\id\| \max(1\;;\; \|\bH-\id\|_2\big)\, ,
\end{align}
where we used $\|\btB\|_2 \le \|\bB\|_2\|\bF\|_2^2 \le \|\bB\|_2\le C$
by Lemma \ref{lemma:BnormBound}.
Note that 
\begin{align}
\|\bH-\id\|_2 = \max_{1\le i\le n}
\Big|\frac{\sqrt{\delta}}{\|\bU^{\sT}\bfe_i\|_2}-1\Big|\, .
\end{align}
The proof is completed by Lemma \ref{lemma:BoundProj} and union bound.
\end{proof}

\begin{lemma}\label{lemma:LB_ui}
There exists a finite constant $C>0$ such that, for all
$\delta,\eps>0$, we have
\begin{align}
\lim_{n\to\infty}\prob\Big(
\<\bu_i,\bF\bB\bF\bu_i\> \ge L(\eps,\delta)\, \forall i\in \{2,\dots, n\delta+1\}\Big) = 1\, ,\\
L(\eps,\delta) \equiv 2-2\eps^2- C\delta^{2/3}
  -C\eps^4\, .
\end{align}
\end{lemma}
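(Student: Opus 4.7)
The plan is to use the eigenvector identity $\bB\bu_i = \xi_i\bu_i$ to expand
\begin{align}
\<\bu_i,\bF\bB\bF\bu_i\>=\xi_i+2\xi_i\,E_1(i)+E_2(i),
\end{align}
with $E_1(i):=\<\bu_i,(\bF-\id)\bu_i\>$ and $E_2(i):=\<(\bF-\id)\bu_i,\bB(\bF-\id)\bu_i\>$. Then I would (i) lower bound $\xi_i$ uniformly over $2\le i\le n\delta+1$, (ii) show $E_1(i)=-\eps^2/2+O(\eps^4)$ and $|E_2(i)|=O(\eps^4)$ uniformly in $i$, and (iii) combine the pieces, using that $0<\xi_i\le\|\bB\|_2=O(1)$ by Lemma \ref{lemma:BnormBound}.

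For step (i), classical results on rank-one deformations of GOE matrices (e.g.\ \cite{capitaine2011free,pizzo2013finite,knowles2013isotropic}) imply that apart from the outlier $\xi_1$ the bulk spectrum of $\bB$ obeys Wigner's semicircle law, so $\xi_{n\delta+1}(\bB)\to\xi_*(\delta)$ in probability, with $\xi_*(\delta)$ defined in (\ref{eq:XiStarDef}). The edge behaviour $\sqrt{4-x^2}\sim\sqrt{2}\sqrt{2-x}$ of the semicircle density gives $2-\xi_*(\delta)\le C\delta^{2/3}$, whence $\xi_i\ge 2-C\delta^{2/3}-o(1)$ for all $2\le i\le n\delta+1$ with high probability.

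For step (ii), the elementary inequalities $\bigl|(\sqrt{1-\varphi_j^2}-1)+\tfrac12\varphi_j^2\bigr|\le\tfrac12\varphi_j^4$ and $(1-\sqrt{1-\varphi_j^2})^2\le\varphi_j^4$ (valid for $\varphi_j^2\in[0,1]$) reduce the problem, via Lemma \ref{lemma:BnormBound}, to controlling the sums $\sum_j u_{i,j}^2\varphi_j^{2k}$ for $k=1,2$. Since $\varphi_j=R(\eps\sqrt{n}\,u_{1,j})$, the identity $(\eps\sqrt{n}\,u_{1,j})^2-\varphi_j^2=(\eps^2 n u_{1,j}^2-1)_+$ combined with the trivial bound $(t-1)_+\le t^3$ for $t\ge 1$ gives
\begin{align}
\sum_j u_{i,j}^2\varphi_j^2=\eps^2\!\cdot\! n\!\sum_j u_{i,j}^2 u_{1,j}^2+O\!\Bigl(\eps^6\!\cdot\! n^3\!\sum_j u_{i,j}^2 u_{1,j}^6\Bigr),
\end{align}
and analogously for the $\varphi_j^4$ sum. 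The proof therefore reduces to an eigenvector moment estimate of the form $n\sum_j u_{i,j}^2 u_{1,j}^{2k}=1+o(1)$ (uniformly in $2\le i\le n\delta+1$, for $k=1,2,3$), which is exactly the content of the invoked Lemma \ref{lemma:LLN_u}. Plugging these in yields $E_1(i)=-\eps^2/2+o(\eps^2)+O(\eps^4)$ and $|E_2(i)|=O(\eps^4)$ uniformly in $i$, so combining with the eigenvalue bound I get
\begin{align}
\<\bu_i,\bF\bB\bF\bu_i\>\ge\xi_i(1-\eps^2)-C\eps^4\ge(2-C\delta^{2/3})(1-\eps^2)-C\eps^4\ge 2-2\eps^2-C\delta^{2/3}-C\eps^4,
\end{align}
with the $o(\eps^2)$ error absorbed into $C\eps^4$ once $n$ is large (at fixed $\eps,\delta$).

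The main obstacle is the uniform eigenvector moment estimate: mere $\ell^\infty$-delocalization of the $\bu_i$ is insufficient, since what is required is a law-of-large-numbers statement for the joint entries of $\bu_1$ (which by the BBAP transition for $\lambda>1$ aligns with the signal direction $\bone/\sqrt{n}$) tested against \emph{each} bulk eigenvector $\bu_i$. Isotropic local laws for deformed Wigner ensembles \cite{knowles2013isotropic} are the natural tool here, and one needs individual failure probability $o(1/n)$ so that the union bound over $\Theta(n)$ eigenvectors closes.
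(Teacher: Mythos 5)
Your decomposition is genuinely different from the paper's, and both routes work. The paper projects $\bF\bu_i$ onto $\mathrm{span}(\bu_1,\bu_i)$ and its orthogonal complement, writing
\begin{align*}
\<\bu_i,\bF\bB\bF\bu_i\> = \xi_1\<\bu_i,\bF\bu_1\>^2 + \xi_i\<\bu_i,\bF\bu_i\>^2 + \<\projp_{1,i}\bF\bu_i,\bB\,\projp_{1,i}\bF\bu_i\>,
\end{align*}
keeping the nonnegative $\xi_1$-term for free and charging only the complement against $\xi_n\ge -2$. You instead Taylor-expand $\bF=\id+(\bF-\id)$ and use symmetry of $\bB$ plus $\bB\bu_i=\xi_i\bu_i$ to collapse the cross term into $2\xi_i\<\bu_i,(\bF-\id)\bu_i\>$. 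The two constructions lead to essentially the same work: your $E_1(i)$ is (up to shift by $1$) the paper's $\<\bu_i,\bF\bu_i\>$ from Lemma~\ref{lemma:Ui}, and your bound on $|E_2(i)|\le\|\bB\|_{op}\|(\bF-\id)\bu_i\|_2^2$ is the same estimate that drives the paper's Lemma~\ref{lemma:Perp} (since $\projp_{1,i}\bF\bu_i=\projp_{1,i}(\bF-\id)\bu_i$). Your version is arguably a touch cleaner because the eigenvalue relation removes the need to track $\xi_1$ at all, and the entire $\bB$-dependence of the error is absorbed into a single $\|\bB\|_{op}$ bound.

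Two points deserve more care. First, you invoke Lemma~\ref{lemma:LLN_u} as directly delivering estimates of the form $n\sum_j u_{i,j}^2u_{1,j}^{2k}=1+o(1)$, but that lemma is stated for the \emph{projected} components $\uper_{1,j},\uper_{i,j}$ (orthogonal to $\bone$). Since $\lambda>1$, the overlap $z_1=|\<\bu_1,\bone\>|/\sqrt{n}$ is $\Theta(1)$, so $u_{1,j}\ne\uper_{1,j}$ in a non-negligible way; one must substitute $\bu_1=z_1\bone/\sqrt{n}+\sqrt{1-z_1^2}\,\bup_1$, expand, and use that all three resulting LLN limits combine to give $z_1^2+(1-z_1^2)=1$ (this is precisely Eq.~(\ref{eq:ExpansionS2}) in the paper's Lemma~\ref{lemma:Ui}). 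For $i\ge 2$ one similarly uses Lemma~\ref{lemma:ProjBulk} to make $z_i$ negligible. This is a short but not skippable step. Second, the normalization claim ``$n\sum_j u_{i,j}^2u_{1,j}^{2k}=1+o(1)$ for $k=1,2,3$'' is only correct for $k=1$; for $k\ge 2$ the correctly normalized quantity is $n^k\sum_j u_{i,j}^2u_{1,j}^{2k}\to m_{2k}$ with $m_4=3$, $m_6=15$. This does not affect your conclusion because those sums only enter through the $O(\eps^4)$ and $O(\eps^6)$ error terms, where any $O(1)$ bound suffices, but as stated the claim is wrong. (Similarly, the edge expansion $\sqrt{4-x^2}\sim 2\sqrt{2-x}$, not $\sqrt{2}\sqrt{2-x}$; the $\delta^{2/3}$ scaling is unchanged.) With those corrections your argument goes through and gives the lemma with the same exponent structure.
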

The proof of this lemma is longer that the others, and
  deferred to Section \ref{sec:LB_ui}.

We are now in position to prove  Theorem \ref{thm:Gaussian}.$(b)$.
\begin{proof}[Proof of Theorem \ref{thm:Gaussian}.$(b)$]
We use the explicit construction in
Eq.~(\ref{eq:WitnessConstruction}). Note that
$\bX\in\PSD_1(n)$. Indeed $\bX\succeq 0$ as it is the sum of two
positive-semidefinite matrices. Further, $X_{ii}=1$, since
\begin{align}
\<\bfe_i,\bX\bfe_i\> & =
|\<\bfe_i,\bphi\>|^2+\big\|\bU^{\sT}\bD\bfe_i\big\|_2^2\\
& = \varphi_i^2+ D_{ii}^2 \|\bU^{\sT}\bfe_i\|_2^2 =1\, .
\end{align}

We are left with the task of lower bounding the objective value.
With high probability
\begin{align}
\frac{1}{n}\<\bB,\bX\> &=\frac{1}{n}\<\bB,\bphi\bphi^{\sT}\>
+\frac{1}{n}\<\bB,\bD\bU\bU^{\sT}\bD\>\\
& \ge \eps^2\xi_1 - C\, \eps^4
+\frac{1}{n\delta}\<\bB,\bF\bU\bU^{\sT}\bF\>-K(\delta)\sqrt{\frac{\log
  n}{n}}\, ,
\end{align}
where we used Lemma \ref{lemma:Bphi}, and Lemma
\ref{lemma:DtoF}. For all $n$ large enough,  we can bound the term $\sqrt{(\log
n)/n}^{1/2}$ by $C\eps^4$. Further, by
\cite{knowles2013isotropic}[Theorem 2.7],
$\xi_1\ge (\lambda+\lambda^{-1})- C'n^{-0.4}$ with high probability. Since $\lambda+\lambda^{-1}>2$,
there exists $\Delta_0(\lambda)>0$  such that, with high probability
\begin{align}
\frac{1}{n}\<\bB,\bX\> &\ge (2+\Delta_0(\lambda))\eps^2 -C\eps^4
+\frac{1}{n\delta}\sum_{i=2}^{n\delta+1} \<\bu_i,\bF\bB\bF\bu_i\>\, .
\end{align}
Now we apply Lemma \ref{lemma:LB_ui} to get, with high probability
\begin{align}
\frac{1}{n}\<\bB,\bX\> &\ge (2+\Delta_0(\lambda))\eps^2 -C\eps^4
+2-2\eps^2-C\delta^{2/3} -C\eps^4\\
& \ge 2 +\Delta_0(\lambda)\eps^2- 2C\eps^4-C\delta^{2/3}\, .
\end{align}
Setting $\eps = \sqrt{\Delta_0(\lambda)/(4C)}$ and $\delta
=[\Delta_0(\lambda)/(16C^2)]^{3/2}$, we conclude that
\begin{align}
\lim_{n\to\infty}\prob\left(\frac{1}{n}\<\bB,\bX\>\ge
  2+\frac{\Delta_0(\lambda)^2}{16C}\right) = 1\, ,
\end{align}
which completes the proof of the theorem.
\end{proof}

\subsection{A law of large numbers for the eigenvectors of deformed
  Wigner matrices}

In this section we establish a lemma that will be used repeatedly in 
the proof of Lemma  \ref{lemma:LB_ui}.
\begin{lemma}\label{lemma:LLN_u}
Fix $i\in\{2,\dots, n\}$ and let $\bup_1$, $\bup_i$ be the projections
of eigenvectors $\bu_1$, $\bu_i$ of $\bB$ orthogonal to $\bone$
(explicitly, $\bup = \bu-\<\bone,\bu\>\bone/n$ for $\bu\in
\{\bu_1,\bu_i\}$). For any $a, b\in \naturals$, and $t, C\in\reals_{>0}$ there exists $n_0
= n_0(a,b,t,C)<\infty$ such that, for all $n>n_0$
\begin{align}
\prob\left\{\left|\frac{1}{n}\sum_{k=1}^n(\sqrt{n}\uper_{1,k})^a(\sqrt{n}\uper_{i,k})^b-m_am_b\right|\ge
  t\right\}\le \frac{1}{n^C}\, ,
\end{align}
where $m_a \equiv \E\{Z^a\}$, for $Z\sim\normal(0,1)$.
\end{lemma}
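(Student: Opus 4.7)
My plan proceeds in three stages: $(a)$ reduce the joint law of $(\bup_1,\bup_i)$ to a Haar-uniform pair on $V:=\bone^\perp$ using the orthogonal invariance of $\bW\sim\GOE(n)$; $(b)$ realize the Haar pair via a Gaussian representation and compute the leading-order empirical moment; $(c)$ obtain the polynomial $n^{-C}$ tail via hypercontractivity. For $(a)$, since $O\bW O^\sT\stackrel{d}{=}\bW$ for every orthogonal $O$ and in particular for those fixing $\bone$, the same invariance holds for $\bB=(\lambda/n)\bone\bone^\sT+\bW$. Conditioning on the spectrum of $\bB$ and on the overlaps $\alpha_j:=\<\bu_j,\bone\>/\sqrt n$, the pair $(\bup_1,\bup_i)$ is Haar-uniform on the set of pairs in $V$ with prescribed norms $\sqrt{1-\alpha_j^2}$ and inner product $-\alpha_1\alpha_i$ (forced by $\<\bu_1,\bu_i\>=0$). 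By the isotropic local law for deformed Wigner matrices \cite{knowles2013isotropic}, $|\alpha_i|\le n^{-1/2+o(1)}$ for $i\ge 2$ with probability $1-n^{-D}$ for any fixed $D$, so after unit-normalization $\bup_1/\|\bup_1\|$ and $\bup_i/\|\bup_i\|$ become nearly-orthogonal Haar-uniform unit vectors in $V$.

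For step $(b)$, I will introduce independent standard Gaussian vectors $\bg,\boldsymbol{h}$ on $V$, set $\boldsymbol{h}^\perp=\boldsymbol{h}-\<\boldsymbol{h},\bg\>\bg/\|\bg\|^2$, and realize the Haar pair as $\bup_1/\|\bup_1\|=\bg/\|\bg\|$ and $\bup_i/\|\bup_i\|=\rho\,\bg/\|\bg\|+\sqrt{1-\rho^2}\,\boldsymbol{h}^\perp/\|\boldsymbol{h}^\perp\|$, where $\rho=O(n^{-1/2+o(1)})$ is the inner product from step $(a)$. Expanding $(\sqrt n\uper_{1,k})^a(\sqrt n\uper_{i,k})^b$ as a polynomial in the coordinates of $\bg$ and $\boldsymbol{h}^\perp$, every term carrying a positive power of $\rho$ is $O(n^{-1/2+o(1)})$ and can be discarded. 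Chi-squared concentration of $\|\bg\|^2/n$ and $\|\boldsymbol{h}^\perp\|^2/n$ at $1$ lets me replace the normalization factors by their limits, reducing the leading term to $\frac{1}{n}\sum_{k=1}^n g_k^a(h^\perp_k)^b$---a polynomial of degree $a+b$ in $2n$ nearly-independent Gaussians (the only dependence arising from the $O(1/n)$ rank-one correction due to the restriction to $V$). Its mean equals $m_am_b$ up to an $O(1/n)$ correction by Wick's formula, and its variance is $O(1/n)$ by an Isserlis-type calculation.

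For step $(c)$, the $n^{-C}$ tail follows by applying Markov's inequality to the $2L$-th centered moment of $\frac{1}{n}\sum_k g_k^a(h^\perp_k)^b$: Gaussian hypercontractivity (or direct Isserlis combinatorics) bounds this moment by $O_{a,b,L}(n^{-L})$, and choosing $L\ge C$ yields the required probability bound. \textbf{The main obstacle} is propagating the polynomial $n^{-C}$ tail simultaneously through all of the ingredients---the isotropic local law for $\alpha_i$, the chi-squared concentration for $\|\bg\|$ and $\|\boldsymbol{h}^\perp\|$, and the hypercontractive moment bound---while verifying that the Gram--Schmidt coupling between $\bup_1$ and $\bup_i$ and the restriction from $\reals^n$ to $V$ contribute only lower-order corrections. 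A minor subtlety is that the explicit formula $\bup=\bu-\<\bone,\bu\>\bone/n$ gives $\|\bup\|=\sqrt{1-\alpha^2}$, so the limit $m_am_b$ is the one attained in the unit-normalized convention (i.e.\ after dividing by $\|\bup_j\|$), consistent with the decomposition $\bu_1=z_1\bone/\sqrt n+\sqrt{1-z_1^2}\,\bup_1$ used in the proof of Lemma~\ref{lemma:Bphi}.
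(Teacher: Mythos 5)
Your proposal follows essentially the same strategy as the paper's proof: exploit the orthogonal invariance of $\bB$ under rotations fixing $\bone$, realize the resulting Haar-type law for the pair $(\bup_1,\bup_i)$ through a Gaussian representation, and obtain the polynomial tail by a high-moment/Markov argument. The paper implements this by Gram--Schmidt on $(\bv,\bg_1,\bg_2)$, writes the sum exactly as $U_{a,b}/(U_{2,0}^{a/2}U_{0,2}^{b/2})$, and directly expands $U_{a,b}$, controlling the cross-terms with Gaussian tail bounds and $U_{a,b}(0)$ with $2\ell$-th moments; your hypercontractivity/Isserlis step $(c)$ is a different but equivalent way to land the same moment bound.

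One place where your version is \emph{more} careful than the paper's: you explicitly retain the inner product $\rho=\langle\bup_1/\|\bup_1\|,\bup_i/\|\bup_i\|\rangle = -z_1z_i/\sqrt{(1-z_1^2)(1-z_i^2)}$, which is nonzero (since $z_1\not\to 0$ for $\lambda>1$) yet $o(1)$ w.h.p.\ because $z_i\to 0$ by Lemma~\ref{lemma:ProjBulk}. The paper's exposition declares $(\bup_1,\bup_i)$ to be ``a uniformly random orthonormal pair'' and constructs it via Gram--Schmidt (which forces inner product exactly $0$); that identification is not exact, and the missing $O(\rho)$ correction is precisely what your expansion tracks and discards. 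Two minor cautions for your route: (i) the isotropic local law is heavier machinery than needed here --- the paper's own Lemma~\ref{lemma:ProjBulk} already gives $|z_i|\le\eta$ with probability $\ge 1-n^{-10}$, and $\rho=o(1)$ is all you need to kill the $\rho$-carrying terms against a fixed $t$; but to reach an arbitrary exponent $C$ in the final tail you must then propagate an arbitrary exponent through the $z_i$-bound as well (doable by re-running the chi-squared concentration in Lemma~\ref{lemma:BoundProj} with larger constants); (ii) the coordinates of $\boldsymbol{h}^\perp$ are not independent Gaussians --- conditionally on $\bg$ they carry a rank-one dependence, and the normalizations $\|\bg\|,\|\boldsymbol{h}^\perp\|$ couple everything --- so the hypercontractive moment bound needs to be checked to survive this mild dependence. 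The paper's explicit expansion in $(l_1,l_2,l_3)$ deals with exactly these dependencies term by term, which is why it is longer but more transparent at that step.
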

\begin{proof}
Throughout the proof, we let $\bv \equiv \bone/\sqrt{n}$. Note that
the law of the random matrix $\bB$ is invariant under transformations
that leave $\bv$ unchanged. namely, if $\bR\in\reals^{n\times n}$ is
an orthogonal matrix such that $\bR\bv = \bv$ or $\bR\bv = -\bv$, then
\begin{align}
\bR\bB\bR^{\sT} \ed \bB\, .
\end{align}
It follows that the joint law of $\bup_1$, $\bup_i$ is left invariant
by such a transformation. Formally $(\bR\bup_1,\bR\bup_i) \ed
(\bup_1,\bup_i)$. Hence, the pair $(\bup_1,\bup_i)$ is a uniformly
random orthonormal pair, in the subspace orthogonal to $\bv$
(invariance under rotations characterizes this distribution uniquely).
Hereafter, we'll set $i=2$ without loss of generality.

We can construct the pair by generating i.i.d. vectors
$\bg_1,\bg_2\sim\normal(0,\id_n)$,
and then applying Gram-Schmidt procedure to the triple
$(\bv,\bg_1,\bg_2)$.
Explicitly
\begin{align}
\bup_1 & =
\frac{\bg_1-\<\bg_1,\bv\>\bv}{\|\bg_1-\<\bg_1,\bv\>\bv\|_2}\, ,\label{eq:BupDef}\\
\bup_2 & = \frac{\bg_2-\<\bg_2,\bv\>\bv-\<\bg_2,\bup_1\>\bup_1}{\|\bg_2-\<\bg_2,\bv\>\bv-\<\bg_2,\bup_1\>\bup_1\|_2}\, .
\end{align}
We then have 
\begin{align}
&\frac{1}{n}\sum_{k=1}^n(\sqrt{n}\uper_{1,k})^a(\sqrt{n}\uper_{i,k})^b
\equiv \frac{U_{a,b} }{U_{2,0}^{a/2} U_{0,2}^{b/2}}\, ,\label{eq:Urepresentation}\\
&U_{a,b}  \equiv \frac{1}{n}\sum_{k=1}^n
\big(g_{1,k}-\<\bg_1,\bv\>v_k\big)^a
\big(g_{2,k}-\<\bg_2,\bv\>v_k-\<\bg_2,\bup_1\>\uper_{1,k}\big)^b\, .\label{eq:UabDef}
\end{align}
We claim that, with the same notations as in the statement of the
lemma, 
\begin{align}
\prob\left\{\left|U_{a,b}-m_am_b\right|\ge
  t\right\}\le \frac{1}{n^C}\, ,\label{eq:ClaimUab}
\end{align}
for $n\ge n_0(a,b,t,C)$.
Once this claim is proved, the lemma follows by the representation
(\ref{eq:Urepresentation}) using union bound over the three random
variables $U_{a,b}$, $U_{2,0}$, $U_{0,2}$, since $m_2=1$ (and
eventually increasing $n_0$).

In order to prove the claim (\ref{eq:ClaimUab}), we expand the powers
in Eq.(\ref{eq:UabDef}), to get:
\begin{align}
U_{a,b} &= U_{a,b}(0) +\sum_{0\le l_1\le a}\sum_{0\le l_2,l_3\le b}
K_{a,b}(l_1,l_2,l_3)\, U_{a,b}(l_1,l_2,l_3)\, \bone_{l_1+l_2+l_3>0}\, 
\bone_{l_2+l_3\le b}\, ,\label{eq:UabSum}\\
 U_{a,b}(0) & \equiv \frac{1}{n}\sum_{k=1}^ng_{1,k}^ag_{2,k}^b\, ,\\
U_{a,b}(l_1,l_2,l_3) & \equiv
\frac{1}{n^{(l_1+l_2)/2}} \<\bg_1,\bv\>^{l_1}\<\bg_2,\bv\>^{l_2}
\<\bg_2,\bup_1\>^{l_3}
\Big(\frac{1}{n}\sum_{k=1}^ng_{1,k}^{a-l_1}g_{2,k}^{b-l_2-l_3} (\uper_{1,k})^{l_3}\Big)\, ,\label{eq:U123}
\end{align}
where $K_{a,b}(l_1,l_2,l_3)$ are combinatorial factors (bounded as
$|K_{a,b}(l_1,l_2,l_3)|\le 2^a3^b$).
Consider first the term $U_{a,b}(0)$. By definition $\E\{U_{a,b}(0)\}
= m_am_b$. Further, by Markov inequality, 
\begin{align}
\prob\left\{\left|U_{a,b}(0)-m_am_b\right|\ge t\right\}&\le
\frac{1}{t^{\ell}n^{2\ell}}\E\left\{\left[\sum_{i=1}^nX_i\right]^{2\ell}\right\}\\
&\le \frac{1}{t^{\ell}n^{2\ell}} \, n^{\ell}C_0(a,b,\ell)\le
\frac{1}{n^C}\, ,\label{eq:U0bound}
\end{align}
where $C_0$ is a combinatorial factor, and last inequality holds for
any $C$, provided $n\ge n_0(a,b,t,C)$.

Consider next any of the terms $U_{a,b}(l_1,l_2,l_3)$. Note that
$\<\bg_1,\bv\>$, $\<\bg_2,\bv\>$, $\<\bg_3,\bup_1\>\sim\normal(0,1)$ 
(but not independent). By Gaussian tail bounds,
$\prob(|\<\bg_1,\bv\>|\ge a\sqrt{\log n})\le n^{-a^2/4}$ for all $n$
large enough. By a union bound
\begin{align}
\prob \Big\{ |\<\bg_1,\bv\>|^{l_1}|\<\bg_2,\bv\>|^{l_2}
|\<\bg_2,\bup_1\>|^{l_3}\ge (\log n)^{a+b}\Big\}\le \frac{1}{n^C}\, ,
\end{align}
for all $C>0$, provided $n\ge n_0(C)$. 
Proceeding analogously, and using the construction (\ref{eq:BupDef}),
we get for all $n\ge n_0(C)$,
\begin{align}
\prob\Big\{(\uper_{1,k})^{l_3}\ge \left(\frac{\log
    n}{n}\right)^{l_3/2}\Big\}\le \frac{1}{n^C}\, .
\end{align}

Finally, using these probability bounds in Eq.~(\ref{eq:U123}), we get, with probability
at least $1-2n^{-C}$,
\begin{align}
|U_{a,b}(l_1,l_2,l_3)| & \le
\frac{1}{n^{(l_1+l_2)/2}} (\log n)^{a+b}
\Big(\frac{1}{n}\sum_{k=1}^ng_{1,k}^{2(a-l_1)}g_{2,k}^{2(b-l_2-l_3)}
(\uper_{1,k})^{2l_3}\Big)^{1/2}\\
& \le \frac{1}{n^{(l_1+l_2+l_3)/2}} (\log n)^{a+2b}
U_{2(a-l_1),2(b-l_2-l_3)}(0)^{1/2}\, .
\end{align}
Hence, using Eq.~(\ref{eq:UabSum}) and the bound
(\ref{eq:U0bound}) applied to $U_{2(a-l_1),2(b-l_2-l_3)}(0)$, we
obtain (since $l_1+l_2+l_3\ge 1$)
\begin{align}
\prob\Big(\big|U_{a,b}-U_{a,b}(0)\big|\ge
\frac{(\log n)^{a+b}}{n^{1/2}}\Big)\le \frac{1}{n^C}\, ,
\end{align}
for all $C>0$ and all $n\ge m_0(a,b,t,C)$. 
Applying again Eq.~(\ref{eq:U0bound}) to $U_{a,b}(0)$, we obtain the
desired bound, Eq.~(\ref{eq:ClaimUab}), which finishes the proof.
\end{proof}

\subsection{Proof of Lemma \ref{lemma:LB_ui}} 
\label{sec:LB_ui}

We begin with a technical lemma.
\begin{lemma}\label{lemma:ProjBulk}
Fix $i\in \{2,\dots,n\}$ and let $\bu_i$ be the $i$-th eigenvector of 
the deformed GOE matrix $\bB$. Let $\bv= \bone/\sqrt{n}$.

Then, for any $\eta>0$ there exists
$n_0=n_0(\eta)$ (independent of $i$) such that, for all $n\ge n_0(\eta)$
\begin{align}
\prob\Big(|\<\bv,\bu_i\>|\ge \eta\Big) \le \frac{1}{n^{10}}\, .
\end{align}
\end{lemma}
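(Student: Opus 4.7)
My plan is to bound $\langle \bv,\bu_i\rangle^2$ by the imaginary part of the resolvent of $\bB$ slightly above the real axis, reduce via Sherman--Morrison to the resolvent of $\bW$, and close the argument using Gaussian concentration of the latter around the semicircle Stieltjes transform $m_{\rm sc}$ at a fixed ($n$-independent) imaginary shift $\zeta>0$ chosen small enough in terms of $\eta$ and $\lambda$.

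The key identity is
\[ R_{\bB}(z)\,\equiv\, \bv^{\sT}(\bB-z\id)^{-1}\bv \,=\, \sum_{j=1}^n \frac{\langle\bv,\bu_j\rangle^2}{\xi_j-z}\, . \]
Setting $z=\xi_i+i\zeta$, the $j=i$ summand contributes $\langle\bv,\bu_i\rangle^2/\zeta$ to the imaginary part and every other summand is nonnegative, so $\langle\bv,\bu_i\rangle^2 \le \zeta\,\Im R_{\bB}(\xi_i+i\zeta)$. Since $\bB=\bW+\lambda\bv\bv^{\sT}$, Sherman--Morrison yields $R_{\bB}(z)=R_{\bW}(z)/(1+\lambda R_{\bW}(z))$ with $R_{\bW}(z):=\bv^{\sT}(\bW-z\id)^{-1}\bv$, whence $\Im R_{\bB}(z)=\Im R_{\bW}(z)/|1+\lambda R_{\bW}(z)|^2$. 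The task is therefore to bound the numerator above and the denominator below, uniformly for $E=\xi_i$ in a bounded interval.

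For the concentration step, recall from Lemma \ref{lemma:BnormBound} that $\|\bB\|_{\rm op}\le C(\lambda)$, and Cauchy interlacing with $\bW$ additionally gives $\xi_i\le \xi_1(\bW)\le 2+o(1)$ for all $i\ge 2$; in particular $\xi_i$ stays bounded away from the BBP outlier location $\lambda+\lambda^{-1}$, so there exists $M=M(\lambda)<\lambda+\lambda^{-1}$ with $\xi_i\in[-M,M]$ on a $1-o(1)$ event. For any fixed $\zeta>0$, $\bW\mapsto R_{\bW}(E+i\zeta)$ has Frobenius-Lipschitz constant $\le 1/\zeta^2$ (from $\|(\bW-z\id)^{-1}\|_{\rm op}\le 1/|\Im z|$), and Gaussian concentration for GOE gives
\[ \prob\!\Big(|R_{\bW}(E+i\zeta)-\E R_{\bW}(E+i\zeta)|\ge t\Big) \,\le\, 2 e^{-c\,n\,\zeta^4 t^2}\, . \]
Combined with $\E R_{\bW}(E+i\zeta)=m_{\rm sc}(E+i\zeta)+O(1/n)$ (Wigner's theorem in Stieltjes form), with the derivative estimate $|\partial_E R_{\bW}(E+i\zeta)|\le 1/\zeta^2$, and a union bound over an $O(1/(\zeta^2 t))$-point grid in $[-M,M]$, this upgrades to $\sup_{E\in[-M,M]}|R_{\bW}(E+i\zeta)-m_{\rm sc}(E+i\zeta)|\le 2t$ with probability at least $1-n^{-10}$ for $n\ge n_0(\zeta,t,\lambda)$, independently of $i$.

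Finally, the direct computation $|1+\lambda m_{\rm sc}(E+i0^+)|^2=1-\lambda E+\lambda^2$ for $E\in[-2,2]$ gives the uniform lower bound $(\lambda-1)^2$ on that interval, which extends by continuity in $\zeta$ and $E$ to all of $[-M,M]$ provided $\zeta$ is small. Since $\Im m_{\rm sc}(E+i\zeta)\le 1$, picking $t$ small enough forces $|1+\lambda R_{\bW}(\xi_i+i\zeta)|\ge (\lambda-1)/2$ and $\Im R_{\bW}(\xi_i+i\zeta)\le 2$ on the good event, so $\langle\bv,\bu_i\rangle^2\le 8\zeta/(\lambda-1)^2<\eta^2$ once $\zeta=\zeta(\eta,\lambda)$ is chosen small. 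The main obstacle is precisely this lower bound on $|1+\lambda R_{\bW}|$: at the BBP threshold $\lambda=1$ it vanishes at $E=2$, and elementary macroscopic-scale concentration no longer suffices, forcing one to work on the microscopic scale $\zeta\asymp n^{-2/3+\epsilon}$ via the isotropic local law of \cite{knowles2013isotropic}. This degeneration is avoided here because Appendix \ref{app:ProofGaussianB} operates in the regime $\lambda\in(1,C_0]$.
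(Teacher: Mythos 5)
Your proof is correct, but it takes a genuinely different route from the paper's. The paper's starting point is the exact algebraic identity obtained by rearranging the eigenvalue equation $(\xi_i\id-\bW)\bu_i=\lambda\<\bv,\bu_i\>\bv$ and normalizing, namely $1=\lambda^2\<\bv,\bu_i\>^2\<\bv,(\xi_i\id-\bW)^{-2}\bv\>$; this reduces the lemma to a lower bound on the squared-resolvent quadratic form. That lower bound is then obtained by elementary means: restrict the spectral sum to a window of $n\sigma$ bulk eigenvalues of $\bW$ just below $\xi_i$ (which Cauchy interlacing pins between consecutive eigenvalues of $\bW$), use Lemma \ref{lemma:BoundProj} to get $\|\bU_0^{\sT}\bv\|_2^2\ge\sigma/2$, and use eigenvalue rigidity plus the semicircle density to bound the window width by $C\sigma^{2/3}$, giving a bound that diverges as $\sigma\downarrow 0$. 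You instead use the standard overlap bound $\<\bv,\bu_i\>^2\le\zeta\,\Im R_\bB(\xi_i+i\zeta)$ together with the Sherman--Morrison reduction $R_\bB=R_\bW/(1+\lambda R_\bW)$ and macroscopic-scale concentration of $R_\bW$ around $m_{\rm sc}$. Both close. The notable trade-off is that your argument genuinely needs the denominator bound $|1+\lambda m_{\rm sc}|\ge(\lambda-1)/2$ and so degenerates as $\lambda\downarrow 1$; the paper's identity-plus-interlacing argument has no such obstruction and works for any fixed $\lambda>0$ (the explicit appeal to $\lambda>1$ in the paper's proof is only used to absorb $\lambda^{-2}$ into the target constant $M$). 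In particular your closing remark that one would be forced to the isotropic local law at scale $n^{-2/3+\epsilon}$ near $\lambda=1$ overstates the difficulty: the delocalization of bulk eigenvectors of $\bB$ holds for all $\lambda$, and the paper's macroscopic-scale argument already handles it; the near-pole at the BBP energy is an artifact of the Sherman--Morrison route, not of the statement.
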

\begin{proof}
Consider the eigenvalue equation $\bB\bu_i = \xi_i\bu_i$ or, equivalently,
\begin{align}
\lambda\<\bv,\bu_i\>\bv + \bW\bu_i = \xi_i \bu_i\, .
\end{align}
Solving for $\bu_i$ and then using $\|\bu_i\|_2^2 = 1$, we get the
equation
\begin{align}
1 = \lambda^2\<\bu_i,\bv\>^2\, \<\bv,\big(\xi_i\id-\bW\big)^{-2}\bv\>\, . 
\end{align}
Since, by assumption $\lambda>1$, it is sufficient to prove that,
for any $M>0$,  $\<\bv,\big(\xi_i\id-\bW\big)^{-2}\bv\>\ge M$ with
probability at least $1-n^{-10}$ provided $n\ge n_0(M)$.

In order to prove this fact, let $(\xi_{0,1}, \bu_{0,1}),\dots, (\xi_{0,n}, \bu_{0,n}),$ be the
eigenpairs of $\bW$, and notice that, by the interlacing inequality 
$\xi_{0,i-1}>\xi_i>\xi_{0,i}$. Further assume $i\in\{2,\dots, n/2\}$
(the proof proceeds analogously in the other case). Then, fixing
$\sigma>0$  a small number, we have
\begin{align}
\<\bv,\big(\xi_i\id-\bW\big)^{-2}\bv\>&=
                                         \sum_{k=1}^n\frac{|\<\bv,\bu_{0,k}\>|^2}{(\xi_i-\xi_{0,k})^2}\\
&\ge
  \sum_{k=i+1}^{i+n\sigma}\frac{|\<\bv,\bu_{0,k}\>|^2}{(\xi_{0,i}-\xi_{0,i+n\sigma})^2}\\
& \ge \frac{1}{(\xi_{0,i}-\xi_{0,i+n\sigma})^2}\,
  \|\bU_0^{\sT}\bv\|_2^2\, ,\label{eq:Resolvent}
\end{align}
where, for notational simplicity, we assumed $n\sigma$ to be an
integer, and $\bU_0\in \reals^{n\times (n\sigma)}$ is a matrix
whose columns are the eigenvectors
$\bu_{0,i+1},\dots,\bu_{0,i+n\sigma}$.

Note that, by invariance of $\bW\sim\GOE(n)$ under rotations $\bU_0$
is a uniformly random orthogonal matrix with the assigned dimension.
By Lemma \ref{lemma:BoundProj} implies for all $n\ge n_1(\sigma)$,
\begin{align}
\prob\Big(\|\bU_0^{\sT}\bv\|_2^2\ge
\frac{\sigma}{2}\Big)\ge 1-\frac{1}{n^{20}}\, .\label{eq:U00bound}
\end{align}
For $k\in\{1,\dots, n\}$, let  $\oxi_k$ be the unique solution in
$(-2,2)$ of 
\begin{align}
\int_{\oxi_k}^2\frac{\sqrt{4-x^2}}{2\pi}\, \de x= \frac{k}{n}\, .
\end{align}
Then, concentration of the eigenvalues of Wigner matrices
\cite{AGZ}[Theorem 2.3.5], together with the convergence to the
semicircle law, implies, for all $n\ge n_2(\sigma)$, and  letting $j=i+n\sigma$,
\begin{align}
\prob\Big(|\xi_i-\oxi_i|\le \sigma,\,
  |\xi_{j}-\oxi_{j}|\le \sigma\Big) \ge
  1-\frac{1}{n^{20}}\, .\label{eq:ConcentrationTwoEvalues}
\end{align}
Further, by definition,
\begin{align}
\sigma &= \int_{\oxi_j}^{\oxi_i}\frac{\sqrt{4-x^2}}{2\pi}\, \de x\\
& \ge \int_{2-(\oxi_i-\oxi_j)}^{2}\frac{\sqrt{4-x^2}}{2\pi}\, \de x\\
& \ge C_0 \, (\oxi_i-\oxi_j)^{3/2}\,,
\end{align}
with $C_0$ a numerical constant.
Using this bound together with the concentration bound
(\ref{eq:ConcentrationTwoEvalues}) we get, for all $\sigma$ small
enough, and all $n\ge n_2(\sigma)$
\begin{align}
\prob\big(|\xi_i-\xi_{i+n\sigma}|\le C_1\,\sigma^{2/3}\big)\ge
  1-\frac{1}{n^{20}}\, .
\end{align}
Using this inequality together with Eq.~(\ref{eq:U00bound}) in
Eq.~(\ref{eq:Resolvent}), we get
\begin{align}
\prob\Big(\<\bv,\big(\xi_i\id-\bW\big)^{-2}\bv\>\ge
  C_2\sigma^{-1/3}\Big)\ge 1-\frac{1}{n^{10}}\, ,
\end{align}
which implies the claim of the Lemma, by taking $\sigma$ a small
enough constant.
\end{proof}

Define $\projp_{1,i}$ to be the projector orthogonal to the space spanned 
by $\{\bu_1,\bu_i\}$. The following Lemma bounds the contribution of
  this space.
\begin{lemma}\label{lemma:Perp}
Recall  that $\bF \in\reals^{n\times n}$ denotes the diagonal matrix
with entries $F_{ii} = \sqrt{1-\varphi_i^2}$.
Then, there exists  constants $C>0$, and $n_0=n_0(\eps)$ such that, for all $i\in
\{2,\dots,n\delta+1\}$, and all $n\ge n_0(\eps)$, we have
\begin{align}
\prob\big(\|\projp_{1,i}\bF\bu_i\|_2\ge C\eps^2\big)\le
 \frac{C}{n^4}\, .
\end{align}
\end{lemma}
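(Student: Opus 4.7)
\medskip

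\noindent\textbf{Proof plan for Lemma \ref{lemma:Perp}.}
My strategy is to reduce the bound on $\|\projp_{1,i}\bF\bu_i\|_2$ to a fourth-moment estimate on the top eigenvector $\bu_1$, weighted by $\bu_i^2$, which can then be extracted from Lemma \ref{lemma:LLN_u}. The first observation is that since $\bu_i$ lies in the image of $\id-\projp_{1,i}$, we have $\projp_{1,i}\bu_i=0$, and therefore
\begin{align}
\projp_{1,i}\bF\bu_i = \projp_{1,i}(\bF-\id)\bu_i, \qquad \|\projp_{1,i}\bF\bu_i\|_2 \le \|(\bF-\id)\bu_i\|_2.
\end{align}
Since $R$ is the capping function, $|\varphi_k|\le 1$ and consequently $|1-F_{kk}| = 1-\sqrt{1-\varphi_k^2}\le \varphi_k^2$. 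Moreover, by definition of $\varphi_k=R(\eps\sqrt n\, u_{1,k})$, we have $|\varphi_k|\le \eps\sqrt n\,|u_{1,k}|$ (this holds both in the linear regime and in the capped regime, since the cap forces $|\varphi_k|=1\le\eps\sqrt n\,|u_{1,k}|$). Hence $\varphi_k^4\le \eps^4 n^2 u_{1,k}^4$ uniformly in $k$, and
\begin{align}
\|(\bF-\id)\bu_i\|_2^2 \;=\; \sum_{k=1}^n (1-F_{kk})^2\, u_{i,k}^2 \;\le\; \eps^4 n^2\, \sum_{k=1}^n u_{1,k}^4\, u_{i,k}^2.
\end{align}
Thus the lemma reduces to showing that $S\equiv \sum_k u_{1,k}^4 u_{i,k}^2 \le C/n^2$ with probability at least $1-C/n^4$.

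To estimate $S$, I decompose the eigenvectors into their $\bv$-components and their orthogonal complements, as in the proof of Lemma \ref{lemma:Bphi}: write $\bu_1 = z_1\bv + \sqrt{1-z_1^2}\,\bup_1$ and $\bu_i = \eta_i\bv + \sqrt{1-\eta_i^2}\,\bup_i$, where $\bv=\bone/\sqrt n$ and $\bup_1,\bup_i\perp \bv$. Using $|a+b|^p\le 2^{p-1}(|a|^p+|b|^p)$, one has
\begin{align}
u_{1,k}^4 \le 8\Big(\frac{z_1^4}{n^2}+(1-z_1^2)^2(\uper_{1,k})^4\Big), \qquad u_{i,k}^2\le 2\Big(\frac{\eta_i^2}{n}+(1-\eta_i^2)(\uper_{i,k})^2\Big).
\end{align}
Multiplying and summing over $k\in[n]$ yields four terms. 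The two summands not involving $\bup$'s contribute at most $O(1/n^2)$ deterministically (using $|z_1|,|\eta_i|\le 1$). For each of the remaining summands I apply Lemma \ref{lemma:LLN_u} to the relevant moments: with $(a,b)=(4,0),(0,2),(4,2)$ respectively, I get, with probability at least $1-n^{-C}$ for any fixed $C$, that $\sum_k(\uper_{1,k})^4\le 4/n$, $\sum_k(\uper_{i,k})^2\le 2$, and $\sum_k(\uper_{1,k})^4(\uper_{i,k})^2\le 4/n^2$. Together with the trivial bound $|\eta_i|\le 1$ (or alternatively $|\eta_i|\le \eta$ from Lemma \ref{lemma:ProjBulk}, which is not even needed here), this gives $S\le C/n^2$ with probability at least $1-C/n^4$, whence $\|\projp_{1,i}\bF\bu_i\|_2^2\le C\eps^4$ as required.

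The only mildly delicate point is the probability accounting: Lemma \ref{lemma:LLN_u} is stated for a \emph{fixed} $i\in\{2,\ldots,n\}$ (no union bound is needed since the lemma asserts the bound for each $i$ separately), and it provides $1-n^{-C}$ tails for arbitrary $C$, so picking $C=4$ (with $n_0$ depending on $\eps$ through the constants $t$ chosen in the applications of the lemma) comfortably achieves the $C/n^4$ bound stated in the conclusion. There is no genuine obstacle; the content of the proof is entirely the reduction $\varphi_k^4\le \eps^4 n^2 u_{1,k}^4$ together with the delocalization-type estimate $\sum_k u_{1,k}^4 u_{i,k}^2=O(1/n^2)$ provided by Lemma \ref{lemma:LLN_u}.
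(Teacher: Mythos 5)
Your proof is correct and is essentially of the same flavor as the paper's, but it takes a mildly simpler route in two respects. First, you bound $\|\projp_{1,i}\bF\bu_i\|_2 \le \|(\bF-\id)\bu_i\|_2$ directly by inserting $\projp_{1,i}\bu_i=0$, and then handle the $\bv$-component of $\bu_i$ inside the sum with the trivial bound $|\eta_i|\le 1$. The paper instead peels off the $\bv$-component of $\bu_i$ first, writing $\|\projp_{1,i}\bF\bu_i\|_2 \le z_i + \|(\bF-\id)\bup_i\|_2$, and therefore needs Lemma \ref{lemma:ProjBulk} (the resolvent argument) to control $z_i$; you avoid that invocation altogether. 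Second, for the moment estimate, you apply Lemma \ref{lemma:LLN_u} directly to the mixed moment $(a,b)=(4,2)$, whereas the paper first applies Cauchy--Schwarz to split $\sum_k u_{1,k}^4(\uper_{i,k})^2$ into the pure moments $(8,0)$ and $(0,4)$ (the $(8,0)$ one after the triangular-inequality step of Eq.~(\ref{eq:TriangularBup})). Both routes rest on the same delocalization input, Lemma \ref{lemma:LLN_u}, and give the same conclusion; yours is arguably a bit more economical since it bypasses Lemma \ref{lemma:ProjBulk}. (Two tiny cosmetic points: only one of your four expanded summands is free of the $\bup$'s, not two; and the $(0,2)$ application of Lemma \ref{lemma:LLN_u} is superfluous since $\sum_k(\uper_{i,k})^2 = \|\bup_i\|_2^2 = 1$ deterministically. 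Neither affects correctness.)
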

\begin{proof}[Proof of Lemma \ref{lemma:Perp}]
We decompose $\bu_i$ as
\begin{align}
\bu_i = z_i\, \frac{\bone}{\sqrt{n}} +\sqrt{1-z_i^2}\, \bup_i
\end{align}
where $z_i = |\<\bu_i,\bone/\sqrt{n}\>| \in [0,1]$ and $\<\bup_i,\bone\>=0$
(note that we can assume $z_i\ge 0$ by eventually flipping $\bu_i$).
Since $\|\bF-\id\|_2 =\max_{1\le i\le n}|F_{ii}-1|\le 1$, and
$\projp_{1,i}\bu_i=0$, we have
\begin{align}
\|\projp_{1,i}\bF\bu_i\|_2 &=\|\projp_{1,i}(\bF-\id)\bu_i\|_2 \\
&\le z_i\,
  \|\projp_{1,i}(\bF-\id)\bone/\sqrt{n}\|_2 + \sqrt{1-z_i^2}
  \|\projp_{1,i}(\bF-\id)\bup_i\|_2\\
&\le z_i + \|(\bF-\id)\bup_i\|_2\, .\label{eq:ZiPlus}
\end{align}
From Lemma \ref{lemma:ProjBulk}, there exists a constant $n_1=n_1(\eps)$
such that, for all $n\ge n_1(\eps)$
\begin{align}
\prob\big(z_i\ge \eps^2\big) \le \frac{1}{n^5}\, .\label{eq:ZiBound}
\end{align}

For the second contribution in Eq.~(\ref{eq:ZiPlus}) we use
\begin{align}
\|(\bF-\id)\bup_i\|_2^2 & =
                          \sum_{k=1}^n\big(\sqrt{1-\varphi_k^2}-1\big)^2\big(\uper_{i,k}\big)^2\\
&\stackrel{(a)}{\le} \sum_{k=1}^n\varphi_k^4\big(\uper_{i,k}\big)^2\\
&\stackrel{(b)}{\le}  \eps^4n^2\sum_{k=1}^n\big(u_{1,k})^4\big(\uper_{i,k}\big)^2\label{eq:S_4}\\
&\stackrel{(c)}{\le}
  \eps^4\left(\frac{1}{n}\sum_{k=1}^n\big(\sqrt{n}\,
  u_{1,k})^8\right)^{1/2}\left(\frac{1}{n}\sum_{k=1}^n\big(\sqrt{n}\, \uper_{i,k}\big)^4\right)^{1/2}
\, ,\label{eq:BoundFMinusId}
\end{align}
where inequality $(a)$ follows from $1-\sqrt{1-t}\le t$ for $t\in
[0,1]$, inequality $(b)$ from $R(x)^2\le x^2$, and $(c)$ from
Cauchy-Schwartz.

We next bound with high probability each term on the right hand side
in Eq.~(\ref{eq:BoundFMinusId}). 
In the following, we let $\bv \equiv \bone/\sqrt{n}$.
Let us start with the second
term.  By applying Lemma \ref{lemma:LLN_u}, with $a=0$, $b=4$, we find
that, for all $n\ge n_0$ (with $n_0$ an absolute constant) 
\begin{align}
\prob\Big(\frac{1}{n}\sum_{k=1}^n\big(\sqrt{n}\, \uper_{i,k}\big)^4\ge 4\Big)\le \frac{1}{n^9}\, . \label{eq:BupBound}
\end{align}

Consider next the first term on the right-hand side of
Eq.~(\ref{eq:BoundFMinusId}). We have $\bu_1 = z_1\, \bv +
\sqrt{1-z_1^2}\bup_1$, where $z_1 = |\<\bu_1,\bv\>|\in [0,1]$, and -- again--
$\bup_1$ is orthogonal to $\bv$. By triangular
inequality, we have
$\|\bu_1\|_8 \le z_1\|\bv\|_8 + \sqrt{1-z_1^2}\|\bup_1\|_8 \le 
n^{-3/8} +\|\bup_1\|_8$, and therefore
\begin{align}
\frac{1}{n}\sum_{k=1}^n\big(\sqrt{n}\,
  u_{1,k})^8\le 128 + \frac{128}{n}\sum_{k=1}^n\big(\sqrt{n}\,
  \uper_{1,k})^8\, .\label{eq:TriangularBup}
\end{align}
Using this bound together with  Lemma \ref{lemma:LLN_u} (with $a=8$, $b=0$) we find
that, for all $n\ge n_0$ (with $n_0$ an absolute constant) 
\begin{align}
\prob\Big(\frac{1}{n}\sum_{k=1}^n\big(\sqrt{n}\, \uper_{1,k}\big)^8\ge 1000\Big)\le \frac{1}{n^9}\, . \label{eq:BupBound2}
\end{align}

Using Eqs.~(\ref{eq:BupBound}) and (\ref{eq:BupBound2}) in
Eq.~(\ref{eq:BoundFMinusId}), we
get, of
all $n$ large enough and some constant $C$, 
\begin{align}
\prob\Big(\|(\bF-\id)\bup_i\|_2\ge C\eps^2\Big)& \le \frac{1}{n^8}\,
,
\end{align}

Using this in Eq.~(\ref{eq:ZiPlus}), together with
Eq~(\ref{eq:ZiBound}), we obtain the desired claim.
\end{proof}
The next lemma controls the effect of $\bF$ along $\bu_i$. 
\begin{lemma}\label{lemma:Ui}
There exists  constants $C>0$, and $n_0=n_0(\eps)$ such that, for all $i\in
\{2,\dots,n\}$, and all $n\ge n_0(\eps)$, we have
\begin{align}
\prob\big(\<\bu_i,\bF\bu_i\>\ge \sqrt{1-\eps^2}-C\eps^4\big)\ge 1-
 \frac{C}{n^4}\, . \label{eq:ClaimLemmaUi}
\end{align}
\end{lemma}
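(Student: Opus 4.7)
The plan is to decompose $\bu_i$ into components along and orthogonal to $\bv = \bone/\sqrt{n}$, control the cross-terms using Lemma \ref{lemma:ProjBulk}, and then bound the diagonal contribution using the moment estimates of Lemma \ref{lemma:LLN_u}. Write $\bu_j = z_j \bv + \sqrt{1-z_j^2}\,\bup_j$ for $j \in \{1,i\}$, with $z_j = |\<\bv,\bu_j\>|$ and $\bup_j$ a unit vector orthogonal to $\bv$. Applying Lemma \ref{lemma:ProjBulk} with $\eta = \eps^2$ and a union bound yields $\max(z_1, z_i) \le \eps^2$ with probability at least $1 - 2n^{-10}$.

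Expanding bilinearly,
\begin{align*}
\<\bu_i, \bF\bu_i\> = (1-z_i^2)\<\bup_i,\bF\bup_i\> + 2z_i\sqrt{1-z_i^2}\,\<\bv,\bF\bup_i\> + z_i^2 \<\bv,\bF\bv\>.
\end{align*}
The last summand is nonnegative since $\bF \succeq 0$. For the cross term, since $\<\bv,\bup_i\> = 0$ we may replace $F_{kk}$ by $F_{kk}-1$; combined with the pointwise bound $|F_{kk}-1| = 1 - \sqrt{1-\varphi_k^2} \le \varphi_k^2$ and Cauchy--Schwarz, this gives $|\<\bv,\bF\bup_i\>| \le n^{-1/2}(\sum_k \varphi_k^4)^{1/2}$. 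Using $\varphi_k^4 \le \eps^4 n^2 u_{1,k}^4$ and expanding $(\sqrt{n}\,u_{1,k})^4$ in powers of $z_1$ and $X_k := \sqrt{n}\,\uper_{1,k}$, Lemma \ref{lemma:LLN_u} applied to the moments $\frac{1}{n}\sum_k X_k^a$ for $a \in \{0,\dots,4\}$ yields $\sum_k \varphi_k^4 \le C\eps^4 n$ with high probability, so the cross term is $O(z_i \eps^2) = O(\eps^4)$.

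For the diagonal term, apply the elementary inequality $\sqrt{1-t} \ge 1 - t/2 - t^2$ (valid on $[0,1]$) with $t = \varphi_k^2$:
\begin{align*}
\<\bup_i,\bF\bup_i\> = \sum_k \sqrt{1-\varphi_k^2}\,\uper_{i,k}^2 \ge 1 - \tfrac{1}{2}\sum_k \varphi_k^2 \uper_{i,k}^2 - \sum_k \varphi_k^4 \uper_{i,k}^2.
\end{align*}
The key quadratic estimate uses $\varphi_k^2 \le \eps^2 n u_{1,k}^2$ together with the exact expansion $n u_{1,k}^2 = z_1^2 + 2z_1\sqrt{1-z_1^2}\,X_k + (1-z_1^2)X_k^2$. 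Invoking Lemma \ref{lemma:LLN_u} with $(a,b) \in \{(0,2),(1,2),(2,2)\}$ and concentration parameter $t = \eps^2$, the three contributions to $n\sum_k u_{1,k}^2 \uper_{i,k}^2$ are $z_1^2 \cdot 1$, $O(z_1 \eps^2)$, and $(1-z_1^2)(1 \pm \eps^2)$ respectively; with $z_1 \le \eps^2$ this gives $\sum_k \varphi_k^2 \uper_{i,k}^2 \le \eps^2 + C\eps^4$. A parallel expansion for $\sum_k \varphi_k^4 \uper_{i,k}^2$, using Lemma \ref{lemma:LLN_u} with $b=2$ and $a \in \{0,\dots,4\}$, gives $\sum_k \varphi_k^4 \uper_{i,k}^2 \le C\eps^4$. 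Combining, $\<\bup_i,\bF\bup_i\> \ge 1 - \eps^2/2 - C\eps^4$, hence
\begin{align*}
\<\bu_i,\bF\bu_i\> \ge 1 - \eps^2/2 - C'\eps^4 \ge \sqrt{1-\eps^2} - C'\eps^4,
\end{align*}
where the last step uses $\sqrt{1-\eps^2} \le 1 - \eps^2/2$. A union bound over the finitely many events controlled by Lemmas \ref{lemma:ProjBulk} and \ref{lemma:LLN_u}, each with failure probability $\le n^{-C}$ for any fixed $C$, delivers the claimed $1 - C/n^4$ bound.

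The main obstacle is extracting exactly the constant $1/2$ (rather than something larger) in front of $\eps^2$ in the bound on $\sum_k \varphi_k^2 \uper_{i,k}^2$: a crude step such as $u_{1,k}^2 \le 2 z_1^2/n + 2\uper_{1,k}^2$ would inflate this coefficient and fall short of the target $\sqrt{1-\eps^2}$. Keeping the square expansion of $n u_{1,k}^2$ intact and treating each term through Lemma \ref{lemma:LLN_u} ensures that the leading $X_k^2 \uper_{i,k}^2$ sum delivers exactly $1$ up to $O(\eps^2)$, so that the $z_1$-induced corrections remain in the harmless $O(\eps^4)$ regime.
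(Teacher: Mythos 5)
Your proposal is correct and follows essentially the same route as the paper's proof: the same decomposition $\bu_i = z_i\bv + \sqrt{1-z_i^2}\,\bup_i$ (and likewise for $\bu_1$), the same invocation of Lemma \ref{lemma:ProjBulk} and Lemma \ref{lemma:LLN_u}, the same elementary inequality $\sqrt{1-t}\ge 1-t/2-O(t^2)$, the same moment expansion of $\sum_k\varphi_k^2\uper_{i,k}^2$ and $\sum_k\varphi_k^4\uper_{i,k}^2$ via $(a,b)\in\{(0,2),(1,2),(2,2),\ldots\}$, and the same final step $1-\eps^2/2\ge\sqrt{1-\eps^2}$. The only (minor, valid) variation is in the cross term: you note $\<\bv,\bF\bup_i\>=\<\bv,(\bF-\id)\bup_i\>=O(\eps^2)$ via Cauchy--Schwarz and orthogonality, pairing this with $z_i\le\eps^2$ from Lemma \ref{lemma:ProjBulk}, whereas the paper uses the cruder $|\<\bv,\bF\bup_i\>|\le\|\bF\|_2\le 1$ and instead takes the smaller threshold $z_i\le\eps^4/3$ -- both routes yield the same $O(\eps^4)$ loss.
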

\begin{proof}[Proof of Lemma \ref{lemma:Ui}]
Throughout the proof, we let $\bv\equiv\bone/\sqrt{n}$.
We decompose $\bu_i = z_i\bv + \sqrt{1-z_i^2}\,\bup_i$, where 
$z_i = |\<\bv,\bu_i\>|\in [0,1]$ and $\<\bv,\bup_i\>=0$ (note that we can
always assume $\<\bu_i,\bv\>\ge 0$ by eventually flipping $\bu_i$). 
Since $\bF$ is diagonal with $F_{ii} = \sqrt{1-\varphi_i^2}$, we have
$\|\bF\|_2= \max_{1\le i\le n}|F_{ii}|\le 1$, and $\bF\succeq
0$. Therefore 
\begin{align}
\<\bu_i,\bF\bu_i\>&= z_i^2\<\bv,\bF\bv\>+
  2z_i\sqrt{1-z_i^2}\<\bv,\bF\bup_i\>+(1-z_i^2) \<\bup_i,\bF\bup_i\>\\
&\ge \<\bup_i,\bF\bup_i\> - 2z_i-z_i^2 \\
&\ge \<\bup_i,\bF\bup_i\> - 3z_i\, ,
\end{align} 
It follows from  Lemma \ref{lemma:ProjBulk} that $z_i\le \eps^4/3$
with probability at least $1-n^{-10}$ for all $n$ large enough, and
any fixed $i\ge 2$.
Therefore, for all $n\ge n_0'(\eps)$, we
have that
\begin{align}
\prob\Big(\<\bu_i,\bF\bu_i\>\ge \<\bup_i,\bF\bup_i\> -\eps^4\Big)\ge 1-\frac{1}{n^{10}}\, .\label{eq:uFu-upFup}
\end{align}

We are now left with the task of lower bounding
$\<\bup_i,\bF\bup_i\>$.
By definition, we have
\begin{align}
\<\bup_i,\bF\bup_i\> &= \frac{1}{n}\sum_{k=1}^n\sqrt{1-\varphi_k^2} \;
  \big(\sqrt{n}\uper_{i,k}\big)^2\\
&\stackrel{(a)}{\ge} 1-\frac{1}{2n}\sum_{k=1}^n\varphi_k^2
  \big(\sqrt{n}\uper_{i,k}\big)^2-\frac{2}{n}\sum_{k=1}^n\varphi_k^4
  \big(\sqrt{n}\uper_{i,k}\big)^2\\
&\stackrel{(b)}{\ge} 1-\frac{\eps^2}{2n}\sum_{k=1}^n\big(\sqrt{n}\, u_{1,k}\big)^2
  \big(\sqrt{n}\uper_{i,k}\big)^2-\frac{2\eps^4}{n}\sum_{k=1}^n\big(\sqrt{n}\, u_{1,k}\big)^4
  \big(\sqrt{n}\uper_{i,k}\big)^2
\, . \label{eq:SumOneMinusEps}
\end{align}
where  inequality we $(a)$ follows since $\sqrt{1-x}\ge 1-(x/2)-2x^2$
for $x\in [0,1]$, and $(b)$ because $|R(x)|\le x$. 

We next consider each of the sums on the right-hand side of
Eq.~(\ref{eq:SumOneMinusEps}).
These take the form
\begin{align}
S_q\equiv \frac{1}{n}\sum_{k=1}^n\big(\sqrt{n}\, u_{1,k}\big)^{q}
  \big(\sqrt{n}\uper_{i,k}\big)^2\, ,
\end{align}
where $q=2$ (for the first sum) or $q=4$ (for the second). Using this
notation, we have
\begin{align}
\<\bup_i,\bF\bup_i\> &\ge 1-\frac{1}{2}\eps^2 S_2-2\eps^4 \, S_4\, .\label{eq:uFu}
\end{align}
The term $S_4$ has been already dealt with in the proof of Lemma
\ref{lemma:Perp},
see Eq.~(\ref{eq:S_4}).  By the same derivation, we conclude that
there exists an absolute constant $C$ such that 
\begin{align}
\prob\big(S_4\ge C\big) \le \frac{1}{n^8}\, ,\label{eq:probS4}
\end{align}
for all $n\ge n_0$. 

Next consider $S_2$. We decompose $\bu_1= z_1\, \bv+\sqrt{1-z_1^2}\,
\bup_1$ where $z_1 = |\<\bu_1,\bv\>|$ and
$\<\bup_1,\bv\>=0$. Expanding the square, and using $v_k =
1/\sqrt{n}$, we get
\begin{align}
S_2 
&=z_1^2\, \frac{1}{n}\sum_{k=1}^n
  \big(\sqrt{n}\uper_{i,k}\big)^2+
2z_1\sqrt{1-z_1^2}\, \frac{1}{n}\sum_{k=1}^n\big(\sqrt{n}\, \uper_{1,k}\big)
  \big(\sqrt{n}\uper_{i,k}\big)^2+
(1-z_1^2)\, \frac{1}{n}\sum_{k=1}^n\big(\sqrt{n}\, \uper_{1,k}\big)^2
  \big(\sqrt{n}\uper_{i,k}\big)^2\,.\label{eq:ExpansionS2}
\end{align}
Because of the invariance of the $\GOE$ distribution under 
orthogonal transformations, the pair $\{\bup_1,\bup_i\}$ is 
a uniformly random orthonormal pair, orthogonal to $\bv$. Further, it
is independent of
$z_1$. By applying Lemma \ref{lemma:LLN_u}, we obtain that, for all
$t>0$ and all $n\ge n_0(t)$
\begin{align}
\prob\left(\frac{1}{n}\sum_{k=1}^n
  \big(\sqrt{n}\uper_{i,k}\big)^2\ge 1+t\right)&\le \frac{1}{n^9}\, ,\\
\prob\left(\frac{1}{n}\sum_{k=1}^n\big(\sqrt{n}\, \uper_{1,k}\big)
  \big(\sqrt{n}\uper_{i,k}\big)^2\ge t\right) & \le \frac{1}{n^9}\,
,\\
\prob\left(\frac{1}{n}\sum_{k=1}^n\big(\sqrt{n}\, \uper_{1,k}\big)^2
  \big(\sqrt{n}\uper_{i,k}\big)^2\ge 1+t\right)
\le \frac{1}{n^9}\, .
\end{align}
Using these in Eq.~(\ref{eq:ExpansionS2}) together with $z_1\in
[0,1]$, we get
\begin{align}
\prob\big(S_2\ge 1+t\big) \le \frac{1}{n^8}\, ,
\end{align}
for all $n\ge n_0(t)$. Using this together with Eq.~(\ref{eq:probS4})
in Eq.~(\ref{eq:uFu}) (with $t=\eps^2$), we obtain
that there exists an absolute  constant $C>$ such that, for all $n\ge
n_0(\eps)$
\begin{align}
\prob\Big(\<\bup_i,\bF\bup_i\>\ge 1-\frac{1}{2}\eps^2-C\eps^4\Big)\ge 1-
 \frac{1}{n^7}\, .
\end{align}
The claim  (\ref{eq:ClaimLemmaUi}) follows since $1-\eps^2/2\ge
\sqrt{1-\eps^2}$ for $\eps\in [0,1]$, and using Eq.~(\ref{eq:uFu-upFup}).
\end{proof}

We are now in position to prove Lemma \ref{lemma:LB_ui}.
\begin{proof}[Proof of Lemma \ref{lemma:LB_ui}]
Fix $i\in \{2,\dots, n\delta+1\}$. 
We claim that $\<\bu_i,\bF\bB\bF\bu_i\> \ge 2-2\eps^2- C\delta^{2/3}
  -C\eps^4$ holds with probability larger than $1- C/n^2$. 
In order to prove this, note that
\begin{align}
\<\bu_i,\bF\bB\bF\bu_i\>&= \xi_1\<\bu_i,\bF\bu_1\>^2+
\xi_i\<\bu_i,\bF\bu_i\>^2+\<\projp_{1,i}\bF\bu_i,\bB(\projp_{1,i}\bF\bu_i)\>\\
& \ge \xi_1\<\bu_i,\bF\bu_1\>^2+
\xi_i\<\bu_i,\bF\bu_i\>^2+\xi_n\big\|\projp_{1,i}\bF\bu_i\big\|_2^2\, .\label{eq:BoundFBFProj}
\end{align}
Let $\xi_*(\delta)$ be defined as in the previous section, namely as
the unique positive solution of Eq.~(\ref{eq:XiStarDef}).
(In particular, $\xi_*(\delta)\ge 2-C\,\delta^{2/3}$.)
Note that by \cite{knowles2013isotropic}[Theorem 2.7], we have, for all $n$ large enough
\begin{align}
\prob\big(\cE\big)& \ge
1-\frac{1}{n^{10}}\, ,\label{eq:BoundProbE}\\
\cE & = \Big\{\bB:\; \xi_1\ge \lambda+\lambda^{-1}-n^{-0.4},\; \xi_{n\delta+1}\ge
\xi_*(\delta)- n^{-0.4}, \; \xi_n\ge -2-n^{-0.4}\Big\}
\end{align}
On the event $\cE$, we have, by Eq.~(\ref{eq:BoundFBFProj}),
\begin{align}
\<\bu_i,\bF\bB\bF\bu_i\> & \ge (\lambda+\lambda^{-1}-n^{-0.4})\<\bu_i,\bF\bu_1\>^2+
(\xi_*(\delta)-
n^{-0.4})\<\bu_i,\bF\bu_i\>^2- (2+n^{-0.4}) \big\|\projp_{1,i}\bF\bu_i\big\|_2^2\\
& \ge (2-C\delta^{2/3}-
n^{-0.4})\<\bu_i,\bF\bu_i\>^2-3 \big\|\projp_{1,i}\bF\bu_i\big\|_2^2\, .
\end{align}
Using Eq.~(\ref{eq:BoundProbE}),  Lemma \ref{lemma:Perp}
and Lemma \ref{lemma:Ui} we obtain, for all $n\ge n_0(\eps)$
\begin{align}
\prob\Big(
\<\bu_i,\bF\bB\bF\bu_i\> \ge
\big(2-C\delta^{2/3}-n^{-0.4}\big)(1-\eps^2-C\eps^4)-3C^2\eps^4\Big)\ge
1-\frac{C}{n^4}\, .
\end{align}
The lemma follows by adjusting the constant $C$, and union bound over $i\in\{2,\dots,n\delta+1\}$.
\end{proof}
%
%
\section{Proof of Theorem \ref{thm:Estimation} (estimation)}

\subsection{A rounding lemma}

We will need the following rounding lemma, that is of independent interest.
While we state it for general expectations of random variables, we
will apply it to finite sums
(i.e. expectations with respect to random variables that take finitely many
values).
\begin{lemma}\label{lemma:RoundingPM1}
For $t\in \reals_{\ge 0}$, define $s_t:\reals\to \{+1,0,-1\}$ by
$s_t(x) = 1$ if $x\ge t$, $s_t(x) = -1$ if $x\le -t$, and $s_t(x) = 0$ otherwise. 

Let $X_0$, $Y$ be two random variables with $\prob(X_0=+1) = \prob(X_0=-1) = 1/2$,
$\E(X_0Y)\ge \eps>0$ and $\E(Y^2)=1$. Then, there exists $t_*$ (repending on the joint law of $X_0,Y$) such that
\begin{align}
\E\{X_0s_{t_*}(Y)\} \ge \frac{\eps^2}{4}\, .
\end{align}
\end{lemma}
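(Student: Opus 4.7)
The plan is to reduce the problem to a one-dimensional integral identity and then optimize a split-integral bound, rather than working with a single cutoff $t$ directly.

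The first step is the layer-cake identity: for every $y \in \reals$,
\begin{align*}
y = \int_0^{\infty} s_t(y)\, \de t,
\end{align*}
which is checked separately on $\{y>0\}$, $\{y<0\}$, $\{y=0\}$ (on each, $t \mapsto s_t(y)$ is just $\sign(y)\,\ind(t \le |y|)$). Since $\E|Y| \le (\E Y^2)^{1/2} = 1 < \infty$, Fubini applies, and defining $f(t) \equiv \E\{X_0 s_t(Y)\}$ I obtain
\begin{align*}
\eps \;\le\; \E(X_0 Y) \;=\; \int_0^{\infty} f(t)\, \de t.
\end{align*}
The trivial bound $|X_0 s_t(Y)| \le \ind(|Y|\ge t)$ gives $|f(t)| \le \prob(|Y|\ge t)$, which is the only probabilistic input needed in what follows.

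Now set $m \equiv \sup_{t \ge 0} f(t)$; note $m > 0$ (else $\int f \le 0 < \eps$). The main step is a two-piece estimate of the integral around a well-chosen threshold $T > 0$:
\begin{align*}
\eps \;\le\; \int_0^T f(t)\, \de t + \int_T^{\infty} f(t)\, \de t \;\le\; m\,T + \int_T^\infty \prob(|Y|\ge t)\, \de t.
\end{align*}
The tail integral equals $\E[(|Y|-T)_+] \le \E[|Y|\,\ind(|Y|\ge T)]$, and by Cauchy--Schwarz followed by Markov's inequality (using $\E Y^2 = 1$) this is at most $\sqrt{\prob(|Y|\ge T)} \le 1/T$. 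Choosing $T = 1/\sqrt{m}$ balances the two terms and yields
\begin{align*}
\eps \;\le\; m\,T + \tfrac{1}{T} \;=\; 2\sqrt{m}, \qquad\text{hence}\qquad m \ge \eps^2/4.
\end{align*}

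Finally, I need that the supremum is attained, so that some $t_*$ actually realizes $f(t_*) \ge \eps^2/4$. Since $\prob(|Y| \ge t)$ is left-continuous in $t$, so is $f$; and $f(t) \to 0$ as $t \to \infty$. Any maximizing sequence $t_n$ is bounded and has a convergent subsequence with limit $t_* < \infty$; left-continuity handles the case $t_n \uparrow t_*$, while in the case $t_n \downarrow t_*$ one uses $f(t_*) \ge \lim_{s\downarrow t_*} f(s)$ (the jumps of the left-continuous function $f$ are downward), so the sup is achieved at $t_*$. This step is routine; the substantive content is the layer-cake reduction plus the Cauchy--Schwarz/Markov tail bound, which together convert the $L^2$ control of $Y$ into a uniform-in-$t$ lower bound of order $\eps^2$.
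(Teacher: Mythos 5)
Your reduction is essentially the paper's: both proofs start from the layer-cake identity $y=\int_0^\infty s_t(y)\,\de t$ (the paper phrases it as $\E(Z)=\int_0^\infty[\prob(Z\ge t)-\prob(Z\le -t)]\,\de t$ with $Z=X_0Y$), split the integral at a threshold $T$, and bound the tail by $\E(Z^2)/T$ via a second-moment estimate. Up to the inequality $\eps\le mT+1/T$ with $m=\sup_t f(t)$ and the resulting $m\ge\eps^2/4$, everything is correct.

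The gap is in the final step, where you need $f$ to attain its supremum. The assertion ``the jumps of the left-continuous function $f$ are downward'' is false. Write $f(t)=g_1(t)-g_2(t)$ with
\[
g_1(t)=\prob(X_0=1,\,Y\ge t)+\prob(X_0=-1,\,Y\le -t),\qquad
g_2(t)=\prob(X_0=-1,\,Y\ge t)+\prob(X_0=1,\,Y\le -t)\,;
\]
both are non-increasing and left-continuous, so the jump of $f$ at $t$ is $\big[g_1(t)-g_1(t^+)\big]-\big[g_2(t)-g_2(t^+)\big]$, a \emph{difference} of two nonnegative quantities, with no definite sign. Concretely, if $Y\ge 0$ and $Y$ has an atom at $y_0>0$ occurring only jointly with $X_0=-1$, then $g_2$ has a jump at $y_0$ while $g_1$ does not, so $f(y_0^+)-f(y_0)=\prob(X_0=-1,\,Y=y_0)>0$: an \emph{upward} jump. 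One can arrange the rest of the mass so that $\sup_t f(t)$ is approached from the right of $y_0$ but never attained; thus your case $t_n\downarrow t_*$ is not handled, and the proof as written does not produce a $t_*$ with $f(t_*)\ge\eps^2/4$. (It is only in the paper's actual application, where the joint law of $(X_0,Y)$ is an empirical measure on $n$ points and $f$ is piecewise constant with finitely many values, that the sup is automatically attained --- but the lemma is stated for general laws.)

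The paper sidesteps attainment entirely by applying pigeonhole to the \emph{average} rather than the supremum: with $T=2/\eps$, it shows $\frac{1}{T}\int_0^T f(t)\,\de t\ge\eps^2/4$, and if $f(t)<\eps^2/4$ for every $t\in[0,T]$ the integral would be strictly smaller --- contradiction. This directly yields a $t_*\in[0,T]$ with $f(t_*)\ge\eps^2/4$ with no topological assumptions on $f$. Your proof can be repaired with the same move: from $\eps\le\int_0^T f+1/T$ choose $T=2/\eps$ to get $\int_0^T f\ge\eps/2$, then argue by contradiction on the average rather than invoking attainment of the sup.
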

\begin{proof}
Define $Z=X_0Y$. Then the assumptions translate into $\E(Z)\ge \eps$ and $\E(Z^2)=1$, while the claim is equivalent to 
$\E\{s_t(Z)\}\ge \eps^2/4$ (note indeed that $s_t(\,\cdot\,)$ is an odd function). 
Now we have
\begin{align}
\eps&\le \E(Z) = \int_{0}^{\infty}\big[\prob(Z\ge t)-\prob(Z\le -t)\big]\,\de t\\
& \le \int_{0}^{T}\big[\prob(Z\ge t)-\prob(Z\le -t)\big]\,\de t+\frac{1}{T}\int_{T}^{\infty}t\,\big[\prob(Z\ge t)+\prob(Z\le -t)\big]\,\de t\\
& \le \int_{0}^{T}\E\{s_t(Z)\}\,\de t+\frac{1}{T}\E\{Z^2\}\, .
\end{align}
Taking $T=2/\eps$, it follows that 
\begin{align}
\frac{1}{T}\int_{0}^{T}\E\{s_t(Z)\}\,\de t \ge \frac{\eps^2}{4}\, .
\end{align}
Since the average of $\E\{s_t(Z)\}$ over the interval $t\in[0,T]$ is at least $\eps^2/4$, then there must exists $t_*\in[0,T]$
such that $\E\{s_{t_*}(Z)\}\ge \eps^2/4$.
\end{proof}

\subsection{Proof of Theorem \ref{thm:Estimation}}

Throughout this appendix, the partition $V=S_1\cup S_2$ is fixed.
Note that $G_1\sim\sG(n,a'/n,b'/n)$, and $G_2\sim\sG(n,a'\delta_n/n,b'\delta_n/n)$, with $a' = a/(1+\delta_n)$, $b' = b/(1+\delta_n)$.  
For simplicity of notation, we will use $a$ instead of $a'$ and $b$ instead of $b'$. Note that this does not change the assumptions 
because it only implies a $o_n(1)$ shift in $a$, $b$. 
Also, $G_1$ and $G_2$ are dependent because they cannot share
edges. However, if they are sampled independently, 
they will share, with high probability,
only $O(1)$ edges. We will therefeore treat them as independent: the incurred error is negligible.

Setting, by definition,  the diagonal entries of $\bAc_{G_1}$ to be equal to
$\lambda\sqrt{d}$, we have
\begin{align}
\frac{1}{\sqrt{d}}\bAc_{G_1} = \frac{\lambda}{n}\bxz\bxz^{\sT}+ \bE\, ,
\end{align}
where $\bE=\bE^{\sT}$ has zero mean, $\E\{\bE\} = 0$, with $E_{ii}
=0$, and $(E_{ij})_{i<j}$ independent
\begin{align}
E_{ij} = \begin{cases}
\frac{1}{\sqrt{d}}\Big(1-\frac{d}{n}\Big) & \mbox{ with probability
  $p_{ij}$,}\\
-\frac{\sqrt{d}}{n} & \mbox{ with probability
  $1-p_{ij}$.}\\
\end{cases}
\end{align}
Here $p_{ij} = a/n$ if $\{i,j\}\subseteq S_1$ or $\{i,j\}\subseteq
S_2$, and $p_{ij} = b/n$ otherwise. 

Proceeding exactly as in the proof of Theorem \ref{thm:Approx}, we can
compare the SDP value for the matrix $\bE$, to the SDP value for a
Gaussian matrix. We obtain the following estimate, whose proof we
omit. 
\begin{lemma}
Let $\bE\in\reals^{n\times n}$ be the random matrix defined above,
with $d=(a+b)/2$, and  $\lambda = (a-b)/\sqrt{2(a+b)}$.
Let $\bW\sim \GOE(n)$ be a Gaussian random matrix with $(W_{ij})_{i<j}\sim_{i.i.d.}\normal(0,1/n)$.
Then, there exists $C=C(\lambda)$ such that, with probability at least
$1-C\, e^{-n/C}$, for all $n\ge n_0(a,b)$
\begin{align}
\left|\frac{1}{n}\SDP(\bE)-\frac{1}{n}\SDP(\bW)\right|&\le
  \frac{C\log d}{d^{1/10}}\, ,\label{eq:SDP_BerGOE}
\end{align}
Further $C(\lambda)$ is bounded over compact intervals $\lambda\in
[0,\lambda_{\rm max}]$
\end{lemma}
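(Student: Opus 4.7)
The plan is to follow the same four-step strategy as in the proof of Theorem \ref{thm:Approx}, mutatis mutandis, replacing $\bAc_G/\sqrt{d}$ by $\bE$ and $\bB(\lambda)$ by $\bW$ throughout. The crucial observation is that $\bE$ has independent, centered entries above the diagonal with variances $(a/nd)(1-a/n)$ or $(b/nd)(1-b/n)$, depending on whether $\{i,j\}$ lies within one of the two planted blocks or across them. This is precisely the variance structure of the intermediate Gaussian matrix $\bU$ used in the proof of Lemma \ref{lemma:Interpolation} (minus its rank-one mean $\lambda\bv\bv^{\sT}$), so subtracting the common rank-one mean $(\lambda/n)\bxz\bxz^{\sT}$ from both sides of the comparison $\bAc_G/\sqrt{d}$ versus $\bB(\lambda)$ preserves all the quantities that drive the proof: variances, third absolute moments, operator-norm perturbations, and the smoothness of $\bM\mapsto\Phi(\beta,k;\bM)$.

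First I would prove an analog of Lemma \ref{lemma:Interpolation} without the planted signal, namely
\begin{align*}
\big|\E\Phi(\beta,k;\bE)-\E\Phi(\beta,k;\bW)\big|/n\le 2\beta^2/\sqrt{d}+C\lambda^{1/2}/d^{1/4}.
\end{align*}
This decomposes, just as in the proof of Lemma \ref{lemma:Interpolation}, into $(a)$ a Lindeberg step (Lemma \ref{lemma:Lindeberg}) comparing $\bE$ to a matching-variance Gaussian matrix $\bU$, using $\|\partial^3_{ij}\Phi\|_\infty\le 6\beta^2$ and the same third-moment sum $S_3\le 2n/\sqrt{d}$ derived in the proof of Lemma \ref{lemma:Interpolation1} --- the mean of $\bE$ being zero rather than $(\lambda/n)\bxz\bxz^{\sT}$ plays no role since Lindeberg only uses equality of the first two moments; and $(b)$ a coupling step constructing three independent centered Gaussian matrices $\bZ_0,\bZ_1,\bZ_2$ so that $\bU\ed\bZ_0+\bZ_1$ and $\bW\ed\bZ_0+\bZ_2$, and bounding $|\Phi(\beta,k;\bU)-\Phi(\beta,k;\bW)|/n\le \|\bU-\bW\|_{op}\le \|\bZ_1\|_{op}+\|\bZ_2\|_{op}$ via standard GOE estimates.

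Next, the concentration lemma \ref{lemma:ConcentrationPhi} transfers to $\bE$ via the same bounded-differences martingale argument (flipping one edge indicator shifts $\Phi(\beta,k;\bE)$ by $O(1/\sqrt{d})$), and the mixed-norm bound $\|\bE\|_{\infty\to 2}\le C(\sqrt{d}+\lambda)\sqrt{n}$ w.h.p.\ follows from Lemma \ref{lemma:MixedNorm} applied to $\bAc_G$ combined with $\|(\lambda/n)\bxz\bxz^{\sT}\|_{\infty\to 2}=\lambda\sqrt{n}$. Substituting into Lemma \ref{lemma:ZeroTemperature} and optimizing over $\eps$ yields $|\Phi(\beta,k;\bE)-\OPT_k(\bE)|/n\le (k/\beta)\log(C\beta(\sqrt{d}+\lambda)/k)$, and Theorem \ref{thm:Gro} (in the form of Eq.~(\ref{eq:GroSimple})) gives $|\SDP(\pm\bE)/n-\OPT_k(\pm\bE)/n|\le C/k$, once we know $\SDP(\pm\bE)/n=O(1)$. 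This last bound we bootstrap from $\OPT_k(\pm\bW)/n\le \|\bW\|_{op}\le 3$ (w.h.p., by standard GOE estimates) together with the interpolation/smoothing chain just established.

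Assembling the four estimates by triangle inequality and optimizing over $\beta,k$ (the optimum being $k\asymp d^{1/10}$, balancing the $k^{2/3}/d^{1/6}$ interpolation-plus-smoothing contribution against the $1/k$ Grothendieck contribution, exactly as in the proof of Theorem \ref{thm:Approx}) yields $|\SDP(\bE)-\SDP(\bW)|/n\le C(\lambda)\log(d)/d^{1/10}$ with probability $1-Ce^{-n/C}$, after invoking a $\SDP$-level concentration analog of Lemma \ref{lemma:ConcentrationSDP} (whose proof carries over unchanged because $|\SDP(\bE)-\SDP(\bE')|\le 1/\sqrt{d}$ when $G$ and $G'$ differ in a single edge). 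The main obstacle is the bookkeeping for the $\lambda$-dependence: one must track $\lambda$ through the $\sqrt{d}+\lambda$ factor in the mixed-norm bound and through the $\lambda^{1/2}/d^{1/4}$ term from the Gaussian coupling, verifying that $C(\lambda)$ depends only on $\lambda$ and stays bounded on any compact interval $[0,\lambda_{\rm max}]$. Both dependencies are at most polynomial (linear or square-root) in $\lambda$, which makes this step routine.
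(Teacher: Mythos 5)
Your proposal is correct and mirrors the approach the paper intends (the authors explicitly omit the proof, stating only that it ``proceeds exactly as in the proof of Theorem \ref{thm:Approx}''). The key observation you make --- that $\bE$ is centered with precisely the same block-variance structure $v(a),v(b)$ as the auxiliary Gaussian matrix $\bU$ from the proof of Lemma \ref{lemma:Interpolation}, so that removing the rank-one planted signal from both sides of the comparison leaves all the moments, the mixed-norm bound, the Lindeberg third-moment sum, and the $\|\bU-\bW\|_{op}$ coupling estimate intact --- together with the $k\asymp d^{1/10}$ balance between the $k^{2/3}/d^{1/6}$ and $1/k$ contributions, is exactly the argument the paper is pointing to.
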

As a consequence of this lemma, and of Theorem \ref{thm:Gaussian}, we
have
\begin{align}
\frac{1}{n}\SDP(\bE)\le 2+ \frac{C\log d}{d^{1/10}}\, , \label{eq:SDPE}
\end{align}
with  probability at least $1-C\, e^{-n/C}$.

Consider then a maximizer  $\bX_*$ of the SDP
  (\ref{eq:SDP.DEF}), with $\bM=\bAc_{G_1}$. We have, by Theorem \ref{thm:Gaussian} and
  Theorem \ref{thm:Approx} (or, equivalently, by Theorem \ref{thm:SDP_Test})
\begin{align}
\frac{\lambda}{n^2}\<\bxz\bxz^{\sT},\bX_*\>+\frac{1}{n}\<\bE,\bX_*\>=\frac{1}{n\sqrt{d}}\SDP(\bAc)\ge
  2+\Delta(\eps)\, ,
\end{align}
for all $d\ge d_*(\eps)$, with probability at least $1-Ce^{-n/C}$. Using the bound (\ref{eq:SDPE}), this implies, for $\lambda$ bounded
and some $\Delta_2(\eps)>0$,
\begin{align}
\frac{1}{n^2}\sum_{i=1}^n\xi_i\<\bxz,\bv_i\>^2=\frac{1}{n^2}\<\bxz\bxz^{\sT},\bX_*\>\ge  \Delta_2(\eps)
\end{align}
Since $\bX_*\in\PSD_1(n)$, we have $\xi_i\ge 0$ and $\sum_{i=1}^n\xi_i=n$. Hence there exists $I_*\in [n]$ 
such that 
\begin{align}
\frac{1}{\sqrt{n}}\big|\<\bxz,\bv_{I_*}\>\big|\ge \sqrt{\Delta_2(\eps)}\, .
\end{align}
Assume, without loss of generality, that $\<\bxz,\bv_{I_*}\>\ge 0$. Applying Lemma \ref{lemma:RoundingPM1}
to the pair $(X_0,Y)$ with joint distribution $n^{-1}\sum_{j=1}^n\delta_{x_{0,j},v_{I_*,j}}$, we conclude that there exists 
$t_*\in\reals$ such that 
\begin{align}
\frac{1}{n}\<\bxz,s_{t_*}(\bv_{I_*})\>\ge \frac{\Delta_2(\eps)}{4}\, .
\end{align}
(Here $s_{t_*}(\,\cdot\,)$ is understood to be applied componentwise.)
Note that $s_{t_*}(\bv_{I_*})= \hbx^{(I_*,J_*)}$ for some index $J_*\in [n]$, whence 
\begin{align}
\frac{1}{n}\<\bxz,\hbx^{(I_*,J_*)}\>\ge \frac{\Delta_2(\eps)}{4}\, . \label{eq:LowerBoundEstimator}
\end{align}
In other words, at least one of the estimators $\{\hbx^{(i,j)}:\, i,j\in [n]\}$ has a good correlation with the ground truth. 
We are left to prove that step $(iv)$ in our algorithm does indeed select such a pair of indices $i,j$.
This follows from the following simple concentration lemma.
\begin{lemma}
There exists a constant $C=C(a,b)$ bounded for $a,b$ in bounded intervals, such that, for all $s\in [0,1]$.
\begin{align}
\prob\Big\{\max_{i,j\in [n]}\Big|\<\hbx^{(i,j)},\bAc_{G_2}\hbx^{(i,j)}\>-\frac{\lambda\sqrt{d}}{2n}\<\hbx^{(i,j)},\bxz\>^2+\frac{\lambda\sqrt{d}}{2}\Big|\ge s\sqrt{n}\Big\}\le C\, e^{-\sqrt{n}s^2/C}\, .
\end{align}  
\end{lemma}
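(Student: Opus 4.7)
The plan is to establish uniform concentration of $\<\hbx^{(i,j)}, \bAc_{G_2}\hbx^{(i,j)}\>$ around its conditional mean by a Bernstein-type estimate for each fixed pair $(i,j)$, and then to take a union bound over the at most $n^2$ pairs. The crucial structural fact is that $\hbx^{(i,j)}$ is measurable with respect to $G_1$ alone (steps $(ii)$--$(iii)$ of the estimator use only $G_1$), so conditioning on $G_1$ fixes every $\hbx^{(i,j)}$ while leaving the randomness of $G_2$ intact. The formal independence of $G_1$ and $G_2$ is only approximate, but, as the paper already observes, they can be coupled to an independent pair differing in $O(1)$ edges, which perturbs each quadratic form by $O(1)$ and is absorbed harmlessly into the claim.

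Conditional on $G_1$, fix any $\hbx \in \{-1,0,1\}^n$. Writing
\[
\<\hbx, \bAc_{G_2}\hbx\> \;=\; 2\sum_{k<l} A_{G_2,kl}\,\hat{x}_k\hat{x}_l \;-\; \frac{d}{n}\,\<\hbx,\bone\>^2,
\]
the stochastic part is a sum of independent Bernoulli variables with success probabilities $q_{kl} = \delta_n p_{kl}/n = \Theta(n^{-3/2})$, each weighted by $|\hat{x}_k\hat{x}_l| \le 1$. Substituting $p_{kl} = d + \lambda\sqrt{d}\, x_{0,k}x_{0,l}$ and expanding $\sum_{k<l}$ in terms of $\<\hbx,\bone\>^2$, $\<\hbx,\bxz\>^2$ and $\|\hbx\|_2^2$, one checks that $\E_{G_2}[\<\hbx, \bAc_{G_2}\hbx\>]$ matches the stated expression $\frac{\lambda\sqrt{d}}{2n}\<\hbx,\bxz\>^2 - \frac{\lambda\sqrt{d}}{2}$ up to additive corrections of size $O(d/\sqrt{n})$, negligible on the target $\sqrt{n}$ scale. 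The sum of variances is controlled by $\sum_{k<l} 4\, q_{kl}(1-q_{kl})(\hat{x}_k\hat{x}_l)^2 \le C(a,b)\sqrt{n}$, while each individual summand is bounded by $2$, so Bernstein's inequality yields, for $t = s\sqrt{n}/2$ with $s \in [0,1]$,
\[
\prob\Big(\big|\<\hbx, \bAc_{G_2}\hbx\> - \E_{G_2}[\<\hbx, \bAc_{G_2}\hbx\>]\big| \ge t\Big) \;\le\; 2\exp\!\Big(-\frac{s^2\sqrt{n}}{C'}\Big)
\]
for some constant $C' = C'(a,b)$.

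A union bound over the $n^2$ pairs $(i,j)\in[n]^2$ then contributes an additive $2\log n$ in the exponent. For $s^2\sqrt{n}\ge 4C'\log n$ this is absorbed by slightly enlarging the constant, and for smaller $s$ the stated bound $Ce^{-s^2\sqrt{n}/C}$ is essentially a constant and holds trivially by choosing $C$ large enough. The main obstacle is the mean-matching step: the naive expansion produces a contribution proportional to $\<\hbx,\bone\>^2$ of apparent order $n$, which must either cancel (once the centering convention for $\bAc_{G_2}$ in the $\delta_n$-sparse regime is correctly tracked) or otherwise drop out on the $\sqrt{n}$ scale. This algebraic reconciliation is the only non-routine step; the concentration and union-bound steps are standard once the mean is pinned down.
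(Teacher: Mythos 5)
Your proof has the same structural skeleton as the paper's: $\hbx^{(i,j)}$ is measurable with respect to $G_1$, so conditioning on $G_1$ reduces to concentration for a single fixed $\hbx\in\{+1,0,-1\}^n$, followed by a union bound over the $n^2$ pairs. Where you diverge is the concentration tool. The paper applies Azuma--Hoeffding on the event $|E_2|\le m_2$ (each edge changes $\<\hbx,\bAc_{G_2}\hbx\>$ by at most $2$, giving $2e^{-s^2/(8m_2)}$ conditionally) and then controls $|E_2|\le C\sqrt{n}$ by Chernoff. You instead apply Bernstein directly to the independent weighted Bernoulli sum, reading off variance $O(\sqrt{n})$ and increments bounded by $2$. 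Both routes yield the same $e^{-s^2\sqrt{n}/C}$ exponent. Yours is arguably a bit cleaner, since the conditional Azuma step -- tying the sum of squared martingale increments to $|E_2|$ -- takes a small additional argument that the paper leaves implicit.

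The concern you raise about the $(d/n)\<\hbx,\bone\>^2$ term, of apparent order $n$, is real and should be settled rather than deferred. The paper's own proof simply asserts $\E\bAc_{G_2}=(a-b)(\bxz\bxz^{\sT}-\id)/(2n)$ without tracking the $\delta_n$ factor in $G_2\sim\sG(n,a\delta_n/n,b\delta_n/n)$, which is exactly where the cancellation has to happen. Under the centering convention $\bAc_{G_2}=\bA_{G_2}-(d\delta_n/n)\bone\bone^{\sT}$ the $\<\hbx,\bone\>^2$ piece cancels exactly in expectation, and the residual mean is $\frac{\delta_n\lambda\sqrt{d}}{n}\big(\<\hbx,\bxz\>^2-\|\hbx\|_2^2\big)+O(\delta_n d)$, which is of the correct $\sqrt{n}$ scale; you should verify this explicitly and reconcile it with the lemma's stated centering instead of flagging it as an unresolved obstacle.
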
 
\begin{proof}
Throughout the proof, $C$ denotes a constant that might depend on $a,b$, bounded for $a,b$  in compact intervals.
For  any fixed vector $\hbx\in \{+1,0,-1\}^n$ we have, by Azuma-Hoeffding inequality
\begin{align}
\prob\Big\{\big|\<\hbx,\bAc_{G_2}\hbx\>-\E\<\hbx,\bAc_{G_2}\hbx\>\big|\ge s\; ;\;\; |E_2|\le m_2\Big\}\le 2\, e^{-s^2/8m_2}\, .
\end{align}  
However, by Chernoff bound, $m_2\le C\sqrt{n}$ with probability at least $1-Ce^{-n^{1/2}/C}$, whence,
for all $s\in [0,1]$, 
\begin{align}
\prob\Big\{\big|\<\hbx,\bAc_{G_2}\hbx\>-\E\<\hbx,\bAc_{G_2}\hbx\>\big|\ge s\sqrt{n}\Big\}\le C\, e^{-\sqrt{n}s^2/C}\, .
\end{align}  
On the other hand $\E\bAc_{G_2} = (a-b) (\bxz\bxz^{\sT}-\id)/(2n)$, whence 
\begin{align}
\prob\Big\{\Big|\<\hbx,\bAc_{G_2}\hbx\>-\frac{\lambda\sqrt{d}}{2n}\<\hbx,\bxz\>^2+\frac{\lambda\sqrt{d}}{2}\Big|\ge s\sqrt{n}\Big\}\le C\, e^{-\sqrt{n}s^2/C}\, .
\end{align}  
The claim follows by taking union bound over $i,j\in [n]$, since $\bx^{(i,j)}$ is independent of $G_1$.
\end{proof}

It follows from the last lemma, and Eq.~(\ref{eq:LowerBoundEstimator}) that
\begin{align}
\frac{1}{n}\max_{i,j\in [n]}\<\hbx^{(i,j)},\bAc_{G_2}\hbx^{(i,j)}\>\ge \frac{\lambda\sqrt{d}\Delta_2(\eps)^2}{64} \equiv\Delta_3(\eps)\, ,\label{eq:EventEst1}
\end{align}
with probability at least $1-Ce^{-n^{1/2}/C}$. Hence, again by the last lemma
\begin{align}
\max\Big\{\<\hbx^{(i,j)},\bAc_{G_2}\hbx^{(i,j)}\> :\, (i,j)\in [n] \frac{1}{n}|\<\hbx^{(i,j)},\bxz\>|\le \frac{\Delta_2(\eps)}{8}\Big\}\le 
\frac{\Delta_3(\eps)}{2}\label{eq:EventEst2}
\end{align}
with probability at least $1-Ce^{-n^{1/2}/C}$.
The claim follows since on the events (\ref{eq:EventEst1}) and (\ref{eq:EventEst2}) we necessarily have 
$|\<\hbx^{(I,J)},\bxz\>|\ge n\Delta_2(\eps)/8$.
%
%
\section{Proof of Corollary \ref{coro:Robustness} (robustness)}

Recall that $\bAc_G=\bA_C-(d/n)\bone\bone^{\sT}$ denotes the centered
adjacency matrix. 
If $G$ and $\tG$ differ in one edge, then
$|\SDP(\bAc_G)-\SDP(\bAc_{\tG})|\le 1$: a
complete proof of this simple fact is given in the proof of Lemma
\ref{lemma:ConcentrationSDP} below. The claim then follows immediately
since (using the coupling in the statement)
$|\SDP(\bAc_G)-\SDP(\bAc_{\tG})|=o(n)$ with high probability.

%
%
\section{Proof of Theorem \ref{thm:SDP_Test_r} (testing $r>2$ communities)}

The proof is very similar to the one of Theorem \ref{thm:SDP_Test},
and we therefore limit ourself to an outline emphasizing the main
differences.
Throughout the proof we set
\begin{align}
d & = \frac{1}{r} \big[a+(r-1)b\big]\, ,\\
\lambda  &= \frac{a-b}{r\sqrt{d}} = \frac{a-b}{\sqrt{r(a+(r-1)b)}} \ge
1+\eps\, .
\end{align}
 Further, without loss of generality, we can assume
$\lambda\in [0,\lambda_{\rm max}]$ with $\lambda_{\rm max}>1$ fixed.
Also, the concentration lemma \ref{lemma:ConcentrationSDP} applies
unchanged to $\SDP(\bAc_G)$ for $G\sim\sG_r(n,a/n,b/n)$. It is
therefore  sufficient to check that the error probability vanishes as
$n\to\infty$. The exponentially decaying error rate follows.

Consider first the probability of a false positive (i.e. declaring
that $r$ communities are present when $G\sim\sG(n,d/n)$).
As for Theorem \ref{thm:SDP_Test}, we have
\begin{align}
\lim_{n\to\infty}\prob_0\big(T_r(G;\delta) =1 \big) &=
  \lim_{n\to\infty}\prob_0\Big(\frac{1}{n}\SDP(\bAc_G)\ge
  2(1+\delta)\sqrt{d} \Big)
 = 0\, .
\end{align}
where the last equality holds for any $d\ge d_0(\delta)$ by Theorem \ref{thm:Main}.

We are then left with the task of proving that the probability of
false negatives vanishes. This follows the same steps as for Theorem
\ref{thm:SDP_Test}. Namely: $(i)$ We approximate the value of
$\SDP(\bAc_G)$ for $G\sim\sG_r(n,a/n,b/n)$ by the value of the SDP
for a suitable deformed GOE model; $(ii)$ We analyze the deformed GOE
model.

The relevant deformed GOE random matrix is defined as follows.
Let $\bB_0(r)\in\reals^{n\times n}$ be given by
\begin{align}
B_0(r)_{i,j}= \begin{cases}
(r-1)/n & \mbox{ if $\{i,j\}\subseteq S_\ell$ for some $\ell\in[r]$,}\\
-1/n & \mbox{ otherwise.}
\end{cases}
\end{align}
Note that $\bB_0(r)$ has rank $(r-1)$, and all of its non-zero eigenvalues are equal to $\bB_0 = 1$. Hence $\bB_0 = \sum_{k=1}^{r-1}\bv_k\bv_k^{\sT}$, for
$\bv_1,\dots,\bv_{r-1}\in\reals^{n}$ an orthonormal set. 
We then let
\begin{align}
\bB(\lambda,r) = \lambda\, \bB_0(r) + \bW\, ,\label{eq:Bdefinition_r}
\end{align}
with $\bW\sim\GOE(n)$. 

We are now in position to state an analogue of the approximation
theorem \ref{thm:Approx}.
\begin{theorem}\label{thm:Approx_r}
Let $G\sim\sG_d(n,a/n,b/n)$, $d=(a+(r-1)b)/r$, and $\bAc_G = \bA_G-(d/n)\bone\bone^{\sT}$
be its centered adjacency matrix. Let $\lambda = (a-b)/(r\sqrt{d})$
and define $\bB = \bB(\lambda,r)$ to be the deformed GOE matrix in Eq.~(\ref{eq:Bdefinition_r}).
Then, there exists $C=C(\lambda,r)$ such that, with probability at least
$1-C\, e^{-n/C}$, for all $n\ge n_0(a,b,r)$
\begin{align}
\left|\frac{1}{n\sqrt{d}}\SDP(\bAc_G)-\frac{1}{n}\SDP(\bB(\lambda,r))\right|&\le
  \frac{C\log d}{d^{1/10}}\, ,\\
\left|\frac{1}{n\sqrt{d}}\SDP(-\bAc_G)-\frac{1}{n}\SDP(-\bB(\lambda,r))\right|&\le
  \frac{C\log d}{d^{1/10}}\, . 
\end{align}
Further $C(\lambda,r)$ is bounded over compact intervals $\lambda\in
[0,\lambda_{\rm max}]$
\end{theorem}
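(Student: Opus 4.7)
The plan is to imitate the proof of Theorem \ref{thm:Approx} essentially verbatim, following the three-step Smooth--Interpolate--Analyze strategy, with the only genuinely new input being a generalized interpolation lemma suited to the $r$-block covariance structure. The smoothing lemma (Lemma \ref{lemma:ZeroTemperature}) and the higher-rank Grothendieck inequality (Theorem \ref{thm:Gro}) are matrix-level statements that apply verbatim to either $\bAc_G/\sqrt{d}$ or $\bB(\lambda,r)$, so the only model-dependent steps are the mixed-norm bound (Lemma \ref{lemma:MixedNorm}), the concentration lemmas, and the interpolation comparison of $\E\Phi(\beta,k;\,\cdot\,)$ on the two random matrices.

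First I would extend Lemma \ref{lemma:MixedNorm} to the current setting. For $\bAc_G$ with $G\sim\sG_r(n,a/n,b/n)$, the argument is identical: $\|\bA_G\|_{\infty\to 2}^2$ is maximized at a vertex of the hypercube, so it is bounded by $\sum_i \deg_G(i)^2$, and since under $\sG_r(n,a/n,b/n)$ each vertex still has $\mathrm{Bin}(n-1,p)$-type degree with $p=O(d/n)$, the standard tail bound yields $\sum_i \deg_G(i)^2 \le C_0 d^2 n$ with exponential probability. For $\bB(\lambda,r)$ we use $\|\bB_0(r)\|_{op}=1$ and the GOE tail, giving $\|\bB(\lambda,r)\|_{\infty\to 2}\le (C+\lambda)\sqrt{n}$. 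The concentration bounds (Lemmas \ref{lemma:ConcentrationPhi}, \ref{lemma:ConcentrationSDP}) are also unchanged because their proofs only use that each edge flip changes $\Phi$ (resp. $\SDP$) by at most $1/\sqrt{d}$ (resp. $1$), together with an Azuma--H\"oeffding martingale over $\Theta(dn)$ edge slots.

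The main new ingredient is the $r$-block analogue of Lemma \ref{lemma:Interpolation}. Here I would insert an intermediate Gaussian matrix $\bD$ that has the same block-mean as $\bAc_G/\sqrt{d}$, namely $\E[\bD]=\lambda\bB_0(r)$, and centered Gaussian fluctuations with variance $v(a)=a(1-a/n)/(nd)$ for in-block entries and $v(b)=b(1-b/n)/(nd)$ for cross-block entries (zero on the diagonal). A direct verification shows the first two moments of $\bD$ match those of $\bAc_G/\sqrt{d}$ entrywise. The Lindeberg comparison (Lemma \ref{lemma:Lindeberg}) then proceeds exactly as in the proof of Lemma \ref{lemma:Interpolation1}, because the third-derivative bound $\|\partial_{ij}^3\Phi\|_\infty\le 6\beta^2$ is model-free, and the sum of absolute third moments $S_3$ is still $O(n/\sqrt{d})$ by the same calculation (the bound depends only on $a,b,d$ and not on the number of blocks). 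To compare $\bD$ to $\bB(\lambda,r)$, I would couple them through three independent Gaussian pieces $\bZ_0,\bZ_1,\bZ_2$: $\bZ_0$ with entrywise variance $\min(v(b),1/n)=v(b)$, $\bZ_1$ supported on in-block pairs with variance $v(a)-v(b)$ to build $\bD=\lambda\bB_0(r)+\bZ_0+\bZ_1$, and $\bZ_2$ with variance $1/n-v(b)$ off-diagonal (and $2/n$ diagonal) to build $\bB(\lambda,r)=\lambda\bB_0(r)+\bZ_0+\bZ_2$. Then Eq.~(\ref{eq:MatrixCont}) (which bounds $|\Phi(\beta,k;\bB)-\Phi(\beta,k;\bD)|/n$ by $\|\bB-\bD\|_{op}$) plus standard GOE-type operator-norm bounds give $\E\|\bZ_1\|_{op}+\E\|\bZ_2\|_{op}\le C\sqrt{(a-b)/d}+C\sqrt{(d-b)/d}=O(\sqrt{\lambda r/\sqrt{d}})$, which is $o_d(1)$ for fixed $r$ and bounded $\lambda$.

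The hard part will be verifying that the operator-norm bound from the second coupling is still $o_d(1)$, because unlike the $r=2$ case the quantity $(d-b)/d=(a-b)/(rd)$ is not automatically $O((a-b)/d)$; one must invoke $\lambda=(a-b)/(r\sqrt{d})$ bounded to conclude $(d-b)/d=\lambda/\sqrt{d}=o_d(1)$. Once this step is checked, the rest of the argument is purely mechanical: combine the two interpolation estimates with the smoothing bound of Lemma \ref{lemma:ZeroTemperature}, optimize over $\beta$ (yielding a $k^{2/3}/d^{1/6}\log(d+\lambda)$ bound on $|\OPT_k(\bAc_G)/(n\sqrt{d})-\OPT_k(\bB(\lambda,r))/n|$), apply Theorem \ref{thm:Gro} to pass from $\OPT_k$ back to $\SDP$ at rate $O(1/k)$, and choose $k\asymp d^{1/10}$ so that both errors balance, giving the claimed $C\log d/d^{1/10}$ rate. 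The bound for $\SDP(-\bAc_G)$ follows by replacing $\beta$ with $-\beta$ throughout and using the second inequality of Theorem \ref{thm:Gro}.
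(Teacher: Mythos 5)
Your proposal is correct and follows the same route as the paper, which simply states that the proof of Theorem~\ref{thm:Approx_r} is identical to that of Theorem~\ref{thm:Approx} modulo replacing $\bv\bv^{\sT}$ by $\bB_0(r)$ in the interpolation lemma; you correctly verify the matching of first and second moments, the $S_3$ bound, and the three-piece Gaussian coupling. The only minor quibble is that the step you flag as a ``hard part'' is in fact identical in difficulty to the $r=2$ case, since $(d-b)/d = (a-b)/(rd) = \lambda/\sqrt{d}$ holds by the same algebra for every $r\ge 2$.
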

The proof of this theorem is exactly equal to the one of Theorem
\ref{thm:Approx_r}: $(i)$ We introduce a rank-constrained version of the
above SDP, and boud the error using the Grothendieck-type inequality
of Theorem \ref{thm:Gro}; $(ii)$ We introduce a `finite-temperature'
smoothing of this optimization problem, and bound the error using
Lemma \ref{lemma:ZeroTemperature}; $(iii)$ We use Lindeberg method as
in  Lemma \ref{lemma:Interpolation} to replace the centered adjacency
matrix $\bAc_G$ by the Gaussian model $\bB(\lambda,r)$. 
We will omit further details of this proof.

We then analyze the model $\bB(\lambda,r)$, and establish the
following analogue of Theorem \ref{thm:Gaussian}.
\begin{theorem}\label{thm:Gaussian_r}
Let $\bB=\bB(\lambda,r)\in\reals^{n\times n}$ be a symmetric matrix distributed
according to the model (\ref{eq:Bdefinition_r}), $r\ge 2$. 

If $\lambda>1$, then there exists $\Delta(\lambda,r)>0$ such that
$\SDP(\bB(\lambda,r))/n\ge 2+\Delta(\lambda,r)$ with probability
  converging to one as $n\to\infty$.
\end{theorem}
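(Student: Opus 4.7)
The plan is to carry out the same explicit-witness construction used for Theorem \ref{thm:Gaussian}$(b)$, now using $r-1$ spike eigenvectors in place of one. Let $(\bu_1,\xi_1),\dots,(\bu_n,\xi_n)$ denote the eigenpairs of $\bB(\lambda,r)$ in decreasing order. By the BBAP phase transition for finite-rank deformations of GOE matrices \cite{knowles2013isotropic,capitaine2011free,benaych2012large}, when $\lambda>1$ each of the top $r-1$ eigenvalues converges in probability to $\lambda+\lambda^{-1}>2$, with corresponding eigenvectors aligned with the spike directions $\bv_1,\dots,\bv_{r-1}$, while $\xi_{r-1+n\delta} \ge \xi_*(\delta) - o(1) \ge 2 - C\delta^{2/3}$. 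Since $\bv_1,\dots,\bv_{r-1}$ lie in the span of the normalized community indicators, $\sqrt{n}\,|v_{k,i}| \le \sqrt{r}$ uniformly in $i,k$; combined with delocalization of the top eigenvectors this gives $\sqrt{n}\,|u_{k,i}| = O_r(1)$ with high probability, and in particular the capping function $R(\eps\sqrt{n}\,u_{k,i})$ is effectively inactive when $\eps$ is small in terms of $r$.

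Define $\varphi_{k,i} \equiv R(\eps\sqrt{n}\,u_{k,i})$, let $\bU\in\reals^{n\times(n\delta)}$ have columns $\bu_r,\dots,\bu_{r-1+n\delta}$, and pick a diagonal $\bD$ with $D_{ii}^2\|\bU^{\sT}\bfe_i\|_2^2 = 1-\sum_{k=1}^{r-1}\varphi_{k,i}^2$. The proposed witness is
\begin{align}
\bX \;=\; \sum_{k=1}^{r-1}\bphi_k\bphi_k^{\sT} \;+\; \bD\bU\bU^{\sT}\bD.
\end{align}
For $\eps^2 r^2$ small enough, $\sum_k\varphi_{k,i}^2 < 1$ uniformly in $i$ (using the pointwise bound $\varphi_{k,i}^2 \le \eps^2 n\,u_{k,i}^2$ and the delocalization estimate above), so $\bD$ is real; then $\bX \succeq 0$ as a sum of PSD matrices and $X_{ii}=\sum_k\varphi_{k,i}^2+D_{ii}^2\|\bU^{\sT}\bfe_i\|_2^2 = 1$, hence $\bX\in\PSD_1(n)$.

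To bound $\frac{1}{n}\<\bB,\bX\>$ from below, treat the two summands separately. The spike term equals $\sum_{k=1}^{r-1}\frac{1}{n}\<\bB,\bphi_k\bphi_k^{\sT}\>$ (no cross terms appear, because we sum rank-one pieces rather than squaring a sum), and each summand is lower-bounded by $\eps^2\xi_k - C(r)\eps^4 \ge \eps^2(\lambda+\lambda^{-1}) - C(r)\eps^4$ via an argument identical to Lemma \ref{lemma:Bphi}. The bulk term is handled by adapting Lemmas \ref{lemma:DtoF}--\ref{lemma:LB_ui}: the normalization step replaces $\bD\bU\bU^{\sT}\bD$ by $(1/\delta)\bF\bU\bU^{\sT}\bF$ up to $O(\sqrt{(\log n)/n})$ error, where $\bF$ is diagonal with $F_{ii}=\sqrt{1-\sum_k\varphi_{k,i}^2}$, and generalizing Lemmas \ref{lemma:Perp} and \ref{lemma:Ui} gives $\frac{1}{n\delta}\sum_{i=r}^{r-1+n\delta}\<\bu_i,\bF\bB\bF\bu_i\> \ge 2 - 2(r-1)\eps^2 - C(r)\delta^{2/3} - C(r)\eps^4$. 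Summing the two pieces,
\begin{align}
\frac{1}{n}\<\bB,\bX\> \;\ge\; 2 + (r-1)\eps^2\bigl(\lambda+\lambda^{-1}-2\bigr) - C(r)\delta^{2/3} - C(r)\eps^4,
\end{align}
and since $\lambda+\lambda^{-1}-2>0$, a suitable small choice of $\eps,\delta$ yields a strictly positive $\Delta(\lambda,r)$.

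The main obstacle will be the multivariate analogue of Lemma \ref{lemma:LLN_u}, namely a law of large numbers for mixed moments of the form $\frac{1}{n}\sum_\ell\prod_{k=1}^{r-1}(\sqrt{n}\,u_{k,\ell})^{a_k}(\sqrt{n}\,u_{i,\ell})^{b}$ involving all spike eigenvectors together with a generic bulk eigenvector. In the rank-$1$ case this followed from the invariance of the law of $\bB$ under rotations fixing $\bone/\sqrt{n}$, which realized $(\bu_1^{\perp},\bu_i^{\perp})$ as a uniform orthonormal pair in the hyperplane orthogonal to $\bone$. In the rank-$(r-1)$ setting, $\bB$ is invariant under rotations fixing $\mathrm{span}(\bv_1,\dots,\bv_{r-1})$; applying Gram--Schmidt to i.i.d. Gaussians in the $(n-r+1)$-dimensional complement expresses the projected top and bulk eigenvectors as properly normalized Gaussian columns, after which the required moment estimates follow by essentially the same mechanism as in Lemma \ref{lemma:LLN_u}. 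Once these eigenvector concentration bounds are in place, the remainder of the proof replicates the rank-$1$ analysis nearly verbatim.
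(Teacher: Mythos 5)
Your overall strategy is the same as the paper's: generalize the explicit-witness construction of Appendix \ref{app:ProofGaussianB} by using the top $r-1$ spike eigenvectors for the projection onto the nontrivial low-rank piece, a block of $n\delta$ bulk eigenvectors for the bulk piece, and normalize the diagonal. However, your entry-wise capping introduces a real feasibility gap that the paper avoids by design.

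You define $\varphi_{k,i}=R(\eps\sqrt{n}\,u_{k,i})$ with the \emph{scalar} cap applied to each of the $r-1$ eigenvector entries separately, and set $D_{ii}^2\|\bU^{\sT}\bfe_i\|_2^2 = 1-\sum_{k}\varphi_{k,i}^2$. For this to define a real $\bD$ and hence a bona fide $\bX\in\PSD_1(n)$, you need $\sum_{k=1}^{r-1}\varphi_{k,i}^2\le 1$ for every $i$; but the scalar cap only gives $|\varphi_{k,i}|\le 1$, so the sum can reach $r-1>1$ as soon as $r\ge 3$. Your justification is the claim that $\sqrt{n}\,|u_{k,i}|=O_r(1)$ uniformly in $i$ with high probability, so the caps are "effectively inactive." That claim is incorrect. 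The top eigenvector decomposes as $\bu_k = \alpha_k\bv_k + \beta_k\bu_k^{\perp}$; the spike direction $\bv_k$ is indeed delocalized with $\sqrt{n}\,|v_{k,i}|=O(\sqrt{r})$, but the orthogonal component $\bu_k^{\perp}$ is (conditionally) a uniformly random unit vector in the complement, whose $\ell^\infty$ norm is of order $\sqrt{(\log n)/n}$, not $O(1/\sqrt{n})$. Hence $\sqrt{n}\,\|\bu_k\|_\infty\sim\sqrt{\log n}$, which is unbounded, and the entry-wise caps will be active on a non-negligible set of indices no matter how small you take $\eps$ (as $n\to\infty$). You have conflated $\ell^2$ alignment of $\bu_k$ with $\bv_k$ with $\ell^\infty$ alignment.

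The paper sidesteps this issue entirely by using the \emph{radial} (vector-valued) capping $R:\reals^{r-1}\to\reals^{r-1}$, $R(\bx)=\bx$ if $\|\bx\|_2\le 1$ and $R(\bx)=\bx/\|\bx\|_2$ otherwise, applied row-by-row to $\eps\sqrt{n}\,\bV$. This guarantees $\|\bPsi^{\sT}\bfe_i\|_2\le 1$ deterministically, so $D_{ii}$ is always real and $\bX\in\PSD_1(n)$ with no delocalization input required. The subsequent error analysis (the radial analogue of $\baR(x)^2\le x^6$) goes through with cosmetic changes. Once you adopt the radial cap, the rest of your outline -- the spike-term bound in the spirit of Lemma \ref{lemma:Bphi}, the normalization argument as in Lemma \ref{lemma:DtoF}, and the multivariate extension of Lemma \ref{lemma:LLN_u} via invariance of the law of $\bB$ under rotations fixing ${\rm span}(\bv_1,\dots,\bv_{r-1})$ -- lines up with the paper's outline, and your final budget $2+(r-1)\eps^2(\lambda+\lambda^{-1}-2)-C(r)\delta^{2/3}-C(r)\eps^4$ is plausible.
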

The proof of this result is very similar to the one of Theorem
\ref{thm:Gaussian}. We outline the main differences in Section
\ref{sec:ProofGaussianR}.

Armed with these theorems, we can now lower bound $\SDP(\bAc_G)$ for
$G\sim\sG_r(n,a/n,b/n)$. Namely, for $\lambda\ge 1+\eps$ we have, with
high probability,
\begin{align}
\frac{1}{n\sqrt{d}}\SDP(\bAc_G)&\ge \frac{1}{n}\SDP(\bB(\lambda,r)) -
                         \frac{1}{4}\Delta(1+\eps,r)\\
&\ge \frac{1}{n}\SDP(\bB(1+\eps,r)) -
                         \frac{1}{4}\Delta(1+\eps,r)\\
&\ge 2+\frac{3}{4}\Delta(1+\eps,r)\, .
\end{align}
We then conclude selecting $\delta_*(\eps) = \Delta(1+\eps)/2>0$,
as in the proof of Theorem \ref{thm:Gaussian}, see Eq.~(\ref{eq:ProbFalseNeg}). 

\subsection{Proof outline for Theorem \ref{thm:Gaussian_r}}
\label{sec:ProofGaussianR}

Throughout this section $\bB = \bB(\lambda,r)$ with  $\lambda\ge
1+\eps$ and $r\ge 2$ is defined as per Eq.~(\ref{eq:Bdefinition_r}).

As for the proof of Theorem \ref{thm:Gaussian}, the proof consists in 
constructing a suitable witness $\bX\in\PSD_1(n)$, and then lower
bounding the value $\<\bB,\bX\>$. We describe here the witness
construction since the lower bound on $\<\bB,\bX\>$ is analogous to
the one in the case $r=2$.
 
Denote by $(\bu_1,\xi_1)$, \dots,
$(\bu_n,\xi_n)$
denote the eigenpairs of $\bB$, namely
\begin{align}
\bB\bu_k = \xi_k\bu_k\, ,
\end{align}
where $\xi_1\ge \xi_2\ge \dots\ge \xi_n$. Our construction depends on
parameters $\eps,\delta>0$. Let $\bV\in\reals^{n\times (r-1)}$ be the
matrix whose $i$-th column is the eigenvector $\bu_i$(and hence
containing eigenvectors $\bu_1$, \dots, $\bu_{r-1}$), and
$\bU\in\reals^{n\times (n\delta)}$ be the matrix whose $i$-th
column is eigenvector $u_{r+i-1}$ (and hence
containing eigenvectors $\bu_r$, \dots, $\bu_{r+n\delta-1}$).

Define, with an abuse of notation $R:\reals^{r-1}\to \reals^{r-1}$ as
follows
\begin{align} 
R(\bx) \equiv\begin{cases}
\bx & \mbox{ if $\|bx\|_2\le 1$,}\\
\bx/\|\bx\|_2 & \mbox{ otherwise,}\\
\end{cases}
\end{align}
and define $\bPsi\in\reals^{n\times (r-1)}$ as 
$\bPsi\equiv R(\eps\sqrt{n}\, \bV)$ where $R(\,\cdot\,)$ is understood
to be applied row-by-row to   $\eps\sqrt{n}\, \bV\in\reals^{n\times
  (r-1)}$. Equivalently, for each $i\in [n]$, we have
\begin{align}
\bPsi^{\sT}\bfe_i = R(\eps\sqrt{n}\, \bV^{\sT}\bfe_i)\, .
\end{align}
We finally define a diagonal matrix $\bD\in\reals^{n\times n}$ with entries
\begin{align}
D_{ii} \equiv
\frac{\sqrt{1-\|\bPsi^{\sT}\bfe_i\|^2_2}}{\|\bU^{\sT}\bfe_i\|^2_2}
\end{align}
and construct the witness by setting
\begin{align}
\bX = \bPsi\bPsi^{\sT}+ \bD\bU\bU^{\sT}\bD\, .
\end{align}
We have $\bX\in\PSD_1(n)$ by construction. The proof that, with high probability,
$\<\bB,\bX\>/n\ge 2+\Delta(\lambda,r)$ follows the same steps as for
the case $r=2$, detailed in Appendix \ref{app:ProofGaussianB}.

\end{document}